\documentclass[a4paper,11pt]{article}

\usepackage{fullpage}
\usepackage{times}
\usepackage{soul}
\usepackage[hyphens]{url}
\usepackage[utf8]{inputenc}
\usepackage[small]{caption}
\usepackage{graphicx}
\usepackage[table]{xcolor}
\usepackage{amsmath}
\usepackage{booktabs}
\usepackage{algorithm}
\usepackage{dsfont}
\usepackage{enumerate}
\usepackage{tikz}
\usetikzlibrary{calc, intersections, graphs, graphs.standard, shapes, arrows, arrows.meta, positioning, decorations.pathreplacing, decorations.markings, decorations.pathmorphing, fit, matrix, patterns, shapes.misc, tikzmark}
\usepackage{tcolorbox}
\tcbset{size=small,top=1mm,bottom=1mm,middle=1mm}

\usepackage{amsthm}
\usepackage{amssymb}
\usepackage{algorithm}
\usepackage{algpseudocode}
\usepackage{bbm}
\usepackage{enumerate}
\usepackage{cleveref}
\usepackage{todonotes}

\newtheorem{theorem}{Theorem}[section]

\newtheorem{proposition}[theorem]{Proposition}
\newtheorem{lemma}[theorem]{Lemma}

\theoremstyle{definition}

\newtheorem{example}[theorem]{Example}
\newtheorem{remark}[theorem]{Remark}

\newcommand{\I}{\mathcal{I}}
\newcommand{\bx}{\mathbf{x}}

\newcommand{\tQ}{\mathrm{Query}}
\newcommand{\Null}{\textsc{Null}}
\newcommand{\AcceptAll}{\textsc{AcceptAll}}
\newcommand{\RejectAll}{\textsc{RejectAll}}
\newcommand{\LearnHyperplane}{\mathrm{LearnHyperplane}}
\newcommand{\Select}{\mathrm{Select}}

\usepackage{natbib}

\newcommand{\eps}{\varepsilon}
\DeclareMathOperator\supp{supp}

\algrenewcommand\algorithmicrequire{\textbf{Input:}}
\algrenewcommand\algorithmicensure{\textbf{Output:}}

\allowdisplaybreaks

\usepackage{authblk}

\title{\bf Learning Unanimously Acceptable Lotteries via Queries}

\author[1]{Davin Choo}
\author[2]{Paul W. Goldberg}
\author[2]{Nicholas Teh}

\affil[1]{Harvard University, USA}
\affil[2]{University of Oxford, UK}

\date{\vspace{-10mm}}
\begin{document}

\maketitle
\begin{abstract}
    Many high-stakes AI deployments proceed only if every stakeholder deems the system acceptable relative to their own minimum standard. With randomization over a finite menu of options, this becomes a feasibility question: does there exist a lottery over options that clears all stakeholders' acceptability bars? We study a query model where the algorithm proposes lotteries and receives only binary accept/reject feedback. We give deterministic and randomized algorithms that either find a unanimously acceptable lottery or certify infeasibility; adaptivity can avoid eliciting many stakeholders' constraints, and randomization further reduces the expected elicitation cost relative to full elicitation. We complement these upper bounds with worst-case lower bounds (in particular, linear dependence on the number of stakeholders and logarithmic dependence on precision are unavoidable). Finally, we develop learning-augmented algorithms that exploit natural forms of advice (e.g., likely binding stakeholders or a promising lottery), improving query complexity when predictions are accurate while preserving worst-case guarantees.

\end{abstract}

\section{Introduction}
In high-stakes AI deployments, before release, systems are checked against a collection of requirements: internal safety review, policy and compliance constraints, domain expert criteria, and often structured human evaluation \citep{gebru2021datasheets,canada_aia_tool,mitchell2019modelcards,raji2020aiaccountability,nist_ai_rmf_2023,uk_ai_assurance_2024}.
In many settings, these requirements act as \emph{gates} rather than soft preferences: failure on a single critical check is enough to block deployment \citep{anthropic_rsp_2025,eu_ai_act_2024,openai_preparedness_2025}. 
This kind of decision process is conservative by design: it is meant to ensure that no required standard is violated, even when different teams (or affected stakeholders) focus on different risks \citep{canada_aia_tool,nist_ai_rmf_2023}.

A second, often overlooked aspect is that a deployment choice determines a \emph{distribution} of outcomes rather than a single outcome \citep{SuttonBarto2018}. 
Even a fixed model run under a fixed policy produces variable behavior due to environmental uncertainty and stochastic execution \citep{ovadia_2019,dataset_shift_book}. 
Crucially, considering distributions (lotteries) rather than deterministic choices convexifies the feasible space. In many governance settings, no single monolithic model configuration may satisfy every stakeholder's constraints simultaneously. However, a randomized mixture (e.g., routing 90\% of traffic to a high-performance but high-risk model and 10\% to a highly robust fallback) may satisfy the aggregate risk thresholds of all parties. 
Modern deployments frequently include such explicit mixing mechanisms (e.g., routing, fallback modes) \citep{aws_mlops_deployment_guardrails}. 
From a governance perspective, the problem therefore becomes selecting a \textit{distribution} over options that clears all safety bars.

In such settings, it is natural that different stakeholders impose different minimum standards. 
This can often be understood in practice as \emph{operational risk thresholds} reflecting different risk appetites \citep{nist_ai_rmf_2023}.
A compliance group may insist on a hard constraint; a safety team may require a baseline level of robustness; an impacted population may care about guarantees that do not align with average-case performance. 
We formalize this by letting each stakeholder evaluate candidate deployments through their own scalar yardstick (which one can read as a utility, a safety score, or an inverted risk score), together with a personal acceptability threshold that encodes their minimum bar. 
Crucially, these standards are typically assessed through coarse feedback \citep{canada_aia_tool,canada_peer_review_guide}. 
Approval processes do not ask each stakeholder to provide a complete utility function over all possible outcomes. They ask whether a proposed deployment is \emph{acceptable}. It is also common that acceptability is judged relative to a reference point: the current system in production, a previously approved policy, or a conservative baseline. This view provides a simple interpretation of a stakeholder's ``minimum bar'': approval means the proposal meets or exceeds what would otherwise be deployed, according to that stakeholder’s criterion.

Our work studies the resulting feasibility problem: given a menu of candidate deployments and stakeholders who respond only with accept/reject judgments, can we find a deployment choice that everyone approves (i.e., a \emph{unanimously acceptable} choice), or certify that no such choice exists? 
A certificate of infeasibility is meaningful here: it says the current menu is inadequate to satisfy all thresholds simultaneously, so the correct response is to modify the design space (e.g., add mitigations or introduce safer operating modes) rather than to search longer for a nonexistent compromise \citep{anthropic_rsp_2025,eu_ai_act_2024,openai_preparedness_2025,nist_ai_rmf_2023}.

Our focus is the \emph{information cost} of reaching such a conservative decision. 
We work in a query-based model in which the algorithm proposes candidate randomized deployment choices and observes only yes/no acceptability feedback from stakeholders. 
The goal is to minimize the number of such queries needed to either produce a universally approved choice or correctly conclude infeasibility. 

\subsection{Our Results}
We study the following query feasibility problem. There are $n$ agents and $m$ alternatives.
The algorithm proposes lotteries $\bx \in \Delta(S)$ and may query any agent $i$ with
a query $\tQ(i,\bx)\in\{\texttt{True},\texttt{False}\}$ indicating whether
$\langle \mathbf{u}_i, \bx\rangle \ge \tau_i$, where $(\mathbf{u}_i,\tau_i)$ are unknown to the algorithm.
The goal is to either (i) output a unanimously acceptable lottery $\bx\in\bigcap_{i=1}^n \mathcal{A}_i$,
or (ii) correctly certify infeasibility.

\medskip
\noindent
\textbf{Single agent elicitation.}
We give an exact procedure $\LearnHyperplane(i)$ that recovers an agent's acceptability halfspace on the simplex using only accept/reject queries using $\mathcal{O}(m\log(1/\eps))$ membership queries and returns either
\textsc{AcceptAll}, \textsc{RejectAll}, or a linear inequality equivalent to
$\langle \mathbf{u}_i,\bx\rangle \ge \tau_i$ for all $\bx\in\Delta(S)$ (Lemmas~\ref{lem:learnhyperplane_queries} and \ref{lem:learnhyperplane_correctness}).

\medskip
\noindent
\textbf{Deterministic multi-agent algorithm.}
Building on $\LearnHyperplane$, we give an adaptive deterministic algorithm
that learns an agent's constraint only when it is violated by the current candidate lottery.
It always outputs a unanimously acceptable lottery when one exists, and otherwise outputs
$\Null$, using $\mathcal{O}\left(n^2 + nm\log(1/\eps)\right)$ queries in the worst case (Theorem~\ref{thm:ub_deterministic}).

\medskip
\noindent
\textbf{Randomization reduces hyperplane learning.}
We adapt a Clarkson-style sampling/reweighting approach for low-dimensional linear programming
to our query model. The resulting randomized algorithm is always correct and uses $\mathbb{E}[\mathcal{O}(nm\log n + \min\{n,\,m^3\log n\}\cdot m\log(1/\eps))]$ queries (Theorem~\ref{thm:alg_randomized}).
In particular, when $n \gg m^3\log n$, it learns only a vanishing fraction of agents' constraints
in expectation, while preserving correctness.

\medskip

\noindent
\textbf{Lower bounds.}
We show that any (deterministic or randomized) algorithm that is always correct must make $\Omega((n-\min\{n,m\}) + (\min\{n,m\}-1)\log(1/\eps))$ queries in the worst case, even for binary utilities (Theorem~\ref{thm:lowerbound_query}).
Additionally, even for a single agent ($n=1$), $\Omega(m)$ queries are necessary in the worst case
(Theorem~\ref{thm:lb_oneagent}).
These bounds explain the unavoidable linear dependence on $n$ and logarithmic dependence on
$1/\eps$.

\medskip
\noindent
\textbf{Learning-augmented elicitation.}
Finally, we give learning-augmented variants that exploit natural forms of advice (e.g., likely binding stakeholders or a promising lottery), improving query complexity when predictions are accurate while preserving worst-case guarantees.
With a predicted permutation $\hat{\sigma}$ over agents, the deterministic algorithm
uses $\mathcal{O}((n + m\log(1/\varepsilon))\, R(\hat{\sigma}))$ queries, where
$R(\hat{\sigma})$ is the number of \emph{record} agents encountered under that order (\Cref{thm:perm-det}).
For the randomized algorithm, we bias the initial sampling weights toward earlier agents
in $\hat{\sigma}$ and obtain an expected query complexity of
$\mathcal{O}(nm\mu + \min\{n, m^3\mu\}\cdot m\log(1/\varepsilon))$, where
$\mu = \log E(\hat{\sigma}) + \log\log n$ and $E(\hat{\sigma})$ is the smallest prefix
length that contains a valid witness set (\Cref{thm:perm-rand}).
Additionally, given a predicted lottery $\hat{\bx}$, we warm-start the single agent
turning point searches inside $\LearnHyperplane$, reducing the cost of learning an
agent's halfspace to $\mathcal{O}(m + m\log\!\left(1 + \delta_{\max,i}(\hat{\bx})/\varepsilon^2\right))$
queries (\Cref{thm:lottery-pred}), while preserving worst-case bounds when the prediction is poor.
We provide accompanying lower bounds on the minimum number of queries required even in the presence of each of these predictions.

\subsection{Related Work}
\textbf{Preference elicitation and query learning.}
Our setup is closely related to preference elicitation, which studies how to ask as few questions as possible to make a decision without fully learning preferences (e.g., in auctions or voting). 
A classic connection between elicitation and computational learning is that elicitation can be viewed as querying multiple ``concepts'' with the goal of producing an \emph{optimal example} rather than reconstructing each concept \citep{BlumEtAl2004PreferenceElicitation}. Subsequent work develops a variety of query types and policies for eliciting complex preferences and utilities \citep[e.g.,][]{Boutilier2002POMDPElicitation, ConenSandholm2001Elicitation}. 
Our model aligns in this spirit: our goal is not to recover $(\mathbf{u}_i,\tau_i)$, but to find any lottery satisfying all agents' constraints. 
The technical distinction is that we study a structured continuous decision space (the simplex over $m$ alternatives) with \emph{binary accept/reject queries} about lotteries, and we provide worst-case query bounds with explicit dependence on the required numerical precision $\eps$.

\medskip
\noindent
\textbf{Query learning of halfspaces and geometric concepts.}
At a geometric level, each agent induces a halfspace over the $(m-1)$-simplex via an acceptability constraint $\langle \mathbf{u}_i,\bx\rangle \ge \tau_i$. Our single-agent subroutine that recovers an agent’s acceptability boundary is related to classical query learning of geometric concepts, including exact learning on discretized domains and membership-query learning of polytopes where the \emph{precision} of query points is treated as a resource \citep{Angluin1988Queries,BshoutyGGM98,BshoutyGMST98,GoldbergK00}. The key difference is the learning objective and the multi-agent structure: most prior work targets \emph{identifying} an unknown concept for a single oracle, whereas we have $n$ distinct (unknown) halfspaces and only need to find \emph{one point} in their intersection (or prove emptiness). This distinction is what enables algorithms that learn only a small subset of ``constraining'' agents' hyperplanes in the worst case, rather than learning all $n$ constraints, at least in expectation.

\medskip
\noindent
\textbf{Learning-augmented algorithms.}
Learning-augmented algorithms (also called algorithms with predictions/advice) aim to exploit imperfect predictions without sacrificing worst-case guarantees, typically establishing \emph{consistency} (good predictions help), \emph{robustness} (bad predictions do not hurt asymptotically), and often \emph{smoothness} (graceful degradation with prediction error) \citep[e.g.,][]{LykourisVassilvitskii2018Caching,kumar2018mlpredictions, Rohatgi2020Caching}. We adapt this framework to elicitation for unanimous acceptability by considering predictions natural to governance pipelines---an ordering of ``most constraining'' agents, or a good candidate lottery---and analyzing how these predictions reduce the number of expensive hyperplane learning steps. 
Compared to the dominant learning-augmented literature (which is largely online/metric/optimization focused), our setting highlights a different prediction target: advice about which \emph{constraints} will be active (or which candidate point is near the feasible region) in a feasibility with queries problem.

\section{Model and Preliminaries} \label{sec:prelims}
For any positive integer $z$, denote $[z] := \{1,\dots,z\}$.
Let $N = [n]$ be the set of $n$ \emph{agents},\footnote{These are the \emph{stakeholders} as mentioned in the introduction.} and $S = \{s_1,\dots, s_m\}$ be the set of $m$ \emph{alternatives}.
A \emph{lottery} over $S$ is a probability vector $\bx=(x_1,\dots,x_m) \in \Delta(S)$, where
$\Delta(S) := \{ \bx \in\mathbb{R}_{\geq 0}^m : \sum_{j=1}^m x_j = 1 \}$ is the $(m-1)$-simplex.
Let $\mathbf{e}_j$ denote the \emph{pure} lottery placing unit mass on $s_j$.

Each agent $i \in N$ has an (unknown) \emph{utility} (or \emph{value}) $u_i(s_j) \in [0,1]$ for each alternative $s_j \in S$,\footnote{We can also denote an agent's \emph{utility vector} as $\mathbf{u}_i$, where the $j$-th entry is $u_i(s_j)$.} and an (unknown) threshold $\tau_i \in (0,1]$.\footnote{We assume without loss of generality that no agent has threshold $0$, otherwise they can trivially be ignored.}
The \emph{expected utility} of lottery $\bx$ for agent $i$ is
$U_i(\bx) = \sum_{j =1}^m x_j \, u_i(s_j)$.
Agent $i$ \emph{accepts} $\bx$ if and only if $U_i(\bx) \geq \tau_i$.
Define the acceptable region/set $\mathcal{A}_i := \{\bx \in \Delta(S) : \langle \mathbf{u}_i, \bx \rangle \geq \tau_i \}$, i.e., the intersection of the simplex with a halfspace.
A lottery is \emph{unanimously acceptable} if and only if $\bx \in \bigcap_{i\in N} \mathcal{A}_i$.

\medskip
\noindent
\textbf{Expected utility and acceptability thresholds.}
Given we are analyzing the output as \emph{lotteries} over alternatives, we need a principled way to evaluate randomization. The standard modeling choice is the von Neumann-Morgenstern expected utility framework: under mild rationality axioms on preferences over lotteries, an agent’s lottery preferences admit a cardinal utility representation that is \emph{linear in probabilities}, i.e., the agent behaves as if maximizing expected utility \citep{vonNeumannMorgenstern1944}. 
This assumption is very common in algorithmic decision-making. 
For instance, Markov decision processes and reinforcement learning typically define the objective as maximizing \emph{expected} cumulative reward/return under a (possibly stochastic) policy \citep{Puterman1994,SuttonBarto2018}. 
In game theory, agents evaluate mixed strategies via expected utilities, and equilibrium notions (best responses, Nash equilibria) are defined in terms of expected payoffs \citep{Nash1950,OsborneRubinstein1994}.

We further adopt an agent-specific acceptability threshold $\tau_i$ and declare a lottery $\bx$ acceptable to agent $i$ if and only if $U_i(\bx)\ge \tau_i$. Thresholds capture a common ``minimum standard" view of decision-making \citep{Simon1955}, and they match the way safety, compliance, and stakeholder constraints are modeled in modern ML/AI: one often maximizes expected performance subject to inequality constraints (e.g., constrained MDPs/safe or constrained RL), rather than folding all desiderata into a single scalar reward \citep{Altman1999,GarciaFernandez2015}. 
In multi-agent mechanism design, analogous inequalities appear as \emph{individual rationality} (or participation) constraints requiring each agent’s expected utility to exceed an outside option \citep{Myerson1981,NisanEtAl2007}. 
In our setting, $\tau_i$ plays exactly this role of a personalized minimum bar, so unanimity corresponds to satisfying all agents' constraints without requiring interpersonal utility comparisons.

\begin{example}
    Let $m=2$. Suppose agent~$1$ has $\mathbf{u}_1 = (1,0), \tau_1 = 0.6$, so $\mathcal{A}_i = \{ \bx \in \Delta(S) : x_1 \geq 0.6 \}$. 
    Suppose agent~$2$ has $\mathbf{u}_2 = (0,1), \tau_2 = 0.6$, so $\mathcal{A}_2 = \{\bx \in \Delta(S) : x_2 \geq 0.6\} = \{\bx \in \Delta(S) : x_1 \leq 0.4$.
    Then $\mathcal{A}_1 \cap \mathcal{A}_2 = \varnothing$, and no unanimously acceptable lottery exists.
\end{example}

\medskip
\noindent
\textbf{Query model.} The algorithm does not observe $(\mathbf{u}_i, \tau_i)$.
It may issue \emph{membership queries} of the form $\tQ(i,\bx)$ where $i \in N$ and $\bx \in \Delta(S)$.
The oracle returns \texttt{True} if and only if $\bx \in \mathcal{A}_i$, and \texttt{False} otherwise.
The \emph{query complexity} is the number of oracle calls.

\begin{tcolorbox}[title=\textbf{Problem (Unanimous Acceptability via Queries)}]
    Given $n, m$, and query access to agents, output either: 
    \begin{enumerate}[(i)]
        \item a lottery $\bx \in \Delta(S)$ such that $U_i(\bx) \geq \tau_i$ for all $i \in N$, or
        \item $\Null$, certifying that $\bigcap_{i \in N} \mathcal{A}_i = \varnothing$.
    \end{enumerate}
\end{tcolorbox}
Deterministic algorithms must be correct on all instances. 
Our randomized algorithms are always correct; randomness only affects the number of queries. We analyze the expected number of queries they use.

\begin{remark}
    A natural relaxation to the unanimity objective is to maximize the \emph{number of accepting agents}.
    However, we can show that even with an unlimited query budget (i.e., to obtain explicit access to $(\mathbf{u}_i, \tau_i)$), this optimization problem is NP-hard (see Appendix~\ref{app:max_accept_np-hard} for a discussion).
\end{remark}

We now illustrate the setting with a short, concrete example.

\begin{example}
    Fix $m=3$ alternatives $S=\{s_1,s_2,s_3\}$ and $\eps=0.1$.
    Consider three agents with utilities and thresholds:
    \begin{equation*}
    \begin{array}{c|ccc|c}
    i & u_i(s_1) & u_i(s_2) & u_i(s_3) & \tau_i \\ \hline
    1 & 1.0 & 0.6 & 0.2 & 0.6 \\
    2 & 0.2 & 1.0 & 0.5 & 0.7 \\
    3 & 0.2 & 0.2 & 1.0 & 0.3
    \end{array}
    \end{equation*}
    For a lottery $\bx=(x_1,x_2,x_3)\in\Delta(S)$, the acceptability sets
    $\mathcal{A}_i = \{ \bx \in \Delta(S): \langle \mathbf{u}_i, \bx \rangle \geq \tau_i\}$ are following halfspaces intersected with the simplex:
    $\mathcal{A}_1$ ($x_1+0.6x_2+0.2x_3 \ge 0.6$); $\mathcal{A}_2$ ($0.2x_1+x_2+0.5x_3 \ge 0.7$); 
    $\mathcal{A}_3$ ($0.2x_1+0.2x_2+x_3 \ge 0.3$).

    Observe that no pure lottery is unanimously acceptable (e.g., $\mathbf{e}_1$ is rejected by agent~$2$, $\mathbf{e}_2$ by agent~$3$, and $\mathbf{e}_3$ by agent~$1$), but the lottery $\bx^*=(0.25,0.60,0.15)$ is unanimously acceptable since
    $U_1(\bx^*)=0.64\ge 0.6$, $U_2(\bx^*)=0.725\ge 0.7$, and $U_3(\bx^*)=0.32\ge 0.3$.
    Figure~\ref{fig:halfspace-geometry-m3} visualizes the three halfspaces and their intersection.

    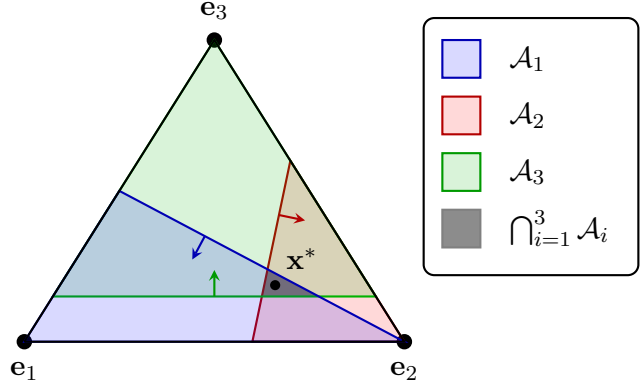
\begin{figure}[t]
    \centering
    \begin{tikzpicture}
    \def\arrowlen{10pt}
    
    % Corners
    \node[draw, fill=black, thick, circle, inner sep=0pt, minimum size=5pt] at (0,0) (e3) {};
    \node[draw, fill=black, thick, circle, inner sep=0pt, minimum size=5pt] at (-2.5,-4) (e1) {};
    \node[draw, fill=black, thick, circle, inner sep=0pt, minimum size=5pt] at (2.5,-4) (e2) {};
    \node[above=1pt of e3] {$\mathbf{e}_3$};
    \node[below=1pt of e2] {$\mathbf{e}_2$};
    \node[below=1pt of e1] {$\mathbf{e}_1$};
    
    % Coordinates
    \coordinate (v13-blue) at ($(e1)!0.5!(e3)$);
    \coordinate (v23-red) at ($(e2)!0.6!(e3)$);
    \coordinate (v12-red) at ($(e2)!0.4!(e1)$);
    \coordinate (v13-green) at ($(e3)!0.85!(e1)$);
    \coordinate (v23-green) at ($(e3)!0.85!(e2)$);
    
    % Create paths for the edges then find intersection points to extract coordinates for the intersection area
    \path[name path=redline] (v12-red) -- (v23-red);
    \path[name path=greenline] (v13-green) -- (v23-green);
    \path[name path=blueline] (e2.center) -- (v13-blue);
    \path[name intersections={of=redline and greenline, by=redgreen}];
    \path[name intersections={of=redline and blueline, by=redblue}];
    \path[name intersections={of=greenline and blueline, by=greenblue}];
    \fill[gray, fill opacity=0.9] (redgreen) -- (redblue) -- (greenblue);
    
    % Draw and fill translucent triangles
    \filldraw[fill=red, fill opacity=0.15, draw=red!70!black, thick] (e2.center) -- (v12-red) -- (v23-red) -- cycle;
    \filldraw[fill=green!70!black, fill opacity=0.15, draw=green!70!black, thick] (e3.center) -- (v13-green) -- (v23-green) -- cycle;
    \filldraw[fill=blue, fill opacity=0.15, draw=blue!70!black, thick] (e1.center) -- (e2.center) -- (v13-blue) -- cycle;
    
    % Draw perpendicular arrows
    \draw[-{stealth[width=1.5mm]}, thick, red!70!black]($(v23-red)!0.3!(v12-red)$) -- ($(v23-red)!0.3!(v12-red)!\arrowlen!90:(v12-red)$);
    \draw[-{stealth[width=1.5mm]}, thick, green!60!black]($(v13-green)!0.5!(v23-green)$) -- ($(v13-green)!0.5!(v23-green)!\arrowlen!90:(v23-green)$);
    \draw[-{stealth[width=1.5mm]}, thick, blue!70!black]($(e2)!0.7!(v13-blue)$) -- ($(e2)!0.7!(v13-blue)!\arrowlen!90:(v13-blue)$);
    
    % Draw x^*
    \node[draw, fill=black, thick, circle, inner sep=0pt, minimum size=3pt] at (0.8,-3.25) (x-star) {};
    \node[xshift=10pt, yshift=10pt] at (x-star) {$\mathbf{x}^*$};
    
    % Draw thick outermost black triangle boundary
    \draw[thick] (e1.center) -- (e2.center) -- (e3.center) -- (e1.center);
    
    % Draw legend
    \filldraw[fill=blue, fill opacity=0.15, draw=blue!70!black, thick] (3,0) rectangle (3.5,-0.5);
    \filldraw[fill=red, fill opacity=0.15, draw=red!70!black, thick] (3,-0.75) rectangle (3.5,-1.25);
    \filldraw[fill=green!70!black, fill opacity=0.15, draw=green!60!black, thick] (3,-1.5) rectangle (3.5,-2);
    \filldraw[gray, fill opacity=0.9, draw=gray, thick] (3,-2.25) rectangle (3.5,-2.75);
    \node[anchor=west] at (3.75,-0.25) (legend-A1) {$\mathcal{A}_1$};
    \node[anchor=west] at (3.75,-1)  (legend-A2) {$\mathcal{A}_2$};
    \node[anchor=west] at (3.75,-1.75)  (legend-A3) {$\mathcal{A}_3$};
    \node[anchor=west] at (3.75,-2.5)  (legend-intersection) {$\bigcap_{i=1}^3 \mathcal{A}_i$};
    \coordinate (legendtopleft) at (3,0);
    \node[draw, thick, rounded corners, inner sep=7pt, fit=(legendtopleft)(legend-A1)(legend-A2)(legend-A3)(legend-intersection)] {};
    \end{tikzpicture}
    \caption{Example 2.3 ($m=3$): halfspace acceptability regions $\mathcal{A}_1,\mathcal{A}_2,\mathcal{A}_3$ on the probability simplex. The vertices $\mathbf{e}_1,\mathbf{e}_2,\mathbf{e}_3$ are pure lotteries, arrows indicate the acceptable side of each constraint, and $\bx^*$ is a feasible lottery in $\bigcap_{i=1}^3 \mathcal{A}_i$ (gray).}
\label{fig:halfspace-geometry-m3}
\end{figure}
\end{example}

\medskip
\noindent
\textbf{Finite precision.}
Fix $\eps \in (0,1/2]$ with $1/\eps \in \mathbb{Z}_{>0}$.
We assume $\eps$-quantized parameters:
$u_i(s_j) \in \{0,\eps,2\eps,\dots,1\}$ and $\tau_i \in \{\eps,2\eps,\dots,1\}$.
This is necessary for finite-query exactness and can be viewed as $\Theta(\eps)$-robust unanimity (Appendix~\ref{app:necessity_discretization} and \ref{sec:finite_approximate_eps}).
For some counting-based lower bounds we additionally assume $m \le (1/\eps)^{1-\delta}$ for a fixed constant $\delta \in (0,1]$ (see Appendix~\ref{app:necessity_discretization} for more details).

\section{Algorithms for Learning Unanimously Acceptable Lotteries} \label{sec:ub}

We begin by investigating elicitation algorithms that can give us an upper bound on the worst-case number of queries needed to either find a unanimously acceptable lottery, or certify that none exists.

We first focus on how we can (efficiently) elicit the acceptable region for a single agent.
For a fixed agent $i$, acceptability is a single linear inequality: $U_i(\bx) = \langle \mathbf{u}_i, \bx \rangle \geq \tau_i$.
Thus, the acceptable region is exactly a halfspace intersected with the simplex $\Delta(S)$.
Our goal is to recover an equivalent (normalized) halfspace description using as few membership queries $\tQ(i,\bx)$ as possible.

\Cref{alg:learnhyperplane} ($\LearnHyperplane$) learns an explicit halfspace description of agent $i$'s acceptable set. It begins by querying the vertices $\mathbf{e}_1,\dots,\mathbf{e}_m$ to partition indices into $L^{\mathrm{acc}}$ and $L^{\mathrm{rej}}$. If $L^{\mathrm{rej}}=\varnothing$ (resp., $L^{\mathrm{acc}}=\varnothing$), then the agent accepts (resp., rejects) every lottery and we return
\textsc{AcceptAll} (resp., \textsc{RejectAll}).
Otherwise, fix any rejected index $r\in L^{\mathrm{rej}}$. 
For any edge between a rejected vertex $\mathbf{e}_k$ and an accepted vertex $\mathbf{e}_{k'}$, consider the edge lotteries $\bx_{k,k',\alpha}:=(1-\alpha)\mathbf{e}_k+\alpha \mathbf{e}_{k'}$ for $\alpha\in[0,1]$. Along such an edge,
$U_i(\bx_{k,k',\alpha})$ is affine and strictly increasing in $\alpha$, so the accept/reject behavior changes exactly once. We call $\alpha^*_{i;k,k'} \;:=\; \inf\{\alpha\in[0,1] : \tQ(i,\bx_{k,k',\alpha})=\texttt{True}\}\in(0,1]$
the \emph{turning point} on edge $(\mathbf{e}_k, \mathbf{e}_{k'})$; equivalently, it is the smallest $\alpha$ that is accepted, and (since the boundary
is the equality $\langle u_i,x\rangle=\tau_i$) it satisfies $U_i(\bx_{k,k',\alpha^*_{i;k,k'}})=\tau_i$.

\begin{algorithm}[t]
    \caption{$\LearnHyperplane(i)$}
    \label{alg:learnhyperplane}
    \begin{algorithmic}[1]
        \Require Agent $i \in N$
        \Ensure \AcceptAll, \RejectAll, or a vector $\mathbf{c}_i\in\mathbb{R}^m$ whose induced halfspace $\{\bx\in \Delta(S): \langle \mathbf{c}_i, \bx \rangle \geq 1\}$ coincides with agent $i$'s acceptable region on $\Delta(S)$.
        \State Query all pure lotteries $\mathbf{e}_1,\dots, \mathbf{e}_m$ and let $L^\mathrm{acc} := \{j : j \in [m], \tQ(i,\mathbf{e}_j) = \texttt{True} \}$, $L^\mathrm{rej} := [m] \setminus L^\mathrm{acc}$.
            \If{$L^{\mathrm{rej}} = \varnothing$} 
        \Return \AcceptAll{}
        \ElsIf{$L^{\mathrm{acc}} = \varnothing$} 
        \Return \RejectAll{}
        \EndIf
        \State Choose any $r \in L^{\mathrm{rej}}$.
        \State For distinct $k,k'\in[m]$ and $\alpha\in[0,1]$, define $\bx_{k,k',\alpha}:=\alpha \mathbf{e}_{k'}+(1-\alpha)\mathbf{e}_k$.
        \For{each $j \in L^{\mathrm{acc}}$}
            \State Find turning point $\alpha_{r,j}\gets\mathrm{ExactThreshold}(i,r,j)$ on edge $(\mathbf{e}_r,\mathbf{e}_j)$ (i.e., $\alpha_{r,j}=\alpha^*_{i;r,j}$).
        \EndFor 
        \If{$\alpha_{r,j} = 1$ for all $j \in L^\mathrm{acc}$} \label{line:alg:learnhyper:alpha=1}
            \State Construct $\mathbf{c}_i\in\mathbb{R}^m$ by setting $c_{i,j}=1$ for $j\in L^{\mathrm{acc}}$
            \Statex\hspace{\algorithmicindent}and $c_{i,k}=0$ for $k\in L^{\mathrm{rej}}$.
            \State \Return $c_i$
        \EndIf
        \State Choose any $a\in L^{\mathrm{acc}}$ such that $\alpha_{r,a}<1$. \label{line:alg:learnhyper:alpha<1}
        \For{each $k \in L^{\mathrm{rej}} \setminus \{r\}$}
            \State Find turning point $\alpha_{k,a}\gets\mathrm{ExactThreshold}(i,k,a)$ on edge $(\mathbf{e}_k,\mathbf{e}_a)$ (i.e., $\alpha_{k,a}=\alpha^*_{i;k,a}$)
        \EndFor
        \State Construct $\mathbf{c}_i = (c_{i,1},\dots,c_{i,m})$ as follows:
        \Statex\hspace{\algorithmicindent}Set $c_{i,r}=0$.
        \Statex\hspace{\algorithmicindent}For each $j\in L^{\mathrm{acc}}$, set $c_{i,j}=1/\alpha_{r,j}$.
        \Statex\hspace{\algorithmicindent}For each $k\in L^{\mathrm{rej}}\setminus\{r\}$, set
        $c_{i,k}=\dfrac{1-\alpha_{k,a}\,c_{i,a}}{1-\alpha_{k,a}}$.
        \State \Return $\mathbf{c}_i$
    \end{algorithmic}
\end{algorithm}

$\LearnHyperplane$ recovers $m-1$ turning points on selected simplex edges (via $\mathrm{ExactThreshold}$): first on each edge $(\mathbf{e}_r,\mathbf{e}_j)$ for $j\in L^{\mathrm{acc}}$, and then (if needed) on edges $(\mathbf{e}_k,\mathbf{e}_a)$ from the remaining rejected vertices to a fixed accepted $a$ with $\alpha^*_{i;r,a}<1$. 
In the $(m-1)$-dimensional affine hull of $\Delta(S)$, these $m-1$ edge intersections determine the boundary hyperplane up to scaling; the algorithm normalizes to produce a vector $\mathbf{c}_i$ such that for all $\bx\in\Delta(S)$, $\langle \mathbf{c}_i,\bx\rangle\ge 1$ if and only if agent $i$ accepts $\bx$. The subroutine $\mathrm{ExactThreshold}$ is a one-dimensional
search along an edge and uses $\mathcal{O}(\log(1/\varepsilon))$ queries per turning point (Appendix~\ref{app:exact_threshold}).
We also provide a geometric intuition of the algorithm in Appendix~\ref{app:learnhyperplane_geometricintuition}.

Formalizing this idea, we present $\LearnHyperplane$, which elicits the supporting hyperplane of a single agent's acceptable region as described above, and by using only $\mathcal{O}(m \log(1/\eps))$ queries.

We now establish two key properties of $\LearnHyperplane$.
The first concerns query complexity: we show that the algorithm identifies all necessary threshold points using only $\mathcal{O}(\log(1/\eps))$ queries.
Lemma~\ref{lem:learnhyperplane_queries} formalizes this bound.

\begin{lemma} \label{lem:learnhyperplane_queries}
    \Cref{alg:learnhyperplane} makes $\mathcal{O}(m\log(1/\eps))$ queries.
\end{lemma}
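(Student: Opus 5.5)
The plan is to count queries phase by phase in \Cref{alg:learnhyperplane}, treating $\mathrm{ExactThreshold}$ as a black box that costs $\mathcal{O}(\log(1/\eps))$ queries per turning point (Appendix~\ref{app:exact_threshold}). The total then splits into three parts.

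\emph{Vertex queries.} The first step queries each pure lottery $\mathbf{e}_1,\dots,\mathbf{e}_m$ once, which costs exactly $m$ queries. If $L^{\mathrm{rej}}=\varnothing$ or $L^{\mathrm{acc}}=\varnothing$, the algorithm stops there, having used $m=\mathcal{O}(m\log(1/\eps))$ queries, since $\eps\le 1/2$ gives $\log(1/\eps)\ge \log 2>0$.

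\emph{Edge searches.} Otherwise, the first loop makes one call to $\mathrm{ExactThreshold}$ for each $j\in L^{\mathrm{acc}}$. If the algorithm does not return at line~\ref{line:alg:learnhyper:alpha=1}, the second loop makes one call for each $k\in L^{\mathrm{rej}}\setminus\{r\}$. The number of calls is therefore at most $|L^{\mathrm{acc}}|+|L^{\mathrm{rej}}|-1=m-1$. Choosing $r$ and $a$ and building $\mathbf{c}_i$ use only the stored values $\alpha_{r,j}$ and $\alpha_{k,a}$, so they cost no queries. The total is at most $m+(m-1)\cdot\mathcal{O}(\log(1/\eps))=\mathcal{O}(m\log(1/\eps))$.

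\emph{Main obstacle.} All the real content sits in the per-edge bound for $\mathrm{ExactThreshold}$. If I had to justify it directly, I would argue as follows. On an edge from a rejected vertex $\mathbf{e}_k$ to an accepted vertex $\mathbf{e}_{k'}$, the utility is
\[
U_i(\bx_{k,k',\alpha})=u_{i,k}+\alpha\,(u_{i,k'}-u_{i,k}),
\]
which is strictly increasing in $\alpha$. The turning point is
\[
\alpha^*=\frac{\tau_i-u_{i,k}}{u_{i,k'}-u_{i,k}}.
\]
By $\eps$-quantization, this is a ratio $p/q$ of integers with $1\le p\le q\le 1/\eps$, so it belongs to a Farey-type set of size $\mathcal{O}(1/\eps^2)$. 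Two distinct elements of this set differ by at least $\eps^2$. A binary search on $\alpha$ down to an interval of width below $\eps^2$ therefore takes $\mathcal{O}(\log(1/\eps^2))=\mathcal{O}(\log(1/\eps))$ queries. The interval then contains a unique candidate fraction, which can be identified, for example via a Stern--Brocot or continued-fraction rounding. One or two extra queries then confirm whether that candidate is the smallest accepted $\alpha$.
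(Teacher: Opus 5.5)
Your proof is correct and matches the paper's argument: $m$ vertex queries, then at most $m-1$ calls to $\mathrm{ExactThreshold}$, each costing $\mathcal{O}(\log(1/\eps))$ bisection queries with query-free rational reconstruction. Your sketch of the per-edge bound (reduced denominators at most $1/\eps$, spacing at least $\eps^2$) is the same reasoning as the paper's appendix, and the one or two confirmation queries you add are unnecessary but harmless, since the bisection invariant already places $\alpha^*$ in the final bracket.
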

Our next step is to show that the resulting vector exactly encodes the boundary of the agent's acceptable region.

\begin{lemma} \label{lem:learnhyperplane_correctness}
    Fix an agent $i \in N$.
    \Cref{alg:learnhyperplane} returns one of the following: (i) $\AcceptAll$, in which case $U_i(\bx)\geq \tau_i$ for all $\bx\in \Delta(S)$; (ii) $\RejectAll$, in which case $U_i(\bx)< \tau_i$ for all $\bx\in \Delta(S)$; or (iii) a vector $\mathbf{c}_i \in \mathbb{R}^m$ such that for all $\bx\in \Delta(S)$, $\langle \mathbf{c}_i, \bx \rangle \geq 1$ if and only if $U_i(\bx)\geq \tau_i$. In case (iii), the induced halfspace $\{\bx\in\Delta(S): \langle \mathbf{c}_i, \bx \rangle \ge 1\}$ coincides with agent $i$'s acceptable set.
\end{lemma}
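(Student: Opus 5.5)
The plan is to show, case by case, that the returned object is exactly what the statement claims. The two trivial cases come first. If $L^{\mathrm{rej}}=\varnothing$, then $u_i(s_j)\ge\tau_i$ for every $j$, so $U_i(\bx)=\sum_j x_j u_i(s_j)\ge\tau_i$ for every $\bx\in\Delta(S)$ by convexity; this gives (i). If $L^{\mathrm{acc}}=\varnothing$, then $u_i(s_j)<\tau_i$ for all $j$, and the same averaging gives $U_i(\bx)<\tau_i$; this gives (ii).

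For case (iii), I will reduce the claim to the identity
\[
\langle \mathbf{c}_i,\bx\rangle-1=\lambda\bigl(U_i(\bx)-\tau_i\bigr)\quad\text{for all }\bx\in\Delta(S),
\]
for some constant $\lambda>0$. Both sides are affine on the simplex, so it suffices to check the identity at the vertices $\mathbf{e}_j$, where it reads $c_{i,j}-1=\lambda(u_j-\tau_i)$, writing $u_j:=u_i(s_j)$. The key fact I will use is that the turning point on an edge $(\mathbf{e}_k,\mathbf{e}_{k'})$ with $u_k<\tau_i\le u_{k'}$ satisfies $\alpha^*=(\tau_i-u_k)/(u_{k'}-u_k)$. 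This follows because $U_i$ is affine and strictly increasing in $\alpha$ along such an edge, and $\mathrm{ExactThreshold}$ returns $\alpha^*$ exactly.

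\textbf{Subcase $\alpha_{r,j}<1$ for some $j$ (the main subcase).}
1. Take $\lambda:=1/(\tau_i-u_r)>0$. At $r$, $c_{i,r}-1=-1=\lambda(u_r-\tau_i)$, so the identity holds.
2. For $j\in L^{\mathrm{acc}}$, $1/\alpha_{r,j}-1=(u_j-\tau_i)/(\tau_i-u_r)$, which is again the identity.
3. For $k\in L^{\mathrm{rej}}\setminus\{r\}$, use the turning point on $(\mathbf{e}_k,\mathbf{e}_a)$: the edge point $(1-\alpha_{k,a})\mathbf{e}_k+\alpha_{k,a}\mathbf{e}_a$ lies on the hyperplane $U_i=\tau_i$. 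Requiring $\langle\mathbf{c}_i,\cdot\rangle=1$ there, with $c_{i,a}$ already correct, forces $c_{i,k}=(1-\alpha_{k,a}c_{i,a})/(1-\alpha_{k,a})$. Two affine functions along that edge agree at $\mathbf{e}_a$ and at an interior point (since $\alpha_{k,a}<1$, as $\mathbf{e}_k$ is rejected), so they also agree at $\mathbf{e}_k$.

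The choice of $a$ with $\alpha_{r,a}<1$ is needed only to ensure $u_a>\tau_i$, which keeps things nondegenerate. In fact step 3 only uses $\alpha_{k,a}\in(0,1)$, so this should go through regardless.

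\textbf{Subcase $\alpha_{r,j}=1$ for all $j\in L^{\mathrm{acc}}$ (the degenerate subcase).} Here $u_j=\tau_i$ for every accepted $j$. Then for $\bx\in\Delta(S)$,
\[
U_i(\bx)-\tau_i=\sum_{k\in L^{\mathrm{rej}}}x_k(u_k-\tau_i),
\]
which is $\ge0$ if and only if $x_k=0$ for all rejected $k$, since each term is strictly negative when $x_k>0$. This holds if and only if $\sum_{j\in L^{\mathrm{acc}}}x_j=1$, i.e., $\langle\mathbf{c}_i,\bx\rangle\ge1$. This is exactly the 0/1 vector the algorithm returns.

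The step I expect to be the main obstacle is the exactness of the turning points, which everything above relies on. I will take it as given from the $\mathrm{ExactThreshold}$ guarantee in Appendix~\ref{app:exact_threshold}: on an $\eps$-quantized instance, $\alpha^*$ is a rational with bounded denominator, so it can be recovered exactly. Beyond that, I need to check that no division by zero occurs, namely $\alpha_{r,j}>0$ (since $\tau_i>u_r$) and $\alpha_{k,a}<1$ (since $u_a\ge\tau_i>u_k$ gives $\alpha_{k,a}\le1$, with equality only if $u_a=\tau_i$). This last point is where I would use $\alpha_{r,a}<1$, which gives $u_a>\tau_i$ and hence $\alpha_{k,a}<1$.
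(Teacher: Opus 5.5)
Your proposal is correct and follows essentially the paper's route. You use the same averaging argument for \textsc{AcceptAll}/\textsc{RejectAll} and the same split on whether all $\alpha_{r,j}=1$. In the main subcase, your identity $\langle \mathbf{c}_i,\bx\rangle-1=\lambda\bigl(U_i(\bx)-\tau_i\bigr)$ with $\lambda=1/(\tau_i-u_i(s_r))$ is the paper's normalization $c_{i,j}=(u_i(s_j)-u_i(s_r))/(\tau_i-u_i(s_r))$. The only difference is that for rejected $k$ you verify it by a two-point affine-agreement argument on the edge $(\mathbf{e}_k,\mathbf{e}_a)$, rather than the paper's direct substitution.

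One small correction: your aside that step 3 ``should go through regardless'' of $\alpha_{r,a}<1$ is wrong. If $u_i(s_a)=\tau_i$ then $\alpha_{k,a}=1$, which breaks both the division and the two-point argument. Likewise, $\alpha_{k,a}<1$ comes from $u_i(s_a)>\tau_i$, not from $\mathbf{e}_k$ being rejected. Your final paragraph already states this correctly.
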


These two lemmas show that an agent's acceptance region can be elicited and represented as a single linear constraint.
This reduces the multi-agent elicitation problem to determining whether the intersection of the learned halfspaces is nonempty.
Intersecting the learned halfspaces across agents and solving a feasibility linear program  (LP) allows us to find a unanimously acceptable lottery or certify that none exists (without requiring further queries).

A simple baseline is to run $\LearnHyperplane(i)$ for \emph{every} agent $i\in N$ and then solve the resulting feasibility LP.
This full elicitation strategy makes $\mathcal{O}(nm\log(1/\eps))$ queries.
However, aside from the trivial early termination case where some agent rejects all lotteries, it always learns all $n$ hyperplanes, so its \emph{best-case} query complexity on feasible instances is
still $\Theta(nm\log(1/\eps))$.
In many instances this is unnecessarily pessimistic. Intuitively, feasibility (and the particular lottery we end up returning) might be ``certified'' by only a small subset of constraints, so learning the remaining agents' hyperplanes provides no additional benefit.
This motivates an adaptive deterministic algorithm (\Cref{alg:cardinal_deterministic}) that pays for hyperplane elicitation only when it is forced by an observed violation.

Fix a deterministic tie-breaking rule: define $\Select(C)$ to return the lexicographically maximum feasible lottery $\bx$ among all $\bx \in \Delta(S)$ satisfying $\langle \mathbf{c} , \bx \rangle \geq 1$ for all $\mathbf{c} \in C$, and return $\Null$ if infeasible.\footnote{Equivalently, $\Select$ can be implemented by solving a sequence of LPs that maximize $x_1$, then $x_2$, etc., over the feasible region. This means it can be implemented in polymomial time.
Our focus is the number of membership queries, so we treat this offline computation cost as secondary.}

Intuitively, \Cref{alg:cardinal_deterministic} maintains a set $C$ of learned acceptability halfspaces and proposes a candidate $\bx \leftarrow \Select(C)$. 
It then queries the remaining (unlearned) agents. If every unlearned agent accepts $\bx$, then $\bx$ is unanimously acceptable (all previously learned agents accept $\bx$ by construction). 
Otherwise, upon encountering the first unlearned agent $i$ that rejects $\bx$, the algorithm invokes $\LearnHyperplane(i)$ to elicit $i$'s acceptability halfspace, adds the resulting constraint to $C$, and immediately restarts with a new candidate lottery.
The algorithm reports infeasibility either when $\Select(C)=\Null$ or when $\LearnHyperplane(i)$ returns $\RejectAll$.
Our result is as follows.

\begin{algorithm}[ht]
    \caption{Deterministic algorithm that returns a unanimously acceptable lottery if one exists, or \Null{} otherwise}
    \label{alg:cardinal_deterministic}
    \begin{algorithmic}[1]
        \Require Set of agents $N = \{1,\dots,n\}$ and set of alternatives $S = \{s_1,\dots,s_m\}$
        \Ensure Lottery $\bx = (x_1,\dots, x_m) \in \Delta(S)$ such that $U_i(\bx) \geq \tau_i$ for all $i \in N$, or \Null
        \State $C \leftarrow \varnothing$ \Comment{learned hyperplane constraints}
        \State $N' \leftarrow N$ \Comment{agents whose hyperplanes not learned}
        \While{$\texttt{True}$} \label{line:alg_det_while_start}
            \State $\bx \leftarrow \Select(C)$ \label{line:alg_det_bx_2}
            \If{$\bx = \Null$} \Return \Null \label{line:alg_det_null_1}
            \EndIf
            \State $j \leftarrow \texttt{None}$ \Comment{keep track of the first violator}
            \For{$i = 1,\dots, n$} \label{line:alg_det_for_start}
            \If{$i \notin N'$} \textbf{continue}
            \ElsIf{$\tQ(i,\bx) = \texttt{False}$} 
                \State $j \leftarrow i$; \, \texttt{break} \Comment{violator}
            \EndIf
            \EndFor \label{line:alg_det_for_end}
            \If{$j = \texttt{None}$} \Return $\bx$ \label{line:alg_det_returnx}
            \EndIf
            \State $\mathbf{c} \leftarrow \LearnHyperplane(j)$; \, $N' \leftarrow N' \setminus \{j\}$ \label{line:alg_det_learnhyperplane}
            \If{$\mathbf{c} = \RejectAll{}$}
                \Return \Null \label{line:alg_det_null_2}
            \Else{}
                $C \leftarrow C \cup \{\mathbf{c}\}$
            \EndIf
        \EndWhile \label{line:alg_det_while_end}
        \State \Return $\bx$
    \end{algorithmic}
\end{algorithm}

\begin{theorem} \label{thm:ub_deterministic}
    \Cref{alg:cardinal_deterministic} returns a unanimously acceptable lottery when one exists, and outputs \Null{} otherwise, using $\mathcal{O} \left(n^2 + nm \log(1/\eps)\right)$ queries.
\end{theorem}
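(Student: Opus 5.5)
The plan is to split the argument into correctness and query complexity. Both rest on a single invariant of \Cref{alg:cardinal_deterministic}. At the start of every iteration of the while loop, each $\mathbf{c}\in C$ is the output of $\LearnHyperplane(j)$ for some learned agent $j\notin N'$. By \Cref{lem:learnhyperplane_correctness} (case (iii)), that constraint satisfies $\{\bx\in\Delta(S):\langle\mathbf{c},\bx\rangle\ge 1\}=\mathcal{A}_j$. An $\AcceptAll$ output is simply never added to $C$, and it imposes no restriction anyway. Consequently the feasible region of $\Select(C)$ is exactly $\bigcap_{j\in N\setminus N'}\mathcal{A}_j$, which contains $\bigcap_{i\in N}\mathcal{A}_i$.

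\textbf{Correctness.} We check each return statement.
\begin{itemize}
\item If $\Select(C)=\Null$ (line~\ref{line:alg_det_null_1}), then by the invariant $\bigcap_{j\notin N'}\mathcal{A}_j=\varnothing$, so the full intersection is also empty.
\item If $\LearnHyperplane(j)=\RejectAll$, then $\mathcal{A}_j=\varnothing$ by \Cref{lem:learnhyperplane_correctness}(ii), so again the instance is infeasible.
\item If the algorithm returns $\bx$ at line~\ref{line:alg_det_returnx}, then every agent in $N'$ answered \texttt{True} on $\bx$. Every agent outside $N'$ accepts $\bx$ because $\bx$ satisfies all constraints in $C$, and by the invariant these coincide with the agents' acceptable sets. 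Hence $\bx$ is unanimously acceptable.
\end{itemize}
Termination follows because each non-returning iteration removes one agent from $N'$. There are therefore at most $n$ such iterations, and in the next iteration either $N'=\varnothing$ (so the for-loop finds no violator and the algorithm returns) or it returns earlier. Since all outputs are sound, the algorithm returns a lottery exactly when one exists and \Null{} otherwise. Note that the final ``\Return $\bx$'' after the while loop is unreachable.

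\textbf{Query count.} The number of while-loop iterations is at most $n+1$, as argued above. In each iteration the for-loop issues at most $|N'|\le n$ membership queries, giving $\mathcal{O}(n^2)$ in total; a slightly sharper count is $\sum_t (n-t+1)=\mathcal{O}(n^2)$. Each agent is removed from $N'$ the moment $\LearnHyperplane$ is called on it, so $\LearnHyperplane$ runs at most once per agent. By \Cref{lem:learnhyperplane_queries}, each run costs $\mathcal{O}(m\log(1/\eps))$ queries, contributing $\mathcal{O}(nm\log(1/\eps))$ overall. Adding the two parts gives the claimed $\mathcal{O}(n^2+nm\log(1/\eps))$. $\Select$ is offline LP computation and issues no queries.

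The only delicate point is the invariant, specifically that learned constraints are \emph{exact} on $\Delta(S)$ rather than mere outer approximations. This exactness is what makes both $\Null$ outputs sound. It is supplied directly by \Cref{lem:learnhyperplane_correctness}, so once that lemma is invoked I expect no real obstacle.
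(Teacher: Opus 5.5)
Your proposal is correct and follows essentially the same route as the paper: the invariant that each learned constraint coincides exactly with the agent's acceptable set on $\Delta(S)$ (via Lemma~\ref{lem:learnhyperplane_correctness}), a soundness check of each return statement, termination within $n+1$ iterations, and the $n+(n-1)+\dots+1$ verification count plus at most $n$ calls to $\LearnHyperplane$. One small imprecision is your claim that \AcceptAll{} is ``never added to $C$''; the pseudocode would in fact add it, so the correct justification (which the paper gives) is that $\LearnHyperplane(j)$ is only invoked on an agent $j$ that just rejected $\bx$, and by linearity such an agent cannot accept every pure lottery, so \AcceptAll{} never arises.
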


Note that the extra $n^2$ term arises solely from verification queries in the worst case: each time a new hyperplane is learned, the new candidate $\bx\leftarrow \Select(C)$ must be checked against the remaining unlearned agents.
Note that if the number of agents on which \Cref{alg:cardinal_deterministic} calls $\LearnHyperplane$ on is small, then we can avoid learning most agents' halfspaces and can be substantially cheaper than full elicitation. 

A natural follow-up question is whether one can do better than eliciting the hyperplane of \emph{every} agent.
In many governance deployments, $m$ (the menu size) is modest while $n$ (stakeholders) can be very large. 
In such settings, algorithms that avoid learning all $n$ halfspaces are attractive, motivating the randomized approach below.
Specifically, show that randomization (via \Cref{alg:cardinal_randomized}, details in Appendix~\ref{app:sec:alg_rand}) gives an improvement on the \emph{expected} number of hyperplane learning we have to do. As with any randomized guarantee, the worst case over the internal randomness can still be as large as $n$; our improvement is in expectation.

\begin{theorem}\label{thm:alg_randomized}
    There exists a randomized algorithm that returns a unanimously acceptable lottery when one exists, and outputs $\Null$ otherwise, using $\mathcal{O}\left(n m\log n + \min\{n,\, m^3\log n\} \cdot m \log(1/\eps)\right)$ queries in expectation.
\end{theorem}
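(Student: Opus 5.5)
We adapt Clarkson's iterative reweighting algorithm for LP-type problems of combinatorial dimension $\delta := m$ (the intersection of halfspaces in the $(m-1)$-dimensional affine hull of $\Delta(S)$, with the deterministic lexicographic $\Select$ rule). Each agent $i$ carries a weight $w_i$, initially $1$, and we memoize any learned constraint $\mathbf{c}_i$ so that $\LearnHyperplane(i)$ is called at most once per agent. This gives the factor $\min\{n,\cdot\}$.

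One round proceeds as follows. Sample a multiset $R$ of $r = \Theta(\delta^2)$ agents with probability proportional to the weights. Call $\LearnHyperplane$ on each not-yet-learned agent in $R$; if any returns $\RejectAll$, output $\Null$. Let $\bx \leftarrow \Select(C_R)$; if this is $\Null$, output $\Null$, which is correct because a subset of constraints is already infeasible. Then query every agent $i$ at $\bx$, using the memoized constraint where available and otherwise spending one membership query. This is the verification step, and it costs at most $n$ queries per round. Let $V$ be the set of violators. If $V=\varnothing$, return $\bx$. If $w(V) \le 2w(N)/\delta$ (a successful round), double the weights of agents in $V$; otherwise discard the round.

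The analysis follows Clarkson's standard argument. By the sampling lemma for LP-type problems (Clarkson; G\"artner--Welzl), $\mathbb{E}[w(V)] \le \delta\, w(N)/(r+1)$. Markov's inequality then shows each round is successful with probability at least $1/2$. Let $B$ be a basis of the optimum of the full problem (or an infeasibility witness), with $|B|\le\delta$. Every successful round must double the weight of at least one member of $B$, since otherwise $\bx$ would satisfy all of $B$ and hence be optimal for everything. After $k$ successful rounds, therefore, $w(B)\ge \delta 2^{k/\delta}$, while $w(N)\le n(1+2/\delta)^k \le n e^{2k/\delta}$. Comparing the two bounds gives $k = O(\delta\log n)$, so the expected total number of rounds is $O(m\log n)$.

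Counting queries, verification contributes $O(m\log n)\cdot n = O(nm\log n)$. Hyperplane learning is invoked on at most $O(m\log n)\cdot r = O(m^3\log n)$ distinct agents, each costing $O(m\log(1/\eps))$ by Lemma~\ref{lem:learnhyperplane_queries}. With memoization this is also at most $n$ agents, which yields the stated bound; correctness follows from Lemma~\ref{lem:learnhyperplane_correctness}.

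The main obstacle is making the LP-type framework rigorous in this setting. We must check that $\Select$ defines an LP-type problem (monotonicity and locality) with combinatorial dimension $O(m)$, including the infeasible case, where an infeasibility witness of size at most $m$ exists by Helly's theorem in dimension $m-1$. Lexicographic maximization makes the optimum unique, which should deliver locality. The $\AcceptAll$ constraints we treat as trivially satisfied. Handling $\RejectAll$ and infeasibility uniformly is the fiddly part, and I would fold it in by treating infeasibility as the top element of the objective order.
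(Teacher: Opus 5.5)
Your framework is the right one: memoized $\LearnHyperplane$ calls, the sampling bound $\mathbb{E}[w(V)]\le \delta\,w(N)/(r+1)$, and the fact that every nonterminal candidate violates some member of a basis or Helly witness $B$ with $|B|\le\delta$. But the step ``comparing the two bounds gives $k=O(\delta\log n)$'' fails with your success threshold.

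If a successful round only guarantees $w(V)\le 2w(N)/\delta$, the total weight may grow by a factor $1+2/\delta$ per successful round, i.e.\ by roughly $e^{2}$ every $\delta$ rounds. The witness weight is only guaranteed to grow by a factor $2$ every $\delta$ rounds. Indeed, $\delta 2^{k/\delta}\le n e^{2k/\delta}$ rearranges to $(k/\delta)(\ln 2-2)\le \ln(n/\delta)$. Since $\ln 2<2$, this never yields an upper bound on $k$; equivalently, $\prod_{i\in B}w_i/w(N)^{|B|}$ can shrink on successful rounds.

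You need a threshold $\theta\,w(N)$ with $\delta\ln(1+\theta)<\ln 2$. For example, with $\theta=1/(2\delta)$ you get $2^{k/\delta}\le w(B)\le w(N)\le n e^{k/(2\delta)}$, hence $k\le \delta\ln n/(\ln 2-\tfrac12)$. Markov still gives success probability at least $1/2$ as long as $r+1\ge 4\delta^2$. Clarkson's original choice, $r=9\delta^2$ with threshold $2w(N)/(9\delta-1)$, also works, since $\delta\ln(1+\theta)\le 2\delta/(9\delta-1)\le\tfrac14$.

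With that correction, your proof is a valid alternative to the paper's. The paper never discards rounds: it doubles all violators every round. It then shows that $\log\Phi_t$, with $\Phi_t=\prod_{i\in B^*}w_i^{(t)}/(W^{(t)})^{b}$, has expected drift at least $\log 2-\tfrac14$. This uses concavity of $\log(1+u)$ together with $\mathbb{E}[\omega^{(t)}\mid\mathcal{F}_{t-1}]\le W^{(t)}/(8(m-1))$. A telescoping/stopping-time argument then gives $\mathbb{E}[T]\le 1+b\log n/c_0$.

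Your success/failure variant replaces that martingale argument with a deterministic cap on the number of successful rounds plus a geometric count of trials, which is more elementary. The price is that discarded rounds still spend up to $n$ verification queries and may trigger new $\LearnHyperplane$ calls, but this only costs a constant factor in expectation. The rest of your accounting matches the paper: at most $n$ verification queries per round, and $\mathbb{E}[\min\{n,rT\}]\le\min\{n,r\,\mathbb{E}[T]\}$ learned agents.
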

\begin{proof}[Proof idea (informal).]
    Our algorithm adapts the \citet{Clarkson1995LasVegas} low-dimensional LP sampling approach to our model: each round samples a small (weight-biased) set of agents, learns only their halfspaces, solves the sampled LP to get a candidate lottery $\bx$, then verifies $\bx$ against all agents and multiplicatively upweights the violators.
    The geometric reason this works is that in dimension $m-1$ there is always a small set of $\mathcal{O}(m)$ agents that certifies feasibility/infeasibility (and pins down the tie-broken optimum when feasible); any failed verification must violate at least one witness agent, so repeated up-weighting quickly forces the sample to include all witnesses, after which verification certifies correctness.
    The novel consideration here is our \emph{oracle strength}: we do not get violated constraints for free from a separation oracle, so we must explicitly elicit (and cache) sampled agents' constraints via $\LearnHyperplane$.
\end{proof}

Thus, the randomized algorithm reduces the number of expensive hyperplane learning occurrences to $\mathcal{O}(\min\{n,m^3\log n\})$ in expectation, at the cost of $\mathcal{O}(m\log n)$ global verification rounds.
Intuitively, if $n \gg m^3 \log n$, then the expected number of queries is $\mathcal{O}(nm \log n + m^4 (\log n) \log(1/\eps))$; whereas if $ n \leq m^3 \log n$, the second term becomes $\mathcal{O}(nm \log(1/\eps))$.

We also remark that finding explicit infeasibility witnesses comes at no additional query cost (see Appendix~\ref{app:infeasibilitywitness}), which may be useful in governance/audit applications.

\section{Lower Bounds on Query Complexity} \label{sec:lb}
The previous section establishes query-efficient algorithms for either finding a unanimously acceptable lottery or certifying infeasibility under membership-query access. 
We now ask how far one can improve these bounds in the worst case.
Our lower bounds apply to both deterministic and randomized algorithms, and they already hold under binary utilities ($u_i(s_j)\in\{0,1\}$). 
\begin{theorem} \label{thm:lowerbound_query}
    Any (deterministic or randomized) algorithm that, on every instance, outputs a unanimously acceptable lottery when one exists (and correctly reports infeasibility otherwise), has a worst-case (expected) number of queries of $\Omega \left((n-\min\{n,m\}) + (\min\{n,m\}-1) \log(1/\eps) \right)$.
\end{theorem}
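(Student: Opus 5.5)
We prove the two terms separately and combine them using $\max\{A,B\}\ge (A+B)/2$. Both arguments use only binary utilities. They apply to randomized algorithms because the algorithms in question are \emph{always} correct (Las Vegas). Fixing the random string therefore yields a deterministic algorithm that is correct on every instance. Each deterministic argument below then gives either a per-run bound or an averaged bound, which we transfer to expectations via Yao's principle.

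\emph{Linear term.} We show that every correct run must query every agent at least once. This gives $n\ge n-\min\{n,m\}$. Take the instance where every agent has $\mathbf{u}_i=(1,\dots,1)$ and $\tau_i=1$, so everyone accepts everything, and assume $m\ge 2$. Fix a run (randomness fixed) and suppose it never queries agent $i$ and outputs some lottery $\bx$. It cannot output $\Null$, since the instance is feasible. Choose $j$ with $x_j<1$; such a $j$ exists because $m\ge2$. Replace agent $i$ by $\mathbf{u}_i=\mathbf{e}_j$, $\tau_i=1$, which accepts only $\mathbf{e}_j$. The transcript is unchanged, so the run still outputs $\bx$, but agent $i$ rejects $\bx$. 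This contradicts correctness. Hence every run makes at least $n$ queries, deterministically and hence also in expectation.

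\emph{Precision term.} Let $k=\min\{n,m\}$. Consider the family of instances indexed by compositions $(a_1,\dots,a_k)$ of $1$ into $k$ positive multiples of $\eps$. Agent $i\le k$ has $\mathbf{u}_i=\mathbf{e}_i$ and $\tau_i=a_i$. Agents $i>k$ accept everything. Since $\sum_{i\le k}x_i\le 1=\sum_i a_i$ and we require $x_i\ge a_i$, the unique unanimously acceptable lottery is $\bx=(a_1,\dots,a_k,0,\dots,0)$. A correct deterministic algorithm must therefore produce distinct outputs on distinct instances, hence distinct leaves of its binary query tree. The number of instances is
\[
M=\binom{1/\eps-1}{k-1}\ge \left(\frac{1/\eps-1}{k-1}\right)^{k-1}.
\]
By Kraft's inequality, the average depth under the uniform distribution over the family is at least $\log_2 M$. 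By Yao's principle, some instance forces expected query count at least $\log_2 M$ for any randomized always-correct algorithm.

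The step needing care is turning $\log_2 M$ into $\Omega((k-1)\log(1/\eps))$. We have $\log_2 M\ge (k-1)\log_2\!\big((1/\eps-1)/(k-1)\big)$, which is only $\Omega((k-1)\log(1/\eps))$ if $k$ is not close to $1/\eps$. Here we invoke the standing assumption $m\le(1/\eps)^{1-\delta}$ from the preliminaries. It gives $(1/\eps-1)/(k-1)\ge \tfrac12(1/\eps)^{\delta}$, so $\log_2 M=\Omega(\delta (k-1)\log(1/\eps))$, with $\delta$ a fixed constant. Combining the two terms gives the claimed $\Omega\big((n-\min\{n,m\})+(\min\{n,m\}-1)\log(1/\eps)\big)$.
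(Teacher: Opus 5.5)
Your proof is correct and follows essentially the paper's argument. It uses the same ``every agent must be queried'' lemma: you substitute a feasible modification that accepts only $\mathbf{e}_j$, whereas the paper substitutes a reject-everything agent, which also covers $m=1$. It uses the same unique-lottery grid family with $k=\min\{n,m\}$ constrained agents, and the same Kraft-plus-Yao step.

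The one structural difference is how you combine the two terms. You take $\max\{A,B\}\ge (A+B)/2$ over two different instances, while the paper prunes dummy-agent queries to get both terms additively on a single instance. Your version is simpler and loses only a constant factor.

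One small imprecision: the factor $\tfrac12$ in your bound yields only $(k-1)(\delta\log_2(1/\eps)-1)$, so the claimed $\Omega(\delta(k-1)\log(1/\eps))$ fails when $(1/\eps)^\delta$ is close to $2$. This is harmless, because in that regime $\log(1/\eps)=O(1)$ and your linear term absorbs it. The paper's sharper estimate $\frac{1/\eps-1}{k-1}\ge\frac{1}{\eps k}\ge(1/\eps)^{\delta}$ avoids the issue altogether.
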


Relative to Section~\ref{sec:ub}, our algorithms have the correct qualitative dependence on $\varepsilon$ (via the $\Theta(\log(1/\varepsilon))$ threshold search cost). 
The remaining gaps are in how many times agents must be \emph{re-queried} during verification. \Cref{alg:cardinal_deterministic} has an $\mathcal{O}(n^2)$ verification overhead in the worst case, while \Cref{alg:cardinal_randomized} reduces the number of verification rounds to $\mathcal{O}(m\log n)$ in expectation (and thus $\mathcal{O}(nm\log n)$ verification queries). 
When $m$ is modest, this brings the dependence on $n$ close to the $\Omega(n)$ lower bound up to logarithmic factors, while simultaneously ensuring that only a small number of agents' halfspaces must be explicitly learned in expectation.

The next theorem isolates an orthogonal difficulty: even with a single agent, no algorithm can hope to achieve a polylogarithmic dependence on $m$ in our model.

\begin{theorem} \label{thm:lb_oneagent}
    Fix any $m\ge 2$. Even for $n=1$, any correct algorithm must make $\Omega(m)$ queries in the worst case.
\end{theorem}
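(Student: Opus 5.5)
We argue by an adversary construction with a single agent. The key observation is that any correct algorithm must, on a feasible instance, output a lottery $\bx$ that the agent accepts. Correctness is required on every instance, and the output is a fixed function of the transcript. So whenever two instances produce the same transcript but have disjoint acceptable sets, the algorithm must fail on one of them.

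\textbf{Family of instances.} We use the $m$ binary-utility instances $I_1,\dots,I_m$ given by $\mathbf{u}=\mathbf{e}_k$ and $\tau=1$. Instance $I_k$ has acceptable set $\mathcal{A}=\{\mathbf{e}_k\}$, a single vertex, so each $I_k$ is feasible and the unique correct output is $\mathbf{e}_k$. On $I_k$ a query $\bx$ is answered \texttt{True} if and only if $x_k=1$, i.e.\ if and only if $\bx=\mathbf{e}_k$.

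\textbf{Deterministic case.} The adversary answers \texttt{False} to every query. A query at a non-vertex point is uninformative, since it gets \texttt{False} under every $I_k$. A query at $\mathbf{e}_k$ eliminates only $I_k$. Suppose the algorithm halts after $q\le m-2$ queries. Then at least two indices $k\neq k'$ are consistent with the all-\texttt{False} transcript. The algorithm outputs a single lottery, which cannot equal both $\mathbf{e}_k$ and $\mathbf{e}_{k'}$, so it is wrong on one of $I_k, I_{k'}$. Hence at least $m-1=\Omega(m)$ queries are needed.

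\textbf{Randomized case.} Take $k$ uniform in $[m]$ and apply Yao's principle. The same argument applies to any fixed query sequence. Until the algorithm queries $\mathbf{e}_k$, every answer it sees is \texttt{False}. Condition on the algorithm not having hit $\mathbf{e}_k$ among its first $t$ distinct vertex queries; then $k$ is uniform over the $m-t$ remaining indices. An always-correct algorithm must either query $\mathbf{e}_k$ or narrow the candidates down to one, i.e.\ query all but one vertex. Therefore the expected number of queries is at least about $(m-1)/2$.

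\textbf{Main obstacle.} The only delicate point is whether the output step can exploit something other than queries. It cannot: the output must lie in $\mathcal{A}=\{\mathbf{e}_k\}$ exactly, and no lottery lies in two distinct such sets. The instances also respect the $\eps$-quantization assumptions, since $\tau=1\in\{\eps,\dots,1\}$ and the utilities are in $\{0,1\}$.
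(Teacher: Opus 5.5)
Your proof is correct, but it takes a different route from the paper. The paper adds to your family $\mathcal{I}_1,\dots,\mathcal{I}_m$ an infeasible instance $\mathcal{I}_0$ with $\mathbf{u}_1=\mathbf{0}$ and $\tau_1=1$, on which every query is answered \texttt{False}. A correct deterministic algorithm must output $\Null$ on $\mathcal{I}_0$. If it halted without ever querying some vertex $\mathbf{e}_{j^*}$, its transcript would coincide with the one on $\mathcal{I}_{j^*}$, where it must output $\mathbf{e}_{j^*}$. So it must query all $m$ vertices on $\mathcal{I}_0$.

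This single instance forces $m$ queries for \emph{every} correct deterministic algorithm. That buys two things: the constant is exactly $m$ (matching the $m$ pure-lottery queries of $\LearnHyperplane$), and the randomized case follows seed by seed with no need for Yao's principle.

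Your argument uses only feasible instances, so the hard instance depends on each algorithm's vertex-query order. That is why you get $m-1$ rather than $m$ deterministically, and why you must average over a uniform $k$ (Yao) in the randomized case. In exchange, your version shows something the paper's does not: the $\Omega(m)$ bound survives even under a promise of feasibility. The difficulty is thus already in locating the unique acceptable vertex, not only in certifying infeasibility.

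Your averaging step is stated loosely (``at least about''). To make it precise, list the vertices in the order they are first queried along the all-\texttt{False} path, placing a never-queried vertex (there is at most one) last. On the instance whose acceptable vertex is the $j$-th in this list, the algorithm makes at least $\min\{j,m-1\}$ queries. Averaging over uniform $k$ then gives at least $(m-1)(m+2)/(2m)\ge (m-1)/2$.
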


\section{Learning-Augmented Algorithms} \label{sec:predictions}
The algorithms in \Cref{sec:ub} are designed for the worst case: they assume the algorithm has no prior information about which agents are likely to be constraining, or where a unanimously acceptable lottery might lie. 
In many applications, this is overly pessimistic. 
In particular, governance and safety evaluation pipelines often come with side information---historical data about which checks tend to
bind, preliminary offline evaluations, domain expertise, or even heuristic rankings produced by other systems. 
While such advice can be imperfect, it can nevertheless capture meaningful structure that an elicitation algorithm can exploit to reduce the number of expensive stakeholder queries.

Learning-augmented algorithms (also called algorithms with advice/predictions) provide a framework for leveraging such information without sacrificing worst-case guarantees.
The algorithm receives a prediction about some aspect of the instance, but must remain correct even if the prediction is arbitrarily wrong. We seek three standard properties:
\emph{consistency} (perfect advice results in fewer queries), \emph{robustness} (arbitrarily bad advice does not worsen the asymptotic worst-case guarantees), and \emph{smoothness} (performance degrades gradually as a function of a prediction quality parameter).

In our setting, advice is especially natural because the dominant query cost arises from learning agents' halfspaces: each call to $\LearnHyperplane$ costs $\mathcal{O}(m\log(1/\eps))$ queries. Thus, even coarse predictions that reduce (i) the number of agents whose hyperplanes must be elicited, or (ii) the time until the algorithm queries the set of ``constraining'' agents, can translate into substantial query savings.

We consider two forms of predictions: \emph{permutation predictions}, which provide an ordering of agents intended to place more constraining agents early, and \emph{lottery predictions}, which
provide a candidate lottery believed to be promising.\footnote{These two prediction types are illustrative rather than exhaustive; they simply demonstrate that natural, coarse advice can reduce query complexity while preserving worst‑case guarantees.} 

\subsection{Permutation Predictions}
A natural and common kind of advice in practice is a ranking of stakeholders by ``likely constraining''; for example, based on past deployment reviews or on preliminary screening. 
\Cref{alg:cardinal_deterministic} scans agents to find the first violator of the current candidate lottery; if a violator is found, the algorithm pays the full cost of eliciting that agent's halfspace. 
This suggests that a good scan order can substantially reduce the number of costly elicitation steps.

Let $\sigma$ be any permutation of the agents $N$. 
Define \Cref{alg:cardinal_deterministic}$[\sigma]$ to be \Cref{alg:cardinal_deterministic} with the only change that, in each iteration, the algorithm scans unlearned agents in the order
$\sigma(1),\dots,\sigma(n)$ (instead of $1,\dots,n$). 
Let $H(\sigma)\subseteq N$ denote the set of
agents on which $\LearnHyperplane$ is invoked during the execution of \Cref{alg:cardinal_deterministic}$[\sigma]$,
and define $R(\sigma) := |H(\sigma)|$.\footnote{If $R(\sigma)=0$, then the very first candidate lottery (computed from an empty constraint set) is already unanimously acceptable, and the algorithm terminates after only the $n$ verification queries. Thus, we assume $R(\sigma)\ge 1$ for notational simplicity.} 
We call $R(\sigma)$ the number of \emph{record agents} under order $\sigma$: it is exactly the number of times the algorithm is forced to pay for hyperplane elicitation.
Smaller $R(\sigma)$ corresponds to better advice.

A permutation predictor outputs an ordering $\hat\sigma$ intended to place ``more constraining'' agents early. 
Importantly, there need not be a uniquely most constraining agent: which constraints become active depends on the sequence of candidate lotteries produced by $\Select$.
We therefore interpret ``constraining'' in an operational sense: an agent is constraining for order $\sigma$ if it is ever encountered as a violator before its halfspace is learned.
The most direct use of a predicted permutation $\hat\sigma$ is to run \Cref{alg:cardinal_deterministic} with the agent order $\hat\sigma$ in the \textbf{for} loop on Line~\ref{line:alg_det_for_start} (which we denote as \Cref{alg:cardinal_deterministic}$[\hat\sigma]$).
We treat $R(\hat\sigma)$ as the prediction quality parameter.

\begin{theorem}\label{thm:perm-det}
    Let $\hat\sigma$ be a predicted permutation of the agents. Then \Cref{alg:cardinal_deterministic}$[\hat\sigma]$ outputs a
    unanimously acceptable lottery if one exists, and $\Null$ otherwise, using $\mathcal{O}\left((n + m\log(1/\eps))\, R(\hat\sigma)\right)$ queries.
\end{theorem}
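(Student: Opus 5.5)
Correctness of \Cref{alg:cardinal_deterministic}$[\hat\sigma]$ does not depend on the scan order, so I would reuse the correctness argument of \Cref{thm:ub_deterministic} and only redo the query count with $R(\hat\sigma)$ in place of the crude bound $n$. For correctness I would invoke \Cref{lem:learnhyperplane_correctness} to maintain the invariant that every $\mathbf{c}\in C$ describes exactly the acceptable set $\mathcal{A}_j$ of a learned agent $j$. Three consequences follow. First, if $\Select(C)=\Null$, then $\bigcap_{j\in H}\mathcal{A}_j=\varnothing$ and hence the full intersection is empty. Second, if $\LearnHyperplane(j)$ returns \RejectAll, then $\mathcal{A}_j=\varnothing$. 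Third, if the scan finds no violator, the candidate $\bx$ is accepted by all unlearned agents (those were queried) and by all learned agents (since $\bx$ satisfies $C$). One more point is needed: $\LearnHyperplane(j)$ is only ever called on a $j$ that rejected some $\bx$, so it cannot return \AcceptAll. The algorithm also cannot loop forever, because each non-terminating iteration removes one agent from $N'$.

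For the query count, I would split the execution into iterations of the \textbf{while} loop. Every iteration except the last ends with exactly one call to $\LearnHyperplane$ on a fresh agent, and that agent joins $H(\hat\sigma)$. The number of iterations is therefore at most $R(\hat\sigma)+1$. The two terminating cases are handled as follows.
\begin{itemize}
\item If the algorithm ends with \RejectAll, that final iteration itself invoked $\LearnHyperplane$, so the count is $R(\hat\sigma)$.
\item If it ends by returning $\bx$ or \Null{} from $\Select$, there is one extra iteration with no learning.
\end{itemize}
Within each iteration the scan issues at most one query per unlearned agent, i.e.\ at most $n$ verification queries, and at most one call to $\LearnHyperplane$, which costs $\mathcal{O}(m\log(1/\eps))$ by \Cref{lem:learnhyperplane_queries}. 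Summing over iterations gives
\[
(R(\hat\sigma)+1)\,n + R(\hat\sigma)\cdot\mathcal{O}(m\log(1/\eps)).
\]
Using the paper's convention $R(\hat\sigma)\ge 1$, this is $\mathcal{O}\bigl((n+m\log(1/\eps))R(\hat\sigma)\bigr)$.

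I expect no genuine obstacle; the only delicate point is this bookkeeping. The counting must charge the final verification round, which has no associated learning step, to the $R(\hat\sigma)\ge1$ assumption, and must check that the $\Select(C)=\Null$ exit adds no queries at all. It is also worth remarking that the bound is robust: since $R(\hat\sigma)\le n$, it never exceeds the $\mathcal{O}(n^2+nm\log(1/\eps))$ bound of \Cref{thm:ub_deterministic}.
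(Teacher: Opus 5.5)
Your proposal is correct and matches the paper's proof essentially step for step. Correctness is inherited from the order-independent argument of \Cref{thm:ub_deterministic}. For the query count, the number of \textbf{while} iterations is at most $R(\hat\sigma)+1$, each iteration is charged at most $n$ verification queries, each $\LearnHyperplane$ call costs $\mathcal{O}(m\log(1/\eps))$, and the extra $+1$ is absorbed using the convention $R(\hat\sigma)\ge 1$.
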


When advice is perfect (i.e., $R(\hat\sigma) = 1$), the algorithm of \cref{thm:perm-det} uses only $\mathcal{O}(n + m \log(1/\eps))$ queries; note that $\Omega(n + m)$ queries are necessary from \cref{sec:lb}.
Meanwhile, even with arbitrarily bad advice, the query complexity never exceeds the advice-free worst-case guarantee of \cref{alg:cardinal_deterministic} since $R(\hat\sigma) \leq n$ for any permutation. 
Finally, the linear dependence on $R(\hat\sigma)$ gives us a smooth performance degradation as prediction quality worsens.

Permutation advice can also improve the randomized algorithm (\Cref{alg:cardinal_randomized}), which repeatedly samples agents according to a weight vector, solves the sampled subproblem, and multiplicatively upweights violators. 
In the advice-free analysis, the expected number of iterations scales as $\mathcal{O}(m\log n)$ because the algorithm must (in effect) discover a small witness set among $n$ agents.
A predicted order allows us to bias sampling toward agents that appear early, reducing the effective $\log n$ dependence when a witness set is contained in a short prefix of $\hat\sigma$.

Let $\mathrm{rank}_{\hat\sigma}(i)\in\{1,\dots,n\}$ denote the position of agent $i$ in $\hat\sigma$.
Recall from the analysis of \Cref{alg:cardinal_randomized} that correctness can be certified by a small \emph{witness set}: in the feasible case, there is a basis of size at most $m-1$ that determines the final (tie-breaking) solution; in the infeasible case, there is a \emph{Helly witness} of size at most $m$, i.e., a set $B$ with $\bigcap_{i \in B} \mathcal{A}_i = \varnothing$.
Let $\mathcal{W}$ denote the family of valid witness sets used in the analysis (bases in the feasible case, Helly witnesses in the infeasible case) and define the \emph{prefix witness parameter} $E(\hat\sigma) :=
\min_{B\in\mathcal{W}}~\max_{i\in B}\mathrm{rank}_{\hat\sigma}(i)$, i.e., $E(\hat\sigma)$ is the smallest $k$ such that the first $k$ agents of $\hat\sigma$ contains some $B \in \mathcal{W}$.
We define \Cref{alg:cardinal_randomized}$[\hat\sigma]$ as \Cref{alg:cardinal_randomized} with an advice-biased initialization of the sampling weights: set the initial integer weight of agent $i$ to $w_i^{(1)} := \lceil n/\mathrm{rank}_{\hat\sigma}(i) \rceil$, and then run the same sampling and multiplicative updating procedure as in \Cref{alg:cardinal_randomized}.

\begin{theorem}\label{thm:perm-rand}
    Let $\hat\sigma$ be a predicted permutation of the agents. Then \Cref{alg:cardinal_randomized}$[\hat\sigma]$ outputs a unanimously acceptable lottery if one exists, and $\Null$ otherwise, using $\mathcal{O}(nm\mu +\min\{n,m^3\mu\}\cdot m\log(1/\eps))$ queries in expectation, where $\mu = \log E(\hat\sigma) + \log\log n$.
\end{theorem}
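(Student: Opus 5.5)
We follow the standard Clarkson iterative-reweighting analysis and change only the initial potential. Correctness carries over unchanged from \Cref{thm:alg_randomized}. The modification affects only the initial weights, and the algorithm terminates only after a global verification round in which every agent accepts the candidate (or an infeasibility certificate is found among learned constraints). So the task is to bound the expected number of iterations by $\mathcal{O}(m\mu)$ and the expected number of distinct agents whose hyperplanes are learned by $\mathcal{O}(\min\{n, m^3\mu\})$. Multiplying these by the per-iteration verification cost $n$ and the per-learn cost $\mathcal{O}(m\log(1/\eps))$ from \Cref{lem:learnhyperplane_queries} gives the stated bound.

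\textbf{Potential argument.} Let $W^{(t)}=\sum_i w_i^{(t)}$ denote the total weight. Initially $W^{(1)}=\sum_{r=1}^n \lceil n/r\rceil \le n(H_n+1)=\mathcal{O}(n\log n)$. I reuse the two facts already established for \Cref{alg:cardinal_randomized}.

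\begin{enumerate}[(a)]
    \item By Clarkson's sampling lemma, with the sample size chosen as a function of the dimension, a successful iteration, in which the total weight of violators is at most a $1/(cm)$ fraction of $W$, occurs with constant probability. In such an iteration $W$ grows by at most a factor $(1+1/(cm))$, so after $k$ successful iterations $W\le W^{(1)}e^{k/(cm)}$.
    \item Fix a witness set $B\in\mathcal{W}$ with $\max_{i\in B}\mathrm{rank}_{\hat\sigma}(i)=E:=E(\hat\sigma)$. Every failed verification has some violator in $B$. Hence after $k$ successful iterations some $i\in B$ has been doubled at least $k/|B|\ge k/m$ times, giving $w(B)\ge \min_{i\in B} w_i^{(1)}\cdot 2^{k/m}\ge (n/E)\,2^{k/m}$.
\end{enumerate}

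Combining (a) and (b), $(n/E)2^{k/m}\le \mathcal{O}(n\log n)\,e^{k/(cm)}$. For $c$ a large enough constant this forces $k=\mathcal{O}(m(\log E+\log\log n))=\mathcal{O}(m\mu)$. The key point is that the factor $n$ cancels, leaving $\log(E\log n)$ instead of $\log n$. Since each iteration succeeds with constant probability, the expected total number of iterations is also $\mathcal{O}(m\mu)$. This gives $\mathcal{O}(nm\mu)$ verification queries.

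\textbf{Learning cost.} Each iteration samples $\mathcal{O}(m^2)$ agents, following the same sample-size choice as in \Cref{thm:alg_randomized}, whose $m^3\log n$ term arises as $m^2$ samples times $m\log n$ rounds. Learned constraints are cached, so the number of distinct hyperplanes learned is at most $\min\{n,\mathcal{O}(m^2)\cdot\#\text{iterations}\}$. Since $\mathbb{E}[\min\{n,X\}]\le\min\{n,\mathbb{E}X\}$, this is in expectation at most $\min\{n,\mathcal{O}(m^3\mu)\}$. Each learned hyperplane costs $\mathcal{O}(m\log(1/\eps))$ queries by \Cref{lem:learnhyperplane_queries}.

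\textbf{Main obstacle.} The delicate step is (b). The biased initialization no longer gives uniform initial weights, so I must check that the lower bound uses only the weights of the witness agents, each at least $n/E$ because of the ceiling. I must also check that the witness set is fixed independently of the algorithm's randomness. In the feasible case it is the basis of the lexicographic optimum of the full instance; in the infeasible case it is a fixed Helly witness. A second check is that the sampling lemma is stated for arbitrary integer weights (sampling from the multiset), so the nonuniform start does not affect the success probability.
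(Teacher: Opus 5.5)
\textbf{Gap in step (a).} The bound ``after $k$ successful iterations $W\le W^{(1)}e^{k/(cm)}$'' holds for Clarkson's original scheme. There, violators are reweighted only when the iteration passes the test $w(V)\le W/(cm)$, so $W$ is unchanged in unsuccessful iterations. \Cref{alg:cardinal_randomized}$[\hat\sigma]$ has no such test: it doubles every violator in every nonterminal iteration. Since the violator weight $\omega^{(t)}$ can be comparable to $W^{(t)}$, an unsuccessful iteration can nearly double $W$. So $(n/E)2^{k/m}\le \mathcal{O}(n\log n)e^{k/(cm)}$ does not follow, and you have no bound on the number of iterations.

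Keeping the success/failure split does not rescue the argument. With $r=16(m-1)^2$, \Cref{lem:sampling} plus Markov only bounds the failure probability by a constant. A failure can multiply $(W^{(t)})^{b}$ by up to $2^{b}$, while the witness product is only guaranteed to double. So for $b$ of order $m$, the drift bound this split gives is negative. Pushing the failure probability down to $O(1/m)$ via Markov would need $r=\Theta(m^3)$, which would worsen the learning term.

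\textbf{The missing idea.} Bound the expected growth of $\log W$ directly instead of splitting into good and bad iterations. This is exactly \Cref{drift_potential}, which the paper reuses unchanged for the biased initialization. Conditioned on $\mathcal{F}_{t-1}$, a nonterminal iteration increases $\log W$ by $\log(1+\omega^{(t)}/W^{(t)})\le \omega^{(t)}/W^{(t)}$. By \Cref{lem:sampling} its conditional expectation is at most $1/(8(m-1))$, however large $\omega^{(t)}$ is on the bad event.

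Then apply your (b) to every nonterminal iteration, not only the successful ones; \Cref{lem:hit} covers every failed verification. With this, the potential $\Phi_t=\prod_{i\in B^*}w_i^{(t)}/(W^{(t)})^{b}$ has expected log-drift at least $\log 2-\tfrac14$. Telescoping with $\Phi_t\le 1$ gives $\mathbb{E}[T]\le 1+\log(1/\Phi_1)/c_0$.

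From there your remaining ingredients coincide with the paper's:
\begin{itemize}
\item $W^{(1)}\le n(H_n+1)$ and $w_i^{(1)}\ge n/E(\hat\sigma)$ for $i\in B^*$ give $\log(1/\Phi_1)\le m\log(\mathcal{O}(\log n)\,E(\hat\sigma))$, so the factor $n$ cancels.
\item The witness set is fixed independently of the algorithm's randomness.
\item Jensen on $\min\{n,rT\}$ handles the learning cost.
\end{itemize}
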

The above result recovers the advice-free bound (up to constants) when the advice is arbitrary: since $E(\hat\sigma)\le n$ always, we have $\mu = \mathcal{O}(\log n)$. 
When the advice is perfect, the dependence on $\log n$ in the advice-free analysis is replaced by $\log E(\hat\sigma)+\log\log n$. 
The bound is monotone in $E(\hat\sigma)$, giving us smooth degradation as the prediction worsens.

We complement Theorems~\ref{thm:perm-det} and \ref{thm:perm-rand} by proving, in Appendix~\ref{app:lb_permutation}, that even when the predicted permutation is perfect, any algorithm that must remain correct for arbitrary advice still requires $\Omega ((n-\min\{n,m\}) + (\min\{n,m\}-1)\log(1/\eps))$ queries, as in the setting without advice.

\subsection{Lottery Predictions} \label{subsec:lottery_pred}
A second natural form of advice is a \emph{lottery prediction}: the predictor provides a candidate lottery $\hat{\bx}\in \Delta(S)$ that is believed to be informative for the instance.\footnote{For example, $\hat{\bx}$
may come from offline evaluation, a heuristic search procedure, or a previous deployment cycle. The algorithm must remain correct even if $\hat{\bx}$ is arbitrary.}
The most direct use of $\hat{\bx}$ is to verify it once: query each agent $i\in N$ on $\hat{\bx}$.
If $\tQ(i,\hat{\bx})=\texttt{True}$ for all $i$, then $\hat{\bx}$ is unanimously acceptable and the algorithm can terminate after
exactly $n$ membership queries.

When $\hat{\bx}$ is not unanimously acceptable, we can still leverage it to reduce the query cost of learning an individual agent's acceptability halfspace. 
Recall that $\LearnHyperplane(i)$ (\Cref{alg:learnhyperplane}) identifies agent $i$'s boundary by locating $m-1$ turning points along simplex edges, using $\mathrm{ExactThreshold}$ (Appendix~\ref{app:exact_threshold}). Each such call performs a global bisection on an interval of length~$1$, costing $\Theta(\log(1/\eps))$ queries.

Lottery advice provides a natural warm start for these one-dimensional searches.
For distinct $k,k'\in[m]$, recall the edge lotteries $\bx_{k,k',\alpha} := \alpha e_{k'}+(1-\alpha)e_k$ for all $\alpha \in [0,1]$.
Given any $\bx \in \Delta(S)$, define the pairwise projection coordinate $\widehat{\alpha}_{k,k'}(\bx)$, which takes on a value of $x_{k'}/(x_k+x_{k'})$ if $x_k+x_{k'}>0$, and $1/2$ otherwise ($x_k + x_{k'} = 0$).
Equivalently, $\widehat{\alpha}_{k,k'}(\bx)$ is the unique $\alpha\in[0,1]$ such that the edge lottery $\bx_{k,k',\alpha}$ has the same \emph{relative} mass on $\{\mathbf{e}_k,\mathbf{e}_k'\}$ as $\bx$ after renormalizing away the other $m-2$ coordinates.

Fix an agent $i\in N$ and an ordered pair $(k,k')$ with $\tQ(i,\mathbf{e}_k)=\texttt{False}$ and $\tQ(i,\mathbf{e}_{k'})=\texttt{True}$.
As argued in Section~\ref{sec:ub}, there is a unique turning point $\alpha^*_{i; k,k'} :=  \inf\{\alpha\in[0,1] : \tQ(i,x_{k,k',\alpha})=\texttt{True}\} \in (0,1]$.
Then, given a prediction $\hat{\bx}$, define the (edge) turning point projection error $\delta_{i;k,k'}(\hat{x}) := |\widehat{\alpha}_{k,k'}(\hat{x})-\alpha^*_{i;k,k'}|$.
We illustrate the defined quantities with Figure~\ref{fig:exactthreshold-segment}.
\begin{figure}[h]
\centering
\begin{tikzpicture}
% Corners
\node[draw, fill=black, thick, circle, inner sep=0pt, minimum size=5pt] at (0,-4) (e1) {};
\node[draw, fill=black, thick, circle, inner sep=0pt, minimum size=5pt] at (6,-4) (e2) {};
\node[draw, fill=black, thick, circle, inner sep=0pt, minimum size=4pt] at (3.5,-4) (xstar) {};
\node[draw, fill=black, thick, circle, inner sep=0pt, minimum size=4pt] at (1.5,-4) (proj) {};

\node[below=1pt of e2] {$\mathbf{e}_{k'}$};
\node[below=1pt of e1] {$\mathbf{e}_k$};
\node[above=1pt of xstar] {$\alpha_{i ; k,k'}^*$};
\node[above=1pt of proj] {$\widehat{\alpha}_{k,k'}(\hat{\bx})$};

% Coordinates
\coordinate (v12-red) at ($(e2)!0.4!(e1)$);

% Draw thick black triangle boundary
\draw[thick] (e1.center) -- (e2.center);

\path (e1.center) -- (e2.center)
  coordinate[pos=-0.02] (a)
  coordinate[pos=0.58] (b);
\draw[line width=6pt, draw=red!35, line cap=butt, opacity=0.35] (a) -- (b);
\path (e1.center) -- (e2.center)
  coordinate[pos=0.58] (a)
  coordinate[pos=1.02] (b);
\draw[line width=6pt, draw=green!35, line cap=butt, opacity=0.35] (a) -- (b);

\path (e1.center) -- (e2.center)
  coordinate[pos=0.25] (a)
  coordinate[pos=0.58] (b);
\draw[decorate, decoration={brace, amplitude=4pt, mirror}, thick]
  ([yshift=-8pt]a) -- ([yshift=-8pt]b)
  node[midway, below=5pt] {$\delta_{i; k,k'}(\hat{\bx})$};

\end{tikzpicture} 
\caption{Each point on this line segment is a lottery. The red (resp., green) region indicates the set of rejected (resp., accepted) lotteries along this edge. A lottery prediction induces a projected turning point which can be used as a warm start $\widehat{\alpha}_{k,k'}(\hat{\bx})$ to find the turning point $\alpha^*_{i;k,k'}$ faster.}
\label{fig:exactthreshold-segment}
\end{figure}
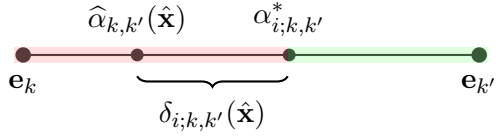

Let $\mathcal{E}_i$ denote the set of ordered pairs $(k,k')$ on which $\LearnHyperplane$ invokes $\mathrm{ExactThreshold}$ when run on
agent~$i$ (i.e., the set of simplex edges whose turning points are actually queried); note that $|\mathcal{E}_i|\le m-1$.
Define the per-agent prediction error $\delta_i^{\max}(\hat{\bx}) := \max_{(k,k')\in\mathcal{E}_i}\ \delta_{i;k,k'}(\hat{\bx})$.

We now define $\LearnHyperplane[\hat{\bx}]$ to be $\LearnHyperplane$ with a modification that every call to
$\mathrm{ExactThreshold}(i,k,k')$ is replaced by a warm-started routine $\mathrm{ExactThresholdPred}(i,k,k',\widehat{\alpha}_{k,k'}(\hat{\bx}))$
(\Cref{alg:exact-threshold-pred} in Appendix~\ref{app:exact_threshold_pred}).

\begin{theorem}\label{thm:lottery-pred}
    Fix an agent $i\in N$ and a predicted lottery $\hat{\bx}\in\Delta(S)$.
    Then $\LearnHyperplane[\hat{\bx}](i)$ returns one of the following: (i) \textsc{AcceptAll}, (ii) \textsc{RejectAll}, or (iii) a vector $\mathbf{c}_i$ whose induced halfspace coincides with agent $i$'s acceptable set on $\Delta(S)$).
    Moreover, it has a query complexity of $\mathcal{O}(
    m + m\log(1+\delta_i^{\max}(\hat{\bx})/\eps^2))$.
\end{theorem}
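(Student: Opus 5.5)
The plan has two parts. For correctness, we reuse Lemma~\ref{lem:learnhyperplane_correctness} by showing that $\mathrm{ExactThresholdPred}$ returns exactly the same turning point $\alpha^*_{i;k,k'}$ as $\mathrm{ExactThreshold}$. For the query bound, we show that each warm-started search costs $\mathcal{O}(1+\log(1+\delta_{i;k,k'}(\hat{\bx})/\eps^2))$ queries.

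\textbf{Correctness.} The key structural fact is that the turning point lies on a known finite grid. On the edge $(\mathbf{e}_k,\mathbf{e}_{k'})$, the utility $U_i(\bx_{k,k',\alpha}) = u_{i,k} + \alpha(u_{i,k'}-u_{i,k})$ is affine in $\alpha$. Since $\mathbf{e}_k$ is rejected and $\mathbf{e}_{k'}$ is accepted, the slope $u_{i,k'}-u_{i,k}$ is a positive multiple of $\eps$. Hence
\[
\alpha^*_{i;k,k'} = \frac{\tau_i-u_{i,k}}{u_{i,k'}-u_{i,k}},
\]
which is a ratio $p/q$ of integers with $1\le q\le 1/\eps$. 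Any two distinct such fractions differ by at least $\eps^2$. So once the search brackets $\alpha^*$ in an interval $[\ell,h]$ of length below $\eps^2$ (a rejected point $\ell$, an accepted point $h$), exactly one candidate fraction lies in $(\ell,h]$. That candidate can be identified offline (e.g.\ via Stern--Brocot or by enumerating denominators) and confirmed with $\mathcal{O}(1)$ queries. I assume this is the same finishing step used by $\mathrm{ExactThreshold}$ in Appendix~\ref{app:exact_threshold}. Because $\mathrm{ExactThresholdPred}$ returns the exact value $\alpha^*_{i;k,k'}$ on every edge in $\mathcal{E}_i$, the rest of $\LearnHyperplane$ runs identically, and Lemma~\ref{lem:learnhyperplane_correctness} gives (i)--(iii) verbatim.

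\textbf{Query count for one edge.} The warm start is a galloping (exponential) search from $\hat\alpha = \widehat{\alpha}_{k,k'}(\hat{\bx})$:
\begin{enumerate}[(1)]
\item Query $\bx_{k,k',\hat\alpha}$. If it is accepted, probe $\hat\alpha - \eps^2 2^t$ for $t=0,1,2,\dots$ (clipped to $0$) until a rejection occurs. If it is rejected, probe $\hat\alpha + \eps^2 2^t$ (clipped to $1$) until an acceptance occurs.
\item Let $\delta = \delta_{i;k,k'}(\hat{\bx})$. The first bracketing probe occurs at some $T$ with $\eps^2 2^{T-1} < \delta$, so $T = \mathcal{O}(1+\log(1+\delta/\eps^2))$. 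The resulting bracket has length at most $\eps^2 2^{T} = \mathcal{O}(\delta+\eps^2)$.
\item Bisect this bracket down to length below $\eps^2$. This costs another $\mathcal{O}(1+\log(1+\delta/\eps^2))$ queries.
\item Finish with $\mathcal{O}(1)$ queries via the grid argument above.
\end{enumerate}
Clipping at the endpoints $0$ and $1$ only shortens the search. The case $x_k+x_{k'}=0$, where $\hat\alpha=1/2$, is handled by the same analysis.

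\textbf{Total.} Querying the $m$ vertices costs $m$ queries. Summing the per-edge bound over the at most $m-1$ edges in $\mathcal{E}_i$ and using $\delta_{i;k,k'}\le\delta_i^{\max}(\hat{\bx})$ gives
\[
\mathcal{O}\bigl(m + m\log(1+\delta_i^{\max}(\hat{\bx})/\eps^2)\bigr).
\]

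\textbf{Main obstacle.} The delicate step is the exactness claim: that the distinct candidate turning points are separated by at least $\eps^2$, and that an interval of length below $\eps^2$ pins down a unique candidate identifiable with $\mathcal{O}(1)$ queries. I would try first to show that the union of the sets $\{p/q\}$ over all $q\le 1/\eps$ is $\eps^2$-separated via $|p/q-p'/q'|\ge 1/(qq')$. If needed, I would instead shrink the bracket to $\eps^2/2$, which costs only $\mathcal{O}(1)$ extra queries.
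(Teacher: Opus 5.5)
Your proposal is correct and follows the paper's proof essentially step for step: a warm-started exponential search from $\widehat{\alpha}_{k,k'}(\hat{\bx})$, then bisection below the $\eps^2$ separation scale of fractions with denominator at most $1/\eps$, then query-free rational reconstruction (the paper's $\mathrm{ExactThresholdPred}$ and Lemma~\ref{lem:exact-threshold-pred}). As in the paper, correctness is inherited from Lemma~\ref{lem:learnhyperplane_correctness} because the exact turning points are recovered, and the bound comes from summing over the at most $m-1$ edges in $\mathcal{E}_i$ plus the $m$ vertex queries; your differences (probing at distance $\eps^2 2^t$ from $\widehat{\alpha}$ rather than moving the endpoint cumulatively, stopping at width $<\eps^2$ instead of $<\eps^2/2$, optional confirmation queries) are cosmetic and do not affect the bound.
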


The above result gives a learning-augmented bound for the subroutine that learns a single agent's acceptable halfspace.
It is robust: for any prediction $\hat \bx$, the query complexity is never worse (up to constants) than the baseline $\mathcal{O}(m\log(1/\eps))$ bound.
It is also consistent under perfect edge information: if $\delta_{i;k,k'}(\hat x)=0$ for all $(k,k')\in E_i$, then each warm-started threshold search uses only $\mathcal{O}(1)$ additional queries (no logarithmic factor), so the subroutine runs in $\mathcal{O}(m)$ queries total (including the $m$ pure lottery queries).
More generally, the overhead above $m$ scales as $\sum_{(k,k')\in \mathcal{E}_i}\log(1+\delta_{i;k,k'}(\hat \bx)/\eps^2)$, giving us a smooth degradation with prediction error.

Then, Algorithms~\ref{alg:cardinal_deterministic} and~\ref{alg:cardinal_randomized} can incorporate lottery advice by first verifying $\hat{\bx}$ (early termination if unanimously accepted),
and otherwise replacing every call to $\LearnHyperplane$ by $\LearnHyperplane[\hat{\bx}]$. 
This preserves correctness,
and simply replaces the per-agent elicitation cost $\mathcal{O}(m\log(1/\eps))$ by the refined bound in \Cref{thm:lottery-pred}.\footnote{Note that we do not require any relation between $\hat \bx$ and feasibility; we only exploit whatever pairwise information it encodes.} 

Next, we show that the logarithmic dependence on the turning point projection error in \Cref{thm:lottery-pred} is  necessary, even when $m=2$, with the following lower bound.

\begin{proposition} \label{prop:la_freq}
    Fix $\delta \in [\eps^2,1]$, and define $M := \min\{\lfloor 1/(2\eps)\rfloor,\ \lfloor \delta/\eps^2\rfloor + 1\}$.
    There exist $n=m=2$, a lottery prediction $\widehat{\bx}\in\Delta(S)$, and a family of instances
    $\{\I_t\}_{t=0}^{M-1}$ such that each $\I_t$ has a unique unanimously acceptable lottery
    $\bx_t^* = \bx_{1,2,\alpha_t}$, and for every $t$ the turning point projection error for agent $2$
    on edge $(\mathbf{e}_1,\mathbf{e}_2)$ in instance $\I_t$ satisfies $\delta_{2;1,2}(\widehat{x})
    =
    |\alpha_{b1,2}(\widehat{x}) - \alpha_t|
    \le \delta$.
    For this fixed $\widehat{\bx}$, any correct query algorithm must make $\Omega\left(\log(1+\delta/\eps^2\right))$ queries in the worst case
    (over the choice of $t$), in expectation for randomized algorithms.
\end{proposition}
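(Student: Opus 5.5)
We will build a family of $M$ instances on the single edge $(\mathbf{e}_1,\mathbf{e}_2)$, each pinning the feasible region down to one distinct point that sits within distance $\delta$ of the predicted coordinate. An information-theoretic counting argument over binary query answers then gives the bound, with Yao's principle handling randomized algorithms.

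\textbf{Construction.} With $m=2$, every lottery is $\bx_{1,2,\alpha}$ for some $\alpha\in[0,1]$. For each $t\in\{0,\dots,M-1\}$ we pick the candidate point
\[
\alpha_t := \tfrac12 - \lfloor (M-1)/2\rfloor\eps^2 + t\eps^2 .
\]
These $M$ points are spaced $\eps^2$ apart in a window of width $(M-1)\eps^2\le \delta$. The window lies inside $(0,1)$ because $M\le 1/(2\eps)$.

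Agent~1 is set up to accept exactly $\alpha\le\alpha_t$. We take $u_1(s_1)=1$ and choose $u_1(s_2)=a$ and $\tau_1$ so that $\tau_1=1-\alpha_t(1-a)$. Agent~2 is set up to accept exactly $\alpha\ge \alpha_t$. We take $u_2(s_1)=0$, $u_2(s_2)=b$, and $\tau_2=\alpha_t b$.

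Both agents need $\eps$-quantized parameters, so $\alpha_t$ must be expressible as $p/q$ with $p,q$ multiples of $\eps$ in $[0,1]$. This is the main obstacle. The ratios $\tau_2/b$ with $b,\tau_2\in\eps\mathbb{Z}\cap(0,1]$ form the Farey-type set $\{p/q: q\le 1/\eps\}$, which is not an arithmetic progression. I would therefore not insist on exact $\eps^2$ spacing.

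Instead I would take the $\alpha_t$ to be $M$ distinct Farey fractions of denominator at most $1/\eps$ lying in an interval of length $\delta$ around $1/2$. Consecutive Farey fractions near $1/2$ with denominator $\le 1/\eps$ have gaps of order $\eps^2$ on average, and an interval of length $\delta$ contains $\Theta(\delta/\eps^2)$ of them. Capping at $1/(2\eps)$ covers the small-$\delta$ regime. This yields $\Omega(M)$ instances, which suffices asymptotically.

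The same fraction $\alpha_t=p/q$ also works for agent~1: take $1-a=q'$ and $1-\tau_1=p'$ with the same ratio. With these choices the unique unanimously acceptable lottery in $\I_t$ is $\bx_{1,2,\alpha_t}$.

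\textbf{Prediction and error.} We fix $\hat\bx=(1/2,1/2)$, so that $\widehat{\alpha}_{1,2}(\hat\bx)=1/2$. On edge $(\mathbf{e}_1,\mathbf{e}_2)$ agent~2 rejects $\mathbf{e}_1$ and accepts $\mathbf{e}_2$, and its turning point is $\alpha_t$. The projection error is therefore $|1/2-\alpha_t|\le\delta$, as the proposition requires.

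\textbf{Counting argument.} A correct algorithm must output exactly $\bx_{1,2,\alpha_t}$ on $\I_t$, so distinct instances force distinct outputs. A deterministic algorithm making at most $q$ binary queries has at most $2^q$ transcripts, so we need $2^q\ge M$, that is, $q\ge\log_2 M=\Omega(\log(1+\delta/\eps^2))$. The lower-order cases are covered because $M\ge1$ and $\delta\ge\eps^2$.

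For randomized algorithms we apply Yao's principle with the uniform distribution over $t$. A deterministic decision tree with $M$ leaves, one per output, has average depth at least $\log_2 M$ by Kraft's inequality. Hence the expected number of queries is $\Omega(\log M)$, and the worst case over $t$ is at least this average.
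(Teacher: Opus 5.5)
There is a genuine gap in where you place the turning points. Your agent gadgets are fine: any $p/q$ with reduced denominator $q\le Q:=1/\eps$ is realizable for both agents as you describe. Your Kraft/Yao counting step is also fine. But centering the family at $1/2$ with $\hat\bx=(1/2,1/2)$ is the worst possible choice.

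For any admissible turning point $p/q\neq 1/2$ with $q\le Q$, we have $|p/q-1/2|=|2p-q|/(2q)\ge 1/(2q)\ge \eps/2$. So whenever $\eps^2\le\delta<\eps/2$, the window of radius $\delta$ around $1/2$ contains exactly one admissible turning point, namely $1/2$ itself. Your family then has a single instance and the counting argument gives nothing.

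Yet in this regime $M=\lfloor\delta/\eps^2\rfloor+1$ can be $\Theta(1/\eps)$; take $\delta=\eps/4$, so $M\approx Q/4$. The proposition then demands $\Omega(\log(1/\eps))$ queries. Your heuristic that order-$Q$ Farey fractions near $1/2$ have gaps of order $\eps^2$ only holds on average over windows much longer than $\eps$. Around a fraction with small denominator $q$ there is a hole of width $\Theta(1/(qQ))$, which is about $\eps/2$ at $1/2$. Also, the cap $\lfloor 1/(2\eps)\rfloor$ is the binding term for \emph{large} $\delta\gtrsim\eps/2$, so it does not cover the small-$\delta$ case as you claim.

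The missing idea is to put the cluster where consecutive admissible turning points really are $\Theta(\eps^2)$ apart, i.e., where neighbouring denominators are both $\Theta(Q)$. The paper takes $\alpha_t=1/(Q-t)$ for $t=0,\dots,M-1$. Since $M\le Q/2$, the total spread is $\frac{M-1}{Q(Q-M+1)}\le\frac{2(M-1)}{Q^2}\le 2\delta$. Hence the midpoint prediction $\hat\alpha=\tfrac12(\alpha_0+\alpha_{M-1})$ lies within $\delta$ of every $\alpha_t$. The agents are $u_2=(0,(Q-t)\eps)$, $\tau_2=\eps$ and $u_1=((Q-t)\eps,0)$, $\tau_1=(Q-t-1)\eps$, giving exactly $M$ instances with distinct unique feasible lotteries. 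With that family your Kraft/Yao argument goes through unchanged, plus a trivial constant bound for the degenerate case $Q<4$.
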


\section{Conclusion and Future Work}
We study the feasibility problem underlying conservative deployment gates: using only accept/reject feedback to proposed lotteries, the goal is to find a unanimously acceptable lottery or certify infeasibility. We give query-efficient deterministic, randomized, and learning-augmented elicitation algorithms that exploit the halfspace geometry of acceptability on the simplex. Complementary lower bounds show unavoidable linear dependence on the number of stakeholders and $\Theta(\log(1/\varepsilon))$ dependence on the target precision (and $\Omega(m)$ even for a single stakeholder). 
In many governance settings, stakeholder judgments may be noisy, context-dependent, or costly to elicit at high precision.
Extending our guarantees to stochastic or approximate oracles is an important future direction.

\subsection*{Acknowledgments}
This work was supported by the Advanced Research + Invention Agency (ARIA).

\bibliographystyle{plainnat}
\bibliography{bib}

\appendix

\section{Additional Related Work}
\textbf{Pluralistic alignment and aggregation of diverse feedback.}
A growing body of alignment work emphasizes that advanced AI systems must serve users and stakeholders with diverse values, and that aggregating feedback is not merely an engineering detail but a normative and algorithmic problem. Position papers argue explicitly for importing tools from social choice into modern alignment pipelines, e.g., when human feedback diverges across raters or populations \citep{conitzer2024socialchoicealignment, sorensen2024pluralistic}. In practice, alignment procedures such as reinforcement learning from human feedback and related approaches elicit human judgments over candidate outputs and use them to shape model behavior \citep{OuyangEtAl2022InstructGPT,BaiEtAl2022ConstitutionalAI}. Our work is complementary in focus: rather than proposing a new aggregation rule or training method, we study the \emph{query complexity} of reaching a conservative ``deployment gate'' style guarantee in which \emph{every} stakeholder’s acceptability threshold is met. 
This viewpoint captures settings where decisions must satisfy all relevant constraints (e.g., safety, compliance, or governance checks) and where feedback collection is costly, motivating algorithms that obtain just enough information to either (i) output a unanimously acceptable lottery, or (ii) certify infeasibility.

\medskip
\noindent
\textbf{Lotteries.}
Lotteries are standard in decision theory and game theory \citep{vonNeumannMorgenstern1944, Nash1950, OsborneRubinstein1994}. 
Our use of lotteries is purely algorithmic: randomization enlarges the feasible set under linear (expected utility) acceptability constraints, but we do not study equilibrium selection or welfare optimality. 
From a mechanism design perspective, acceptability thresholds can be viewed as an individual participation or safety constraint, but we deliberately abstract away strategic reporting and incentive issues \citep{Myerson1981, NisanEtAl2007}; our focus is the information cost (queries) required to find a unanimously feasible decision in the presence of unknown constraints.

\medskip
\noindent
\textbf{Constrained decision-making and safe reinforcement learning.}
Constrained MDPs and safe RL study decision-making under constraints, including feasibility and optimization of expected returns subject to safety constraints \citep{Puterman1994, Altman1999, GarciaFernandez2015, SuttonBarto2018}. These works typically assume access to trajectories/samples and an explicit constraint specification (or a simulator), whereas our model isolates a different bottleneck: the cost of eliciting whether candidate (randomized) decisions satisfy each stakeholder’s acceptability constraint. Our results therefore complement, rather than replace, the safe RL literature: they provide query-efficient methods for the approval/constraint checking component that can appear in governance loops.

\medskip
\noindent
\textbf{Query models in game theory.}
A separate query-based line of work studies computing equilibria or best-response structure when payoffs are not explicitly given (e.g., via payoff queries or best-response queries). For example, \citet{GoldbergC21} study learning the partition of a mixed-strategy simplex induced by best responses, enabling equilibrium computation from such queries. Our oracle is strictly weaker (binary accept/reject rather than payoffs or best responses) and our goal is feasibility rather than equilibrium.

\medskip
\noindent
\textbf{Low-dimensional linear programming.} 
Once constraints are explicit (i.e., agents' preferences are fully known), unanimous acceptability reduces to feasibility of a low-dimensional linear program over the simplex. 
Classical convexity results such as Helly’s theorem and its refinements imply that feasibility (or infeasibility) can be certified by a small witness set of constraints (of size $\mathcal{O}(m)$) \citep{Helly1923, danzer1963helly, SharirWelzl1992LPType}. Randomized algorithms for low-dimensional linear programming (e.g., randomized incremental and sampling/reweighting methods) exploit precisely this structure \citep{Seidel1991LP,Clarkson1995LasVegas,MSW96SubexpLP}. 
Our randomized elicitation algorithm is inspired by this line of work, but under a strictly weaker oracle model: we do not receive violated constraints (separating hyperplanes) for free, and instead must learn an agent’s halfspace using accept/reject queries on lotteries. 
Our contribution is to show how this structure can still be leveraged in our query-limited setting, giving us an improved query complexity and accompanying lower bounds.

\medskip
\noindent
\textbf{Property testing/lower bounds for halfspaces.}
Lower bounds for learning or testing halfspaces in various oracle models provide useful context for our $\eps$-precision assumption and for understanding which tasks inherently require many queries. 
For example, \citet{chen2025halfspaces} prove lower bounds for testing halfspaces under a relative-error criterion in a property testing model.
While their model is different from ours, it reinforces that seemingly modest changes in feedback/oracle access can qualitatively change the complexity of halfspace-related tasks.

\medskip
\noindent
\textbf{Maximum feasible subsystems.}
Relaxing unanimity to maximize the number (or weight) of satisfied constraints connects to the \emph{maximum feasible subsystem} problem, which is NP-hard and hard to approximate in general \citep{amaldi1995maxFS}. 
Our paper shows that even under strong structural restrictions induced by lotteries over a simplex, optimizing the number of satisfied agents remains NP-hard, motivating unanimity as a clean baseline where the computational challenge lies primarily in elicitation rather than offline optimization.

\medskip
\noindent
\textbf{Portioning and budget aggregation.} 
Our model is also related to the portioning/budget aggregation literature, which studies how to aggregate opinions about how a divisible public resource (e.g., money or time) should be split across alternatives \citep{FreemanPePe21,AiriauAzCa23,ElkindGrLe24}. 
We refer the reader to the recent survey by \citet{SuksompongTe26} for a comprehensive overview of works in this area. 
At a geometric level, both settings choose a point in a simplex. The differences are semantic, informational, and objective-based: in portioning, the fractional vector is the implemented allocation itself and agents typically report richer information (e.g., ideal budget proposals, cardinal allocations, or ordinal rankings), with work focusing on truthfulness, proportionality, Pareto efficiency, or project fairness of aggregation rules \citep{FreemanPePe21,FreemanSc24,CaragiannisChPr24,ElkindGrLe24}. In contrast, our simplex point is a lottery over alternatives, acceptability is evaluated via expected utility against agent-specific thresholds, and we focus on the query complexity of finding any unanimously acceptable lottery or certifying infeasibility, rather than selecting a normatively optimal aggregate allocation.

\section{Omitted Content from Section~\ref{sec:prelims}} 

\subsection{Necessity of finite precision}\label{app:necessity_discretization}
We will explain why a finite-precision assumption is necessary to obtain finite query algorithms for \emph{exact} unanimity. In the main body of the paper, we enforce this by assuming each $u_i(s_j)$ and $\tau_i$ is an integer multiple of $\eps$,
which implies that the relevant boundary/threshold locations are rationals with bounded
denominators.

\begin{proposition}[Necessity of finite precision]\label{prop:finite-precision-necessary}
If agent utilities and thresholds are allowed to be arbitrary real numbers in $[0,1]$
(with no finite-precision restriction), then there is no algorithm
that, using finitely many membership queries, always outputs a unanimously acceptable lottery when one exists (and correctly outputs $\Null$ otherwise). This holds even for $n=m=2$.
\end{proposition}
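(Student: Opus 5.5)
We argue by an adversary (indistinguishability) construction with $n=m=2$. A lottery is a point $\bx_{1,2,\alpha}=(1-\alpha)\mathbf{e}_1+\alpha\mathbf{e}_2$ with $\alpha\in[0,1]$. Agent $1$ has $\mathbf{u}_1=(1,0)$ and threshold $\tau_1=1-t$, so it accepts exactly $\alpha\le t$. Agent $2$ has $\mathbf{u}_2=(0,1)$ and threshold $\tau_2=t'$, so it accepts exactly $\alpha\ge t'$. The instance is feasible iff $t'\le t$. When $t=t'$ the unique unanimously acceptable lottery is $\bx_{1,2,t}$.

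Fix any deterministic algorithm that always terminates after finitely many queries. Suppose toward contradiction that it is correct on all real-parameter instances. Run it against an adversary that keeps an open interval $I=(a,b)\subseteq(0,1)$ of ``still possible'' boundary values, starting from $I=(1/4,3/4)$. When the algorithm queries agent $1$ or agent $2$ at $\alpha$, the adversary answers as if both boundaries lie at some common point $\theta\in I$ not yet committed. Concretely, if $\alpha\le a$ it answers ``agent $1$ accepts, agent $2$ rejects''; if $\alpha\ge b$ it answers the reverse. If $\alpha\in I$, it shrinks $I$ to a nonempty open subinterval avoiding $\alpha$ and answers consistently with that side. Every answer is then consistent with every instance $(t,t')$ with $t,t'\in I$. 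Since the algorithm terminates after finitely many queries $q$, some nonempty open interval $I^\star$ remains at the end.

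Now derive the contradiction. Pick $\theta\in I^\star$ and consider two instances. In the first, $t=t'=\theta$: it is feasible with unique solution $\bx_{1,2,\theta}$. In the second, $t=\theta-\eta$ and $t'=\theta+\eta$ with $\eta>0$ small enough that both lie in $I^\star$: it is infeasible. The transcripts are identical, so the output is identical. If the output is $\Null$, the algorithm fails on the first instance. If the output is a lottery, it fails on the second. Hence no correct finite-query deterministic algorithm exists.

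To cover randomized algorithms (or algorithms whose query count is finite but unbounded on every run), fix the random seed. Each run makes finitely many queries, and the argument above applies run by run, since the adversary's construction depends only on the queries actually made. Alternatively, observe directly that a single lottery output can be correct for at most one value of $\theta$ among the feasible instances. The step requiring the most care is the adversary bookkeeping: every answer must remain consistent with both the feasible and the infeasible instance simultaneously. This is why both boundaries are kept inside the same shrinking open interval rather than tracked separately.
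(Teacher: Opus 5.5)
Your indistinguishability idea is sound and differs from the paper's proof, but there is a gap at the step ``Since the algorithm terminates after finitely many queries $q$, some nonempty open interval $I^\star$ remains.'' The hypothesis only says the algorithm halts on every \emph{instance}. Your adversary's answer sequence need not be the transcript of any single instance, because nested open intervals can have empty intersection.

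For example, consider an algorithm that queries agent $2$ at $\alpha=1/2$, halts if this is accepted, and otherwise queries agent $2$ at $1/2+1/k$ for $k=5,6,\dots$ until it sees a rejection. It halts on every instance, since any $t'>1/2$ eventually exceeds $1/2+1/k$. Yet your rules let the adversary shrink to $(1/2,3/4)$ and then to $(1/2,1/2+1/k)$, answering ``accept'' forever. So termination against the adversary is exactly what needs proof when query counts are finite but unbounded, which is the case you explicitly claim to cover. If you instead assume termination on every answer sequence, the argument goes through, but it proves a weaker statement than the paper's, whose proof needs only per-instance finiteness.

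The repair is a compactness tweak. At every query, replace $I$ by an open subinterval $I'$ with $\overline{I'}\subset I$ and $\alpha\notin\overline{I'}$. Suppose the algorithm ran forever against this adversary. The closures $\overline{I_k}$ would be nested nonempty compact intervals, so some $\theta^\dagger\in\bigcap_k\overline{I_k}\subseteq\bigcap_k I_k$ exists. The genuine instance $t=t'=\theta^\dagger$ would then reproduce every adversary answer and make the algorithm run forever, a contradiction.

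With this fix your argument is complete, and it gives something the paper's does not: it shows that even \emph{deciding} feasibility is impossible with finitely many queries. The paper's proof instead uses only that a correct algorithm must output one of uncountably many distinct exact lotteries $\bx_t$, while finite binary transcripts (hence outputs) are countable. That counting argument needs no adversary bookkeeping and handles per-instance finiteness directly. Your closing ``alternative'' remark becomes exactly the paper's proof once you add that countability step.
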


\begin{proof}
Consider $m=2$ alternatives $S = \{s_1,s_2\}$. 
Any lottery can be identified with a single scalar $\alpha \in [0,1]$, meaning $\bx_\alpha = (\alpha,1-\alpha)$.
For each $t \in (0,1)$, define an instance $\I_t$ with two agents as follows:
\begin{itemize}
    \item Agent 1 has $u_1(s_1)=1$, $u_1(s_2)=0$, and threshold $\tau_1=t$.
      Then $U_1(\bx_\alpha)=\alpha$, so agent~$1$ accepts exactly the interval $[t,1]$.
\item Agent 2 has $u_2(s_1)=0$, $u_2(s_2)=1$, and threshold $\tau_2=1-t$.
      Then $U_2(\bx_\alpha)=1-\alpha$, so agent~$2$ accepts exactly the interval $[0,t]$.
\end{itemize}

Hence, the unanimously acceptable set in $\I_t$ is exactly the singleton $\{\bx_t\}$.

Now consider any (deterministic) algorithm $\mathcal{A}$ that is correct on every instance and that makes finitely many queries on every instance. Run $\mathcal{A}$ on the family $\{\I_t : t\in(0,1)\}$. 
Each membership query made by $\mathcal{A}$ returns a single bit,
so every execution produces a finite transcript over $\{0,1\}$; therefore, across all $t$, the set of possible transcripts is countable, and consequently the set of possible outputs of $\mathcal{A}$ is also countable.
However, correctness requires that on input $\I_t$, the algorithm outputs \emph{exactly} $\bx_t$ (it cannot output $\Null$, since $\I_t$ is feasible), and the map $t \mapsto x_t$ is injective over the uncountable set $(0,1)$. This is impossible if $\mathcal{A}$ has only countably many possible outputs. Therefore no such finite-query
algorithm exists.

The same impossibility immediately applies to any randomized algorithm that is required to be correct with probability $1$: fixing the internal randomness gives us a deterministic algorithm, to which the above argument applies.
\end{proof}

\subsection{Finite precision as $\eps$-robust unanimity} \label{sec:finite_approximate_eps}
Although our main model treats the $\eps$-precision values as the ground truth, it is
often useful to interpret the same assumption as capturing limited measurement/communication
precision: agents may only be able to report utilities and thresholds up to granularity
$\eps$.
Fix any underlying ``continuous'' instance with real-valued parameters
$\bar u_i(s_j)\in[0,1]$ and $\bar\tau_i\in(0,1]$.
Let $u_i(s_j)$ and $\tau_i$ be $\eps$-quantized versions satisfying
\begin{equation*}
    \max_{j\in[m]} |u_i(s_j)-\bar u_i(s_j)| \le \eps \quad\text{and}\quad |\tau_i-\bar\tau_i|, \le \eps \quad\text{for all } i\in N.
\end{equation*}
Write $\bar U_i(\bx)=\sum_{j=1}^m x_j \bar u_i(s_j)$ and $U_i(\bx)=\sum_j x_j u_i(s_j)$.
Then, we have the following result.
\begin{lemma}\label{lem:quantization-robust}
For any agent $i$ and any lottery $\bx\in\Delta(S)$, $|U_i(\bx)-\bar U_i(\bx)| \le \eps$.
Consequently, for any lottery $\bx\in\Delta(S)$:
\begin{equation*}
    \Bigl(\forall i\in N,\ \bar U_i(\bx)\ge \bar\tau_i + 2\eps\Bigr)
    \Longrightarrow
    \Bigl(\forall i\in N,\ U_i(\bx)\ge \tau_i\Bigr)
    \Longrightarrow
    \Bigl(\forall i\in N,\ \bar U_i(\bx)\ge \bar\tau_i - 2\eps\Bigr).
\end{equation*}
\end{lemma}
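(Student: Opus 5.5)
The plan is to first prove the pointwise bound and then read off the two implications from it by the triangle inequality. For the pointwise bound, fix an agent $i$ and a lottery $\bx\in\Delta(S)$. By linearity,
\[
U_i(\bx)-\bar U_i(\bx)=\sum_{j=1}^m x_j\bigl(u_i(s_j)-\bar u_i(s_j)\bigr).
\]
The weights satisfy $x_j\ge 0$ and $\sum_j x_j=1$, so this difference is a convex combination of the coordinatewise errors. Each error has absolute value at most $\eps$, so
\[
|U_i(\bx)-\bar U_i(\bx)|\le \sum_j x_j\,|u_i(s_j)-\bar u_i(s_j)|\le \eps\sum_j x_j=\eps.
\]
The only point needing care is that nonnegativity and normalization of $\bx$ are what keep the error from growing with $m$.

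For the first implication, assume $\bar U_i(\bx)\ge \bar\tau_i+2\eps$ for every $i$. Combining the pointwise bound with $|\tau_i-\bar\tau_i|\le\eps$ gives
\[
U_i(\bx)\ge \bar U_i(\bx)-\eps\ge \bar\tau_i+\eps\ge \tau_i.
\]

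For the second implication, assume $U_i(\bx)\ge\tau_i$ for every $i$. Then
\[
\bar U_i(\bx)\ge U_i(\bx)-\eps\ge \tau_i-\eps\ge \bar\tau_i-2\eps.
\]
Both implications hold agentwise, so they hold for all $i\in N$ simultaneously.

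The argument is entirely routine. The one subtlety is the first implication: it uses a total slack of $2\eps$ (one $\eps$ from the utilities and one from the threshold), and the argument only requires the margin to be at least that large, so the stated $2\eps$ suffices.
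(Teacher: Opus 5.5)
Your proof is correct and is essentially the paper's own argument: the pointwise bound via the convex combination $\sum_j x_j\,|u_i(s_j)-\bar u_i(s_j)|\le\eps$, followed by the two one-line chains $U_i(\bx)\ge \bar U_i(\bx)-\eps\ge\bar\tau_i+\eps\ge\tau_i$ and $\bar U_i(\bx)\ge U_i(\bx)-\eps\ge\tau_i-\eps\ge\bar\tau_i-2\eps$. Nothing is missing.
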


\begin{proof}
Since $x$ is a probability vector,
\begin{equation*}
    |U_i(\bx)-\bar U_i(\bx)|
=\left|\sum_{j=1}^m x_j\big(u_i(s_j)-\bar u_i(s_j)\big)\right|
\le \sum_{j=1}^m x_j\,|u_i(s_j)-\bar u_i(s_j)|
\le \eps.
\end{equation*}
For the forward implication, if $\bar U_i(\bx)\ge \bar\tau_i + 2\eps$ then
$U_i(\bx)\ge \bar U_i(\bx)-\eps \ge \bar\tau_i+\eps \ge \tau_i$.
For the backward implication, if $U_i(\bx)\ge\tau_i$ then
$\bar U_i(\bx)\ge U_i(\bx)-\eps \ge \tau_i-\eps \ge \bar\tau_i-2\eps$.
\end{proof}

\paragraph{Interpretation.}
Lemma~\ref{lem:quantization-robust} shows that $\eps$-precision on the agent parameters
induces an $\Theta(\eps)$-robust notion of unanimity: it rules out instances where
feasibility relies on knife-edge comparisons that are unstable under perturbations of size
$\mathcal{O}(\eps)$ in the reported utilities/thresholds. In particular, if there exists a lottery
that clears every agent's true threshold by a margin $2\eps$, then it remains unanimously
acceptable after $\eps$-quantization; and any unanimously acceptable lottery in the
quantized instance is within $2\eps$ of being acceptable in the underlying continuous
instance.

% --- Appendix B.2: append the following at the end of the subsection.

\paragraph{Additional remarks on the $\eps$-precision model.}
Our algorithmic upper bounds in \Cref{sec:ub} do not rely on any further relationship between $m$ and $\eps$.
For some information-theoretic lower bounds, we assume $m \le (1/\eps)^{1-\delta}$ for a fixed constant
$\delta \in (0,1]$ (in particular, $m \le 1/\eps$), which avoids a degenerate setting where the positive
$\eps$-grid on the simplex is tiny and the resulting counting bounds become vacuous.
Unless otherwise stated, we work throughout in the $\eps$-precision model above.
In particular, all constraints have rational coefficients, and whenever $\bigcap_{i\in N} \mathcal{A}_i \neq \varnothing$
it contains a lottery with rational coordinates.
Thus, we may restrict all oracle queries and algorithm outputs to rational lotteries; for readability we write
$\bx \in \Delta(S)$ and suppress the qualifier $\bx \in \Delta(S) \cap \mathbb{Q}^m$.

\subsection{Beyond Unanimity: Computational Intractability of Partial Acceptance Objectives} \label{app:max_accept_np-hard}
Our main goal in this paper is feasibility under a conservative consent constraint: either find a lottery that every agent deems acceptable, or certify that no such lottery exists.
A natural relaxation is to ask for a lottery that is acceptable to \emph{as many agents as possible}. 
Formally, define the (acceptability) welfare of a lottery $\bx$ as $W(\bx) := |\{ i \in N : U_i(\bx) \geq \tau_i \} |$.
Then, the optimization problem (denote as \textsc{Max-Accept-Lottery}) would ask: find a lottery $\bx \in \Delta(S)$ maximizing $W(\bx)$; whereas the decision variant (denote as \textsc{$k$-Accept-Lottery}) would ask: given an integer $k \in \mathbb{Z}_+$, decide whether there exists $\bx \in \Delta(S)$ such that $W(\bx) \ge k$.

At a high level, \textsc{$k$-Accept-Lottery} resembles the \emph{maximum feasible subsystem} problem (\textsc{Max-FS}), which asks to satisfy as many linear inequalities as possible and is NP-hard \citep{amaldi1995maxFS}.
Indeed, writing each agent's acceptability condition as a linear constraint $\langle \mathbf{u}_i, \bx \rangle \ge \tau_i$ and adding the simplex constraints $x_1,\dots,x_m \ge 0$ and $\sum_{j =1}^m x_j = 1$, \textsc{$k$-Accept-Lottery} can be seen as a structured subclass of \textsc{Max-FS}.
Thus, \textsc{Max-FS} hardness results do not immediately apply under our strong structural restrictions (nonnegative coefficients, a shared normalization constraint, etc.). 
The following theorem shows that the problem remains NP-hard even in this special case.

\begin{theorem}
    \textsc{$k$-Accept-Lottery} is NP-hard, even when $u_i(s_j) \in \{0,1\}$ for every $i \in N$ and $j \in [m]$.
\end{theorem}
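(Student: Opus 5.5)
We reduce from a classical NP-hard graph problem; the natural candidate is \textsc{Independent Set} (equivalently \textsc{Clique}), since binary utilities make each acceptability constraint a statement about the probability mass placed on a subset of alternatives. Given a graph $G=(V,E)$ with $|V|=m$ and a target $k$, we take one alternative $s_v$ per vertex and build a \textsc{$k'$-Accept-Lottery} instance with $u_i(s_j)\in\{0,1\}$ and thresholds on the $\eps$-grid, choosing $\eps$ polynomially small (e.g.\ $\eps = 1/\mathrm{poly}(m)$ with $1/\eps\in\mathbb{Z}$). The idea is that a lottery ``selects'' a vertex set $T$ by putting mass about $1/k$ on each vertex of $T$, and the agents are designed so that accepted agents count selected vertices while edges inside $T$ cost acceptance.

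First I would fix the gadgets. For each vertex $v$ I create a block of $P$ identical ``vertex agents'' with $\mathbf{u}=\mathbf{e}_v$ and threshold $\tau=1/k$ (rounded onto the grid; choose $\eps$ dividing $1/k$, e.g.\ $\eps = 1/(k\cdot L)$). Such an agent accepts iff $x_v\ge 1/k$. Because the total mass is $1$, at most $k$ vertex blocks can be satisfied, and exactly $k$ are satisfied only if $\bx$ is uniform $1/k$ on a $k$-set $T$. For each edge $\{v,w\}$ I create one ``edge agent'' with $\mathbf{u}=\mathbf{1}-\mathbf{e}_v-\mathbf{e}_w$ (binary) and threshold $1-1/k-\eps$ (or a suitable grid value strictly between $1-2/k$ and $1-1/k$). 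It accepts iff $x_v+x_w\le 1/k+\eps$, so it rejects whenever both endpoints carry mass $1/k$. Set $k' = kP + |E|$. Taking $P>|E|$ makes satisfying $k$ vertex blocks strictly dominate any gain from edge agents.

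Then I would prove the two directions. If $T$ is an independent set of size $k$, the uniform lottery on $T$ satisfies all $kP$ vertex agents, and every edge has at most one endpoint in $T$, so $x_v+x_w\le 1/k$ and all edge agents accept; hence $W(\bx)\ge k'$. Conversely, if $W(\bx)\ge kP+|E|$, then since at most $|E|<P$ agents are edge agents, at least $k$ vertex blocks must be fully satisfied (vertex agents in a block behave identically), forcing $\bx$ to be exactly uniform $1/k$ on some $k$-set $T$; then all $|E|$ edge agents must also accept, which rules out any edge inside $T$, so $T$ is independent. The construction is polynomial since $P=|E|+1$ and the number of agents is $mP+|E|$.

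The main obstacle is the counting argument in the converse together with grid compatibility: I must check carefully that satisfying $k$ blocks genuinely forces the exact uniform lottery (it does since the masses $\ge 1/k$ on $k$ coordinates already sum to $1$), and that the edge threshold can be placed on the $\eps$-grid so that $x_v+x_w=2/k$ is rejected while $\le 1/k$ is accepted; choosing $\eps\le 1/(2k)$ with $1/\eps$ a multiple of $k$ and $\tau = 1-1/k$ handles this cleanly (accept iff $x_v+x_w\le 1/k$). I would also note that thresholds lie in $(0,1]$ as required.
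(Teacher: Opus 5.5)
Your reduction is correct, but it takes a different route from the paper's.

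\textbf{The paper's route.} The paper reduces from \textsc{Clique} with three agent types:
\begin{enumerate}
\item $R=m+|E|+1$ replicated ``upper-bound'' agents per vertex, accepting iff $x_v\le 1/\kappa$;
\item one ``lower-bound'' agent per vertex, accepting iff $x_v\ge 1/\kappa$;
\item edge agents that \emph{reward} selecting both endpoints, accepting iff $x_u+x_v\ge 2/\kappa$.
\end{enumerate}
The target is $Rm+\kappa+\binom{\kappa}{2}$. Because these edge agents can be satisfied by piling mass onto a single high-degree vertex, the paper needs the heavily replicated upper-bound agents to rule that out. Its converse also needs a counting step using the monotonicity of $y\mapsto y(y+1)/2$.

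\textbf{Your route.} You reduce from \textsc{Independent Set} and make the edge agents \emph{penalize} selecting both endpoints: utility $\mathbf{1}-\mathbf{e}_v-\mathbf{e}_w$ and threshold $1-1/k$, so they accept iff $x_v+x_w\le 1/k$. This removes the need for upper-bound agents. The simplex constraint alone caps the number of satisfiable vertex blocks at $k$, and it forces the uniform lottery on a $k$-set when $k$ blocks are satisfied. Your target $kP+|E|$ is then exactly the maximum achievable welfare, so $W(\bx)\ge k'$ simply forces every edge agent to accept, and no $\binom{\kappa}{2}$-type counting is needed.

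\textbf{Two small remarks.}
\begin{itemize}
\item Your replication is superfluous. The step ``vertex agents contribute at least $kP$ acceptances'' follows from $W(\bx)\ge kP+|E|$ together with there being only $|E|$ edge agents, for any $P\ge 1$. So $P=1$ already works, and $P>|E|$ is not what drives the argument.
\item You should assume $k\ge 2$ explicitly. This is harmless, since $k=1$ is trivial, and it ensures the edge threshold $1-1/k$ lies in $(0,1]$. Then $\eps=1/k$ already places all thresholds on the grid.
\end{itemize}

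\textbf{Comparison.} Your construction is leaner: two agent types, no large replication factor, and a one-line converse. The paper's construction has the more ``positive'' reading, in which acceptances directly count the edges of the selected set, at the cost of the extra agent type and the extra counting argument.
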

\begin{proof}
    We reduce from the classical NP-complete problem \textsc{Clique}.
    An instance of \textsc{Clique} is given by an undirected graph $G = (V,E)$ and an integer $\kappa \geq 2$; it is a yes-instance if $G$ contains a clique of size $\kappa \geq 2$, and a no-instance otherwise.

    Given an instance $(G,\kappa)$ of \textsc{Clique} with $|V| = m$ and $|E| = M$, we construct an instance of \textsc{$k$-Accept-Lottery} with $m$ alternative and a polynomial number of agents.

    Create one alternative per vertex, i.e., $S = \{s_v : v\in V\}$, where $|S| = m$.

    Let $R  := m + M + 1$. Note that $R$ is polynomial in the size of $G$.

    Define three agent types; all utilities are in $\{0,1\}$:
    \begin{itemize}
        \item \textbf{Type A: ``Upper bound'' agents.}
        For each vertex $v \in V$, create $R$ identical agents $a_{v,1},\dots, a_{v,R}$. For every agent $i \in N_A$, let
        \begin{equation*}
            u_i(s_v) = 0 \text{ and } u_i(s_w) = 1 \text{ for all } w \neq v,
        \end{equation*}
        and threshold $\tau_i = 1-1/\kappa$.
        Then, for any lottery $\bx$, $U_i(\bx) = \sum_{w \neq v} x_w = 1 - x_v$, and thus
        \begin{equation*}
            U_i(\bx) \geq \tau_i \iff 1 - x_v \geq 1-1/\kappa \iff x_v \leq 1/\kappa.    
        \end{equation*}
        Intuitively, every Type A agent created from vertex $v$ only accepts a lottery if the probability mass placed on $v$ is at most $1/\kappa$.
        
    \item \textbf{Type B: ``Lower bound'' agents.}
    For each vertex $v \in V$, create one agent.
    For every agent $i \in N_B$, let 
    \begin{equation*}
        u_i(s_v) = 1 \text{ and } u_i(s_w) = 0 \text{ for all } w \neq v,
    \end{equation*}
    and threshold $\tau_i = 1/\kappa$.
    Then, for any lottery $\bx$, $U_i(\bx) = x_v$, and thus
    \begin{equation*}
        U_i(\bx) \geq \tau_i \iff x_v \geq 1/\kappa.
    \end{equation*}
    Intuitively, every Type B agent created from vertex $v$ only accepts a lottery if the probability mass placed on $v$ is at least $1/\kappa$.
    
    \item \textbf{Type C: Edge agents.}
    For each edge $\{u,v\} \in E$, create one agent.
    For every agent $i \in N_C$ created from edge $\{u,v\}$, let
    \begin{equation*}
        u_i(s_u) = u_i(s_v) = 1 \text{ and } u_i(s_w) = 0 \text{ for all } w \notin \{u,v\},
    \end{equation*}
    and threshold $\tau_i = 2/\kappa$.
    Then, for any lottery $\bx$, $U_i(\bx) = x_u + x_v$, and thus
    \begin{equation*}
        U_i(\bx) \geq \tau_i \iff x_u + x_v \geq 2/\kappa.
    \end{equation*}
    \end{itemize}
    Now, the total number of agents is $n = |N_A| + |N_B| + |N_C| = Rm + m + M$, which is polynomial in $m + M$.

    Now, set $k := Rm + \kappa + \binom{\kappa}{2}$.
    We claim that $(G,\kappa)$ is a yes-instance of \textsc{Clique} if and only if the constructed instance has a lottery $\bx \in \Delta(S)$ with $W(\bx) \geq k$.

    $(\Rightarrow)$
    Assume $G$ contains a clique $C \subseteq V$ with $|C| = \kappa$.
    Define the lottery $\bx$ as follows:
    \begin{equation*}
        x_v = 
        \begin{cases}
            1/\kappa & \text{if } v \in C, \\
            0 & \text{otherwise}.
        \end{cases}
    \end{equation*}
    Now, all Type A agents will accept $\bx$: 
    For every vertex $v$, we have $x_v \leq 1/\kappa$, so each of the $R$ agents corresponding to each vertex $v$ will accept $\bx$. Hence, we have $Rm$ accepting agents from Type A.

    Exactly $\kappa$ Type B agents will accept $\bx$:
    For $v \in C$, $x_v = 1/\kappa$, so any agent corresponding to a vertex in $C$ will accept; whereas for $v \notin C$, $x_v = 0$, so any agent corresponding to a vertex not in $C$ will reject.
    Thus, there is exactly $\kappa$ accepting agents from Type B.

    Exactly $\binom{\kappa}{2}$ Type C agents will accept $\bx$:
    For any edge $\{u, v\}$ with $u,v \in C$, we have $x_u + x_v = 2/\kappa$, so an agent corresponding to the edge $\{u,v\}$ will accept.
    Since $C$ is a clique, it has all $\binom{\kappa}{2}$ edges, so at least $\binom{\kappa}{2}$ Type C agents that accept.

    Thus, 
    \begin{equation*}
        W(\bx) \geq Rm + \kappa + \binom{\kappa}{2} = k.
    \end{equation*}

    $(\Leftarrow)$
    Assume there exists a lottery $\bx \in \Delta(S)$ such that $W(\bx) \geq k$.
    We prove three claims.

    \paragraph{Claim 1: $x_v \leq 1/\kappa$ for every vertex $v \in V$.}
    Suppose for a contradiction that for some vertex $v$, $x_v > 1/\kappa$.
    Then every one of the $R$ Type $A$ agents associated with $v$ rejects $\bx$ (since they accept $\bx$ if and only if $x_v \leq 1/\kappa$).
    Thus, we lost at least $R$ acceptances from Type A agents. Even if all other agents (from Type B and Type C) accepted, then
    \begin{align*}
        W(\bx) & \leq R(m-1) + m + M \\
        & = Rm - R + m + M \\
        & = Rm - (m+M+1) + m + M \\
        & = Rm - 1 \\
        & < Rm + \kappa + \binom{\kappa}{2} =k,
    \end{align*}
    giving us a contradiction.
    Thus, $x_v \leq 1/\kappa$ for all $v \in V$.

    \paragraph{Claim 2: At most $\kappa$ Type B agents accept, and their corresponding vertices have probability mass exactly $1/\kappa$.}
    Let $T := \{v \in V : x_v \geq 1/\kappa\}$ be the set of vertices whose Type $B$ agents accept $\bx$.
    Now, we know from Claim 1 that $x_v \leq 1/\kappa$ and together with $x_v \geq 1/\kappa$ imply that $x_v = 1/\kappa$ for all $v \in T$.
    Then,
    \begin{equation*}
        1 = \sum_{v \in V} x_v \geq \sum_{v \in T} x_v = |T| \cdot \frac{1}{\kappa},
    \end{equation*}
    giving us $|T| \leq \kappa$.
    However, since $W(\bx) \geq k$ and by Claim 1, all $Rm$ Type A agents accept, thus, among Types B and C agents combined, we must have at least
    \begin{equation*}
        W(\bx) - Rm = k - Rm = \kappa + \binom{\kappa}{2}
    \end{equation*}
    agents that accept $\bx$.

    \paragraph{Claim 3: A Type C agent with corresponding edge $\{u,v\}$ can accept $\bx$ only if both $u,v \in T$.}
    Note that in order for such an agent to accept $\bx$, we must have that $x_u + x_v \geq 2/\kappa$, by definition.
    Then, since $x_u \leq 1/\kappa$ and $x_v \leq 1/\kappa$, we must necessarily have that $x_u = x_v = 1/\kappa$.
    Thus, $u, v \in T$.

    Therefore, the number of Type C agents that accept $\bx$ is exactly the number of edges of $G$ with both endpoints in $T$.
    For any undirected simple graph with $|T|$ vertices, the number of edges is at most $\binom{|T|}{2}$.
    Thus, the total number of agents which are Types B or C that accept $\bx$ is at most $|T| + \binom{|T|}{2}$.
    However, we established earlier that we need at least $\kappa + \binom{\kappa}{2}$ agents among Types B and C to accept $\bx$, so 
    \begin{equation} \label{eqn:np-h_claim3_1}
        |T| + \binom{|T|}{2} \geq \kappa + \binom{\kappa}{2}.
    \end{equation}

    Now, let
    \begin{equation*}
        f(y) := y + \binom{y}{2} = y + \frac{y(y-1)}{2} = \frac{y(y+1)}{2}.
    \end{equation*}
    Then, for any integer $y \geq 0$, it is clear that $f(y)$ is strictly increasing.
    Then, since $|T| \leq \kappa$ (as established in the proof of Claim 2), we have that $f(|T|) \leq f(\kappa)$.
    Combining this with (\ref{eqn:np-h_claim3_1}) which gives us $f(\kappa) \leq f(|T|)$, we get that
    \begin{equation*}
        f(\kappa) \leq f(|T|) \leq f(\kappa),
    \end{equation*}
    essentially giving us 
    \begin{equation} \label{eqn:np-h_claim3_2}
        f(\kappa) = f(|T|) \implies \kappa + \binom{\kappa}{2} = |T| + \binom{|T|}{2}.
    \end{equation}
    By the strictly increasing property of $f$, we get that $|T| = \kappa$.
    Combining this with (\ref{eqn:np-h_claim3_2}), we get that $\binom{|T|}{2} = \binom{\kappa}{2}$.
    Since $|T| = \kappa$, exactly $\kappa$ Type B agents accept $\bx$ (namely, those corresponding to vertices in $T$).
    Moreover, because $W(\bx)\geq k$ and all $Rm$ Type A agents accept by Claim~1, we need at least
    $\kappa + \binom{\kappa}{2}$ acceptances coming from Types B and C combined. Hence, at least $\binom{\kappa}{2}$
    Type C agents must accept $\bx$.
    
    On the other hand, by Claim~3, a Type C agent can accept $\bx$ only if its corresponding edge has both
    endpoints in $T$. There is one Type C agent per edge, and among $|T|=\kappa$ vertices there are at most
    $\binom{|T|}{2}=\binom{\kappa}{2}$ such edges. Therefore, at most $\binom{\kappa}{2}$ Type C agents can accept $\bx$.
    Combining the two bounds, we conclude that exactly $\binom{\kappa}{2}$ Type C agents accept $\bx$, which implies
    that every unordered pair of distinct vertices in $T$ forms an edge in $G$. Thus, the induced subgraph on $T$
    is complete, i.e., $T$ is a clique of size $\kappa$ in $G$.
\end{proof}

This hardness result complements our earlier conceptual motivation for unanimity.
For unanimity, once each agent's acceptability halfspace is learned, the remaining task is simply to test feasibility (and select a witness) via linear programming, so the central bottleneck is elicitation via queries.
In contrast, moving beyond unanimity to objectives based on maximizing $W(\bx)$ introduces computational intractability even when the full instance $(\mathbf{u}_i,\tau_i)_{i\in N}$ is given explicitly. 
% \todo[inline]{maybe say even if got enough/unlimited queries to learn agents' preferences fully, still NP-hard to optimize such an objective. But be careful don't shoot down computationally our unanimity objective too---distinguish it!}
%Thus, in the oracle model studied here, optimizing partial acceptance would compound the elicitation challenge with an NP-hard optimization step.
For this reason, we treat unanimity as a clean baseline for conservative, safety-oriented collective decision making, and we leave approximation or structure-exploiting relaxations beyond unanimity to future work.

% \paragraph{Remark.}
% The reduction underlying \Cref{thm:np_hard_optimization} uses lotteries with rational probabilities (e.g., $1/k$ on selected alternatives). Consequently, the NP-hardness persists under standard finite-precision encodings (polynomial bit complexity), and also under any discretization $X_\eps$ that is fine enough to represent these rationals. This hardness is therefore orthogonal to our discretization assumption, which is introduced to make query complexity well-defined.

%Technically, we could also formalize a precise interactive problem statement (where the instance is only accessible via $\tQ(i,x)$ and the algorithm must be worst-case correct) and show that this NP-hardness embeds into that oracle model as well, but the core hurdle is already present in the offline maximization problem above.

\section{Omitted Proofs from \Cref{sec:ub}}
\subsection{$\mathrm{ExactThreshold}$ Subroutine} \label{app:exact_threshold}
We first establish some preliminaries.
Fix an agent $i\in N$ and indices $k,k'\in[m]$ such that
$\tQ(i, \mathbf{e}_k)=\texttt{False}$ and $\tQ(i, \mathbf{e}_{k'})=\texttt{True}$.
Recall the edge lotteries
\begin{equation*}
    \bx_{k,k',\alpha}:=\alpha \mathbf{e}_{k'}+(1-\alpha) \mathbf{e}_k,\qquad \alpha\in[0,1].
\end{equation*}
Along this edge,
\begin{equation*}
    U_i(\bx_{k,k',\alpha})=(1-\alpha)u_i(s_k)+\alpha u_i(s_{k'}) = u_i(s_k)+\alpha(u_i(s_{k'})-u_i(s_k)).
\end{equation*}
Since $\tQ(i,\mathbf{e}_k)=\texttt{False}$ implies $u_i(s_k)<\tau_i$ and $\tQ(i,\mathbf{e}_{k'})=\texttt{True}$ implies $\tau_i\le u_i(s_{k'})$,
we have that  $u_i(s_{k'})-u_i(s_k)>0$, so $U_i(x_{k,k',\alpha})$ is strictly increasing in $\alpha$.
Thus, there is a unique
\begin{equation*}
    \alpha^* := \inf\{\alpha\in[0,1]: \tQ(i,x_{k,k',\alpha})=\texttt{True}\}\in(0,1]
\end{equation*}
such that $\tQ(i,x_{k,k',\alpha})=\texttt{True}$ if and only if $\alpha\ge \alpha^*$.

Solving $U_i(x_{k,k',\alpha^*})=\tau_i$ gives
\begin{equation} \label{eq:alpha-star-closed-form}
    \alpha^* = \frac{\tau_i-u_i(s_k)}{u_i(s_{k'})-u_i(s_k)}.
\end{equation}
Under $\eps$-precision, let $u_i(s_k)=a\eps$, $u_i(s_{k'})=b\eps$, and $\tau_i=t\eps$ for integers
$a,b,t$.
Then $b-a\in\{1,2,\dots,1/\eps\}$ and $t-a\in\{1,2,\dots,b-a\}$, so \eqref{eq:alpha-star-closed-form} equals $(t-a)/(b-a)$ and therefore
can be written in lowest terms as $p/q$ with
\begin{equation} \label{eq:denom-bound}
    1\le q \le b-a \le 1/\eps.
\end{equation}
Let $Q:=1/\eps\in\mathbb{Z}_{>0}$.
Any two distinct rationals $p/q\neq p'/q'$ with $1\le q,q'\le Q$ satisfy
\begin{equation*}
    \left|\frac{p}{q}-\frac{p'}{q'}\right|
    =\frac{|pq'-p'q|}{qq'}
    \ge \frac{1}{qq'}
    \ge \frac{1}{Q^2}
    =\eps^2.
\end{equation*}
Consequently, any interval of length $<\eps^2/2$ contains \emph{at most one} rational number whose reduced denominator is at most $Q$.

We now introduce the $\mathrm{ExactThreshold}$ subroutine.
\paragraph{$\mathrm{ExactThreshold}(i,k,k')$}
The subroutine uses membership queries only in the bisection phase.

\begin{enumerate}
\item Initialize $\ell\gets 0$ and $u\gets 1$.
Note that $\tQ(i,x_{k,k',\ell})=\tQ(i,e_k)=\texttt{False}$ and $\tQ(i,x_{k,k',u})=\tQ(i,e_{k'})=\texttt{True}$.
\item While $u-\ell \ge \eps^2/2$:
  \begin{enumerate}
  \item Let $m\gets (\ell+u)/2$ and query $\tQ(i,x_{k,k',m})$.
  \item If the answer is \texttt{True}, set $u\gets m$; otherwise set $\ell\gets m$.
  \end{enumerate}
  Throughout the loop, the invariant
  $\tQ(i,x_{k,k',\ell})=\texttt{False}$ and $\tQ(i,x_{k,k',u})=\texttt{True}$
  is maintained, and therefore $\ell<\alpha^*\le u$.
\item Let $Q:=1/\eps$.
By \eqref{eq:denom-bound}, $\alpha^*$ has reduced denominator at most $Q$, and by the loop condition we have $u-\ell<\eps^2/2$.
Therefore $\alpha^*$ is the \emph{unique} rational in $[\ell,u]$ with reduced denominator at most $Q$.
Recover this unique rational and return it as $\alpha^*$. One concrete reconstruction routine (using no membership queries) is:
iterate $q=1,2,\dots,Q$, set $p:=\lceil q\ell\rceil$, and return the first fraction $p/q$ satisfying $p/q\le u$.
Uniqueness implies the returned value equals $\alpha^*$.

\end{enumerate}

We now prove correctness.
Note that the monotonicity of $\alpha\mapsto \tQ(i,x_{k,k',\alpha})$ implies the bisection loop maintains $\ell<\alpha^*\le u$.
On termination, $u-\ell<\eps^2/2$, so by the uniqueness argument above there is exactly one rational in $[\ell,u]$ with reduced denominator $\le Q=1/\eps$,
namely $\alpha^*$ itself. Hence the reconstruction step returns $\alpha^*$ exactly.

Finally, we prove the query complexity. Starting from an interval of length $1$, each bisection query halves the bracket width.
Thus the loop performs at most $\left\lceil \log\left(2/\eps^2\right)\right\rceil
= \mathcal{O}(\log(1/\eps))$ queries, and the reconstruction phase uses none.

We conclude with a short remark.
\paragraph{Why does $\eps^2$ appear in our turning point searches?}
Under $\eps$-precision, each utility $u_i(s_j)$ and threshold $\tau_i$ is an integer multiple of $\eps$.
Along any simplex edge between two pure lotteries, an agent's acceptance boundary occurs at a \emph{turning point}
$\alpha^*$ that is a rational number with reduced denominator at most $1/\eps$.
Any two distinct rationals with denominators at most $1/\eps$ differ by at least $\eps^2$.
Consequently, once we bracket $\alpha^*$ into an interval of length $<\eps^2/2$ using membership queries,
$\alpha^*$ becomes the unique candidate rational in the bracket and can be recovered exactly via rational reconstruction.
This is the origin of the $\mathcal{O}(\log(1/\eps))$ query cost per one-dimensional threshold search.

\subsection{Geometric Intuition for $\LearnHyperplane$} \label{app:learnhyperplane_geometricintuition}
Recovering an agent's halfspace on the simplex relies on the convexity of linear acceptance regions. Along any edge connecting a rejected vertex $\mathbf{e}_k$ to an accepted vertex $\mathbf{e}_{k'}$, there exists exactly one ``turning point'' where the agent's utility crosses their threshold $\tau_i$.
By finding these turning points on $m-1$ edges incident to a specific vertex, we identify $m-1$ points that lie strictly on the hyperplane boundary $\{\bx : \langle \mathbf{u}_i, \bx \rangle = \tau_i\}$. These points uniquely define the hyperplane (and thus the normal vector $\mathbf{c}_i$) in the $(m-1)$-dimensional affine hull of the simplex.
Algorithm~\ref{alg:learnhyperplane} implements this efficiently using binary search to locate these turning points.

\subsection{Proof of Lemma~\ref{lem:learnhyperplane_queries}}
    The algorithm begins by querying each pure lottery $\mathbf{e}_j$ for $j \in [m]$, making exactly $m$ queries.

    Each call to $\mathrm{ExactThreshold}$ performs bisection for $\mathcal{O}(\log(1/\eps))$ steps (to obtain an interval of width $<\eps^2/2$), followed by rational reconstruction (which uses no queries). Since \Cref{alg:learnhyperplane} calls $\mathrm{ExactThreshold}$ at most $m-1$ times, the total number of queries is $m + \mathcal{O}((m-1)\log(1/\eps)) = \mathcal{O}(m\log(1/\eps))$.
    %Next, for each binary search on $\alpha \in \{0,\eps,\dots,1\}$, the search space has size $1/\eps + 1$, and therefore each search requires $\mathcal{O}(\log(1/\eps))$ queries.
    %The algorithm performs one such search for every accepted index $j \in L^\mathrm{acc}$ (along the edge from $r$ to $j$), and one for every rejected index $j \in L^\mathrm{rej} \setminus \{r\}$ (along the edge from $j$ to $a$).
    %Since these two sets form a partition of $[m] \setminus \{r\}$, the total number of binary searches is $|L^\mathrm{acc}| + |L^\mathrm{rej} \setminus \{r\}| = m-1$.
    %Thus, the total number of queries made by the algorithm is $\mathcal{O}(m + (m-1) \log(1/\eps)) = \mathcal{O}(m\log(1/\eps)$, as claimed. 
\subsection{Proof of Lemma~\ref{lem:learnhyperplane_correctness}}
Fix an agent $i\in N$. 

\Cref{alg:learnhyperplane} first queries all pure lotteries and partitions the alternatives as
\begin{equation*}
    L^{\mathrm{acc}} := \{j\in[m] : U_i(\mathbf{e}_j)\geq \tau_i\} = \{j\in[m] : u_i(s_j)\ge \tau_i\},
    \quad
    L^{\mathrm{rej}} := [m]\setminus L^{\mathrm{acc}} = \{j\in[m] : u_i(s_j)< \tau_i\}.
\end{equation*}
If $L^{\mathrm{rej}}=\varnothing$ (i.e., \Cref{alg:learnhyperplane} returns $\AcceptAll$), then $u_i(s_j)\ge \tau_i$ for all $j\in[m]$, and for every $\bx \in \Delta(S)$,
\begin{equation*}
    U_i(\bx)=\sum_{j=1}^m x_j u_i(s_j) \ge \sum_{j=1}^m x_j \tau_i = \tau_i,
\end{equation*}
and agent $i$ accepts every lottery, as claimed.

If $L^{\mathrm{acc}}=\varnothing$ (i.e., \Cref{alg:learnhyperplane} returns $\RejectAll$), then $u_i(s_j)<\tau_i$ for all $j\in[m]$, and for every $\bx\in\Delta(S)$,
\begin{equation*}
    U_i(\bx)=\sum_{j=1}^m x_j u_i(s_j) < \sum_{j=1}^m x_j \tau_i = \tau_i,
\end{equation*}
and agent $i$ rejects every lottery, as claimed.

In the remainder of this proof, we assume that $L^{\mathrm{acc}}\neq\varnothing$ and $L^{\mathrm{rej}}\neq\varnothing$.
Then, \Cref{alg:learnhyperplane} chooses some $r\in L^{\mathrm{rej}}$, so $u_i(s_r)<\tau_i$.

For distinct $k,k'\in[m]$ and $\alpha\in[0,1]$, recall the definition
$x_{k,k',\alpha} := \alpha e_{k'} + (1-\alpha)e_k$.
In particular, for $j\in L^{\mathrm{acc}}$,
\[
U_i(x_{r,j,\alpha}) = \alpha u_i(s_j) + (1-\alpha)u_i(s_r) = u_i(s_r) + \alpha(u_i(s_j)-u_i(s_r)).
\]
Since $u_i(s_j)\ge \tau_i > u_i(s_r)$, we have $u_i(s_j)-u_i(s_r)>0$, so $U_i(x_{r,j,\alpha})$ is strictly increasing in $\alpha$
and there is a unique $\alpha_{r,j}\in(0,1]$ such that $U_i(x_{r,j,\alpha_{r,j}})=\tau_i$.
Solving the equality gives
\begin{equation}\label{eq:alpha-rj}
    \alpha_{r,j}(u_i(s_j)-u_i(s_r))=\tau_i-u_i(s_r)
    \implies
    \frac{1}{\alpha_{r,j}}=\frac{u_i(s_j)-u_i(s_r)}{\tau_i-u_i(s_r)}.
\end{equation}
\Cref{alg:learnhyperplane} sets $c_{i,r}=0$ and, for every $j\in L^{\mathrm{acc}}$, sets $c_{i,j}=1/\alpha_{r,j}$.

We split our analysis into two cases.

\medskip
\noindent \textbf{Case 1.}
Suppose $\alpha_{r,j}=1$ for all $j\in L^{\mathrm{acc}}$ (Line~\ref{line:alg:learnhyper:alpha=1} of \Cref{alg:learnhyperplane}).
Then by~\eqref{eq:alpha-rj}, we have $u_i(s_j)=\tau_i$ for all $j\in L^{\mathrm{acc}}$.
Since $u_i(s_k)<\tau_i$ for all $k\in L^{\mathrm{rej}}$, a lottery $\bx\in\Delta(S)$ satisfies $U_i(\bx)\ge \tau_i$
if and only if it assigns zero probability to rejected alternatives:
indeed,
\begin{equation*}
    U_i(\bx)=\sum_{j\in L^{\mathrm{acc}}} x_j \tau_i + \sum_{k\in L^{\mathrm{rej}}} x_k u_i(s_k)
    \le \sum_{j\in L^{\mathrm{acc}}} x_j \tau_i + \sum_{k\in L^{\mathrm{rej}}} x_k \tau_i
=\tau_i,
\end{equation*}
with strict inequality whenever $x_k>0$ for some $k\in L^{\mathrm{rej}}$ (because then $u_i(s_k)<\tau_i$ contributes strictly less).
If $x_k=0$ for all $k\in L^{\mathrm{rej}}$, then $U_i(\bx)=\sum_{j\in L^{\mathrm{acc}}}x_j\tau_i=\tau_i$.
Thus the acceptable set is $\{\bx\in\Delta(S): x_k=0 \text{ for all } k\in L^{\mathrm{rej}}\}$.

In this case, \Cref{alg:learnhyperplane} returns the vector $\mathbf{c}_i$ defined by $c_{i,j}=1$ for $j\in L^{\mathrm{acc}}$ and $c_{i,k}=0$ for
$k\in L^{\mathrm{rej}}$. 
Since $\sum_{j\in L^{\mathrm{acc}}}x_j \leq \sum_{j=1}^m x_j = 1$, this implies $\sum_{j\in L^{\mathrm{acc}}}x_j=1$, and thus $\sum_{k\in L^{\mathrm{rej}}}x_k=0$, so $x_k = 0$ for all $k \in L^\mathrm{rej}$
For $\bx\in\Delta(S)$, we have $\langle \mathbf{c}_i,  \bx \rangle = \sum_{j\in L^{\mathrm{acc}}} x_j$, so
\begin{equation*}
    \langle \mathbf{c}_i, \bx \rangle \geq 1
    \iff \sum_{j\in L^{\mathrm{acc}}} x_j \ge 1
    \iff x_k=0 \text{ for all } k\in L^{\mathrm{rej}},
\end{equation*}
which is exactly the acceptable set.

\medskip
\noindent \textbf{Case 2.}
There exists some $a\in L^{\mathrm{acc}}$ with $\alpha_{r,a}<1$ (Line~\ref{line:alg:learnhyper:alpha<1}), which implies $u_i(s_a)>\tau_i$ by~\eqref{eq:alpha-rj}.
For each $k\in L^{\mathrm{rej}}\setminus\{r\}$, \Cref{alg:learnhyperplane} computes the unique $\alpha_{k,a}\in(0,1)$ satisfying
$U_i(x_{k,a,\alpha_{k,a}})=\tau_i$, i.e.,
\[
\tau_i = \alpha_{k,a}u_i(s_a) + (1-\alpha_{k,a})u_i(s_k),
\]
which is unique because $U_i(\bx_{k,a,0})=u_i(s_k)<\tau_i$ and $U_i(\bx_{k,a,1})=u_i(s_a)>\tau_i$.
\Cref{alg:learnhyperplane} sets $c_{i,a}=1/\alpha_{r,a}$ and then defines for each $k\in L^{\mathrm{rej}}\setminus\{r\}$:
\[
c_{i,k} := \frac{1-\alpha_{k,a}c_{i,a}}{1-\alpha_{k,a}}.
\]
We now show that in this case,
\begin{equation}\label{eq:normalized-c}
c_{i,j} = \frac{u_i(s_j)-u_i(s_r)}{\tau_i-u_i(s_r)} \quad \text{for all } j\in[m].
\end{equation}

For $j\in L^{\mathrm{acc}}$, \eqref{eq:normalized-c} holds by~\eqref{eq:alpha-rj}, since Algorithm sets $c_{i,j}=1/\alpha_{r,j}$.
Also $c_{i,r}=0=(u_i(s_r)-u_i(s_r))/(\tau_i-u_i(s_r))$, so \eqref{eq:normalized-c} holds for $j=r$.

Fix $k\in L^{\mathrm{rej}}\setminus\{r\}$.
From $\tau_i=\alpha_{k,a}u_i(s_a)+(1-\alpha_{k,a})u_i(s_k)$, we have
\[
u_i(s_k) = \frac{\tau_i-\alpha_{k,a}u_i(s_a)}{1-\alpha_{k,a}},
\]
and hence
\[
\frac{u_i(s_k)-u_i(s_r)}{\tau_i-u_i(s_r)}
=
\frac{1}{1-\alpha_{k,a}}\cdot \frac{\tau_i-u_i(s_r)-\alpha_{k,a}(u_i(s_a)-u_i(s_r))}{\tau_i-u_i(s_r)}
=
\frac{1-\alpha_{k,a}\cdot \frac{u_i(s_a)-u_i(s_r)}{\tau_i-u_i(s_r)}}{1-\alpha_{k,a}}.
\]
But by~\eqref{eq:alpha-rj} applied to $a$, we have $c_{i,a}=1/\alpha_{r,a}=(u_i(s_a)-u_i(s_r))/(\tau_i-u_i(s_r))$, so the last expression equals
\[
\frac{1-\alpha_{k,a}c_{i,a}}{1-\alpha_{k,a}}=c_{i,k},
\]
proving \eqref{eq:normalized-c} for $k$ and thus for all $j\in[m]$.

Finally, for any lottery $\bx \in\Delta(S)$, using \eqref{eq:normalized-c} we have
\begin{equation*}
    \langle \mathbf{c}_i, \bx \rangle \geq 1
    \iff
    \sum_{j=1}^m \frac{u_i(s_j)-u_i(s_r)}{\tau_i-u_i(s_r)}x_j \ge 1
    \iff
    \sum_{j=1}^m (u_i(s_j)-u_i(s_r))x_j \ge \tau_i-u_i(s_r).
\end{equation*}

Since $\sum_{j=1}^m x_j=1$, the left-hand side equals
$\sum_{j=1}^m u_i(s_j)x_j - u_i(s_r)\sum_{j=1}^m x_j = U_i(\bx)-u_i(s_r)$,
so the inequality is equivalent to $U_i(\bx)-u_i(s_r) \ge \tau_i-u_i(s_r)$, i.e., $U_i(\bx)\geq \tau_i$.

This proves that whenever \Cref{alg:learnhyperplane} returns a vector $\mathbf{c}_i$, it satisfies
$\langle \mathbf{c}_i, \bx \rangle \ge 1 \iff U_i(\bx)\ge \tau_i$ for all $\bx\in\Delta(S)$, as desired.

\subsection{Proof of Theorem~\ref{thm:ub_deterministic}}    
    We first prove \textbf{correctness}.
    Let $F(C) := \{\bx \in \Delta(S) : \langle \mathbf{c}, \bx \rangle \geq 1 \text{ for all } \mathbf{c} \in C\}$ be the feasible set determined by the constraints in $C$.
    By definition, $\Select(C)$ returns $\Null$ if and only if $F(C)=\varnothing$, and otherwise returns a unique lottery $\bx\in F(C)$.

    Also, by Lemma~\ref{lem:learnhyperplane_correctness}, if $\mathbf{c}_i$ is the vector returned by $\LearnHyperplane$ (in the non-$\RejectAll$ case), then for every $\bx \in \Delta(S)$,
    \begin{equation} \label{eqn:thm_det_1}
        \langle \mathbf{c}_i, \bx \rangle \geq 1 \iff U_i(\bx) \geq \tau_i.
    \end{equation}
    Now, at every iteration of the \textbf{while} loop (Line~\ref{line:alg_det_while_start}), immediately after the assignment $\bx \leftarrow \Select(C)$ (Line~\ref{line:alg_det_bx_2}), the following invariant holds:
    For every agent $i \in N \setminus N'$, $C$ contains the constraint $\mathbf{c}_i$ learned for agent~$i$, and the current candidate $\bx$ satisfies $U_i(\bx) \geq \tau_i$.

    To see this, observe that if $i \in N \setminus N'$, then $i$ was previously selected as a violator, and removed from $N'$. In that iteration, either
    \begin{itemize}
        \item $\LearnHyperplane(i) = \RejectAll$, in which case the algorithm terminates immediately (so the invariant is only relevant before termination), or
        \item $\LearnHyperplane(i) = \mathbf{c}_i$, then the algorithm adds the returned vector $\mathbf{c}_i$ to $C$.
    \end{itemize}
    In later iterations, $\bx \leftarrow \Select(C) \in F(C)$, so in particular, $\langle \mathbf{c}_i, \bx \rangle \geq 1$.
    By (\ref{eqn:thm_det_1}), this implies $U_i(\bx) \geq \tau_i$.
    Note that $\LearnHyperplane(i) \neq \AcceptAll$, because we only call $\LearnHyperplane(i)$ after observing $\tQ(i,\bx) = \texttt{False}$.
    If an agent accepted all pure lotteries, linearity would imply they accept all lotteries, so they could not be a violator.

    Next, we prove that if \Cref{alg:cardinal_deterministic} returns a lottery, it is unanimously acceptable.
    Suppose the algorithm returns some $\bx$ (Line~\ref{line:alg_det_returnx}).
    This occurs only if there is no violator, which means that for every unlearned agent $i \in N'$, we observed $\tQ(i,\bx) = \texttt{True}$, i.e., $U_i(\bx) \geq \tau_i$.
    For any learned agent $i \in N \setminus N'$, the invariant implies $U_i(\bx) \geq \tau_i$.
    Consequently, we get that $U_i(\bx) \geq \tau_i$ for all $i \in N$, and $\bx$ is unanimously acceptable.

    Next, we prove that if \Cref{alg:cardinal_deterministic} returns $\Null$, then no unanimously acceptable lottery exists.
    There are two ways the algorithm returns $\Null$:
    \begin{enumerate}
        \item $\Select(C)$ returns $\Null$ (Line~\ref{line:alg_det_null_1}).
        Then $F(C) = \varnothing$.
        Suppose for a contradiction that there exists a lottery $\bx^* \in \Delta(S)$ such that $U_i(\bx^*) \geq \tau_i$ for all $i \in N$.
        In particular, for every learned agent $i \in N \setminus N'$, we have $U_i(\bx^*) \geq \tau_i$.
        By (\ref{eqn:thm_det_1}), this implies $\langle \mathbf{c}_i,  \bx^* \rangle \geq 1$ for each $\mathbf{c}_i \in C$.
        Thus, $\bx^* \in F(C)$, contradicting $F(C) = \varnothing$.
        Thus, no unanimously acceptable lottery exists.
        \item $\LearnHyperplane(i)$ returns $\RejectAll$ (Line~\ref{line:alg_det_null_2}).
        By definition of $\LearnHyperplane$, this means that agent~$i$ rejects every pure lottery.
        By linearity of expected utility, if $U_i(\mathbf{e}_j) < \tau_i$ for all $j \in [m]$, then for every $\bx \in \Delta(S)$,
        \begin{equation*}
            U_i(\bx) = \sum_{j=1}^m x_j U_i(\mathbf{e}_j) < \sum_{j=1}^m x_j \tau_i = \tau_i,
        \end{equation*}
        and agent~$i$ rejects every lottery.
        Thus, unanimity is impossible and returning $\Null$ is correct.
    \end{enumerate}
    Finally, we show that \Cref{alg:cardinal_deterministic} terminates.
    In each iteration of the \textbf{while} loop, either the algorithm returns, or it finds a violator $i \in N'$, and then removes $i$ from $N'$.
    Since $N' \subseteq N$ and agents are removed at most once, this can happen at most $n$ times.
    Thus, the algorithm terminates after at most $n+1$ iterations.

    Next, we prove the \textbf{query complexity} by providing an upper bound on the number of queries made by \Cref{alg:cardinal_deterministic}.
    Let $\gamma$ be the number of times the algorithm calls $\LearnHyperplane$. Clearly $\gamma \leq n$.
    By Lemma~\ref{lem:learnhyperplane_queries}, each call to $\LearnHyperplane$ makes $\mathcal{O}(m\log(1/\eps))$ queries.
    Therefore, all invocations of $\LearnHyperplane$ together contribute $\mathcal{O}(nm\log(1/\eps))$ queries.

    Next, we count the number of ``verification'' queries.
    In an iteration where the unlearned set has size $|N'| = k$, the \textbf{for} loop queries each agent in $N'$ at most once, so it makes at most $k$ direct queries.
    Each time a violator is found and learned, $|N'|$ decreases by $1$.
    Thus, in the worst case (violator being last each time), the total number of ``verification'' queries is at most 
    \begin{equation*}
        n+(n-1) + \dots + 1 = \frac{n(n+1)}{2} = \mathcal{O}(n^2).
    \end{equation*}

    Thus, %together with the fact that $\log(1/\eps) \gg m$, 
    the total number of queries made by \Cref{alg:cardinal_deterministic} is $\mathcal{O}\left(n^2 + nm\log(1/\eps ) \right)$.

\subsection{Proof of Theorem~\ref{thm:alg_randomized}} \label{app:sec:alg_rand}
Note that we can assume $n \geq 2$ and $m \geq 2$, otherwise the problem is trivial.
For integer weights $w_1,\dots,w_n\ge 1$, let $N_\mathrm{multi}(w)$ denote the multiset containing exactly $w_i$ labeled copies of each agent $i$, and so $|N_\mathrm{multi}(w)|=\sum_{i=1}^n w_i$.

Let $\mathrm{WeightedSample}(w,r')$ return a uniformly random subset
$N_\mathrm{rand} \subseteq N_\mathrm{multi}(w)$ of size $r'$ (i.e., $|N_\mathrm{rand}| = r'$) drawn \emph{without replacement}.
Write $\supp(N_\mathrm{rand})\subseteq N$ for the set of distinct agents appearing in $N_\mathrm{rand}$.

Our algorithm (\Cref{alg:cardinal_randomized}) is as follows.

%Fix the same deterministic tie-breaking objective vector $f\in\mathbb{Q}^m$ used by $\Select$.
For each agent $i$, let $\mathcal{A}_i \subseteq \Delta(S)$ denote the agent's acceptable region (halfspace).
By Lemma~\ref{lem:learnhyperplane_correctness}, either $\mathcal{A}_i=\Delta(S)$ ($\AcceptAll$), or $\mathcal{A}_i=\varnothing$ ($\RejectAll$), or $\mathcal{A}_i=\{\bx\in\Delta(S): \langle \mathbf{c}_i, \bx \rangle \ge 1\}$
for the vector $\mathbf{c}_i$ returned by $\LearnHyperplane(i)$.

Then, for any subset of agents $N' \subseteq N$, define the feasible region
\begin{equation*}
    P(N') : \bigcap_{i\in N'} \mathcal{A}_i.
\end{equation*}
If $P(N')\neq \varnothing$, define $\bx_{N'}$ to be the unique lexicographically maximum point in $P(N')$
(i.e., the output of $\Select$ under the deterministic lexicographic tie-breaking rule from Section~\ref{sec:ub});
otherwise define $x_{N'} := \Null$.
When no agent in $N'$ is of type \textsc{RejectAll} (i.e., when $\mathcal{A}_i\neq\varnothing$ for all $i\in N'$), this coincides with $\bx_{N'} = \Select(\{\mathbf{c}_i : i\in N', \mathcal{A}_i\neq \Delta(S)\})$, since \textsc{AcceptAll} agents contribute no constraint.

Define the \emph{basis} for a subset of agents $N' \subseteq N$ (where $P(N') \neq \varnothing$) as a minimal subset $B\subseteq N'$ such that $P(B)\neq\varnothing$ and $\bx_B=\bx_{N'}$.
%Intuitively, this means the constraints imposed by a subset of agents $B \subseteq A \subseteq N$ is sufficient to induce the same lottery chosen by $f$. 
For weight vector $w = (w_1,\dots, w_n) \in \mathbb{Z}_{\ge 1}^n$ and $N' \subseteq N$, write $w(N'):=\sum_{i\in N'} w_i$ and $W := w(N)$.

\begin{algorithm}[ht]
    \caption{Randomized algorithm that returns a unanimously acceptable lottery if one exists, or \Null{} otherwise}
    \label{alg:cardinal_randomized}
    \begin{algorithmic}[1]
        \Require Set of agents $N$, and set of alternatives $S = \{s_1,\dots,s_m\}$
        \Ensure Lottery $\bx = (x_1,\dots, x_m) \in \Delta(S)$ such that $U_i(\bx) \geq \tau_i$ for all $i \in N$; or \Null 
        \State $r \leftarrow 16(m-1)^2$ \Comment{sample size}%; any $\Theta(d^2)$ with the constants in the proof works}
        \State Initialize $w_i \leftarrow 1, \ \mathit{known}_i \leftarrow \texttt{False}, \text{ and } \mathbf{c}_i \leftarrow \texttt{None}$ for all $i \in N$ 
        \While{\texttt{True}} \label{line:alg_rand_while}
          \State $W \gets \sum_{i=1}^n w_i$
          \State $r' \gets \min\{r,\, W\}$ \Comment{if $W<r$ we sample the whole multiset}
          \State $N_\mathrm{rand} \gets \mathrm{WeightedSample}(w,r')$ \Comment{$N_\mathrm{rand} \subseteq N_\mathrm{multi}(w)$, $|N_\mathrm{rand}|=r'$} \label{line:alg_rand_WeightedSample}
          \ForAll{$i \in \supp(N_\mathrm{rand} )$}
            \If{$\mathit{known}_i=\texttt{False}$}
              \State $\mathbf{c} \gets \LearnHyperplane(i)$
              \If{$\mathbf{c}=\RejectAll$} \Return \Null \label{line:alg_rand_null_1}
              \ElsIf{$c=\AcceptAll$}
                $\mathit{known}_i \gets \texttt{True}$ and   $\mathbf{c}_i \gets \texttt{None}$
              \Else{}
                $\textit{known}_i \gets \texttt{True}$ and $\mathbf{c}_i \gets \mathbf{c}$
              \EndIf
            \EndIf
          \EndFor
          \State $C_{N_\mathrm{rand}} \gets \{\mathbf{c}_i : i\in\supp(N_\mathrm{rand})\ \wedge\ \mathbf{c}_i\neq \texttt{None}\}$
          \State $\bx \gets \Select(C_{N_\mathrm{rand}})$
          \If{$\bx= \Null$} \Return \Null \label{line:alg_rand_null_2}
          \EndIf
          \State Compute the set $V := \{i \in N : \tQ(i, \bx) = \texttt{False}\}$ \Comment{identifying violating agents} \label{line:alg_rand_verification}
          \If{$V=\varnothing$} \Return $\bx$ \label{line:alg_rand_returnx}
          \EndIf
          \ForAll{$i \in V$}
            \State $w_i \gets 2w_i$ \Comment{multiplicative weights update on violators} \label{line:alg_rand_multiplicative}
          \EndFor
        \EndWhile
    \end{algorithmic}
\end{algorithm}

We begin by proving \textbf{correctness}. 
If \Cref{alg:cardinal_randomized} returns
\begin{itemize}
    \item $\bx$ (Line~\ref{line:alg_rand_returnx}), then every agent accepts $\bx$ (i.e., the set of violating agents $V$ is empty), so $\bx$ is unanimously acceptable.
    \item $\Null$, then either:
    \begin{itemize}
        \item some call to $\LearnHyperplane$ returns $\RejectAll$ (Line~\ref{line:alg_rand_null_1}), in which case that agent rejects every lottery in $\Delta(S)$ and finding a unanimously acceptable lottery is impossible; or
        \item $\Select(C_{N_\mathrm{rand}})$ returns $\Null$ (Line~\ref{line:alg_rand_null_2}), meaning $P(\supp(N_\mathrm{rand}))=\varnothing$. Since $P(N)\subseteq P(\supp(N_\mathrm{rand}))$, emptiness of $P(\supp(N_\mathrm{rand}))$ implies $P(N)=\varnothing$ (note that $P(N')=\cap_{i\in N'} \mathcal{A}_i$ with $\mathcal{A}_i\subseteq \Delta(S))$, i.e., no unanimously acceptable lottery exists. Hence returning $\Null$ is correct.
    \end{itemize}
\end{itemize}

Next, we will prove the \textbf{expected query complexity} using a series of intermediate lemmas.

We first make use of a lemma from \citet[Lemma 1(iii)]{MSW96SubexpLP}, which highlights a standard property of $d$-dimensional linear programming: under lexicographic (deterministic) selection, every feasible bounded instance has a basis of size at most $d$.
Since our feasible region lies in the affine hull of dimension $d = m-1$, the claim follows (note that $P(N')\subseteq \Delta(S)$ is always bounded).
\begin{lemma}\label{lem:basis}
    For any subset of agents $N' \subseteq N$, if $P(N') \neq \varnothing$, then $N'$ has a basis $B \subseteq N'$ with $|B| \leq m-1$.
\end{lemma}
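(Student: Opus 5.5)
We prove the statement by a Helly/Carathéodory-type argument on the lexicographic optimum, working in the affine hull of $\Delta(S)$, which has dimension $d=m-1$. Fix $N'$ with $P(N')\neq\varnothing$ and let $\bx^*=\bx_{N'}$. Agents of type \textsc{AcceptAll} impose no constraint and can be discarded. No agent of type \textsc{RejectAll} can be present, since $P(N')\neq\varnothing$. Each remaining agent $i$ contributes one halfspace $\langle \mathbf{c}_i,\bx\rangle\ge 1$. Together with the simplex constraints $x_j\ge 0$, these define the polytope $P(N')$. Lexicographic maximization is the same as maximizing a single linear objective $f_\eta(\bx)=\sum_j \eta^{j-1}x_j$ for all sufficiently small $\eta>0$, because the polytope has finitely many vertices and the lex-max point is a vertex. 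We therefore fix such an $\eta$, one that works simultaneously for all (finitely many) subsets of $N'$. It then suffices to prove the following: in a $d$-dimensional LP with a unique optimum, at most $d$ of the \emph{agent} constraints suffice to keep the optimum unchanged.

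\textbf{Key step.} By standard LP duality/KKT in the affine hull, the objective $f_\eta$ lies in the cone generated by the normals of the constraints tight at $\bx^*$. Carathéodory's theorem in dimension $d$ lets us pick at most $d$ tight constraints whose normals already generate a cone containing $f_\eta$ (projected to the affine hull). Some of these may be simplex constraints $x_j\ge0$. These are always present in $P(\cdot)$ and are not agent constraints, so we keep them for free. Let $B$ be the agents among the chosen constraints; then $|B|\le d=m-1$. By weak duality, $\bx^*$ is optimal for $f_\eta$ over $P(B)$.

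\textbf{Uniqueness.} We still need $\bx_B=\bx^*$, not merely that $\bx^*$ is among the maximizers of $f_\eta$ over $P(B)$. Since $\eta$ was chosen to work for $B$ as well, the maximizer of $f_\eta$ over $P(B)$ is the lex-max point of $P(B)$ and is unique. To reconcile this with the Carathéodory choice, I would run the argument directly on the lexicographic order. I would argue inductively: $\bx^*$ maximizes $x_1$ over $P(N')$; restricting to the face where $x_1$ is maximal, it maximizes $x_2$; and so on. Alternatively, I would invoke the MSW LP-type framework. There one checks monotonicity ($F\subseteq G$ implies $\bx_F\ge_{\mathrm{lex}}\bx_G$) and locality, and the combinatorial dimension is at most $d$ by the perturbation argument above. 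Existence of a minimal basis of size at most $d$ then follows from \citet[Lemma 1(iii)]{MSW96SubexpLP}. The paper itself cites this, so the cleanest route is to verify the LP-type axioms and the dimension bound. From there, minimality is obtained by removing agents greedily from $B$ while preserving $\bx_B=\bx^*$, which does not increase the size.

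\textbf{Main obstacle.} The delicate point is the combinatorial dimension bound under ties, that is, making sure lexicographic tie-breaking does not require up to $d$ extra constraints. The perturbed objective $f_\eta$ handles this: it makes the optimum a unique vertex, and it makes Carathéodory give $d$ normals rather than more. The simplex constraints must be treated as fixed background constraints so that they do not count toward $|B|$.
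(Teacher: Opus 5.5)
Your argument is correct, and it takes a different route from the paper. The paper gives no derivation. It cites \citet[Lemma 1(iii)]{MSW96SubexpLP} for the fact that lexicographic linear programming in dimension $d$ has bases of size at most $d$, and notes that $P(N')$ is bounded and lies in the $(m-1)$-dimensional affine hull of $\Delta(S)$. You prove the dimension bound directly:
\begin{itemize}
\item you replace the lexicographic rule by one perturbed objective $f_\eta$, with $\eta$ chosen to work for all finitely many subsets of $N'$ at once;
\item you apply LP duality and conic Carath\'eodory in the $d$-dimensional direction space $\{v:\sum_j v_j=0\}$;
\item you treat the inequalities $x_j\ge 0$ as background constraints that never count toward $|B|$.
\end{itemize}
This makes explicit two things the citation leaves implicit: why the always-present simplex inequalities do not use up basis slots, and why lexicographic tie-breaking does not inflate the bound. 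The paper's route buys brevity, at the cost of relying on the reader to check that the cited LP-type result covers implicit background constraints.

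Your ``Uniqueness'' paragraph is more hedged than it needs to be. You have already shown that $\bx^*\in P(B)$ maximizes $f_\eta$ over $P(B)$. Because $\eta$ was fixed so that $f_\eta$ has a unique maximizer over $P(B)$, namely $\bx_B$, this immediately gives $\bx_B=\bx^*$. So neither the face-by-face inductive argument nor the verification of the LP-type axioms is needed.

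For minimality, the simplest step is to take a minimum-cardinality subset $B'\subseteq B$ with $\bx_{B'}=\bx^*$; no proper subset of $B'$ can then have the property. Greedy single-element removal also works, but it relies on the monotonicity you mention: if $B'\subseteq B''\subseteq N'$ and $\bx_{B'}=\bx^*$, then $\bx_{B''}=\bx^*$, because $\bx^*\in P(B'')\subseteq P(B')$.
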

This lemma essentially tells us that there exists a ``small'' set of agents whose constraints determine the final chosen lottery $\bx$.
This ``smallness'' is what gives us the $\mathcal{O}(m \log n)$ dependence later.

Next, we prove the following weighted sampling lemma.
Intuitively, in each iteration, \Cref{alg:cardinal_randomized} sample $r'$ elements from the multiset $N_\mathrm{multi}(w)$ (given a weight vector $w = (w_1,\dots,w_n)$) and solves the LP for the sampled distinct agents $\supp(N_\mathrm{rand})$ constraints (if an agent~$i$ has not been ``learned'' yet, then run $\LearnHyperplane(i)$ and store $\mathbf{c}_i$), getting a lottery $\bx_{\supp(N_\mathrm{rand})}$, which is the unique lexicographically maximum feasible lottery for the sampled constraints (i.e., the output of $\Select$ on the sampled constraint set).
The algorithm then verifies this against every agent by querying every agent on this candidate $\bx_{\supp(N_\mathrm{rand})}$. 
Let $V$ be the agents who reject $\bx_{\supp(N_\mathrm{rand})}$.
This lemma gives us an upper bound on the expected \emph{total weight of the violators} $w(V)$ after solving the sampled ``subproblem'' in each iteration.
In particular, we will show that the expected total weight of violators for any iteration is $\mathbb{E}[w(V)] \leq m \cdot \frac{W}{r' + 1}$.
%This means that the expected total weight of violated constraints is small when $r'$ is large. [{\color{blue} Add more intuition here.}]

Now, we can think of these weights as a sampling distribution (i.e., agent~$i$ is more likely to be included in the next sampled subproblem/iteration when $w_i^{(t)}$ is large), and as a penalty counter (if agent~$i$ rejects our proposed $\bx$, we double $w_i^{(t)}$ for the next iteration).
Thus, if the expected total weight of violators is small, then doubling of violator's weight would flesh them out faster, and provide us some progress towards identifying ``more constraining'' agents/likely violators faster.
% Intuitively, if $\omega^{(t)}$ is large (e.g., comparable to $W^{(t)}$), then $W^{(t+1)}$ would nearly double (from $W^{(t)}$) and it would become hard to argue that we are concentrating on ``important'' constraints. If $\omega^{(t)}$ is small, then the 
\begin{lemma} \label{lem:sampling}
    Let $V:=\varnothing$ if $\bx_{\supp(N_\mathrm{rand})} = \Null$; and $V:=\{i\in N: \bx_{\supp(N_\mathrm{rand})} \notin \mathcal{A}_i\}$ otherwise. Then
    \begin{equation*}
        \mathbb{E}[w(V)] \leq m \cdot \frac{W}{r'+1}.
    \end{equation*}
    In particular, if $r'=\min\{16(m-1)^2,W\}$, then
    if $W \leq r$, then $r'=W$ and the sample contains all agents, so $w(V)=0$. If $W>r$, then 
    \begin{equation*}
        \mathbb{E}[w(V)] \leq  \frac{W}{8(m-1)}.
    \end{equation*}
\end{lemma}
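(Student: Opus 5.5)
The plan is the classical Clarkson / Gärtner--Welzl backwards-analysis argument, run on the multiset $N_\mathrm{multi}(w)$ rather than on $N$ itself. Write $M := N_\mathrm{multi}(w)$, so $|M| = W$. For a sub-multiset $R \subseteq M$, put $\bx_R := \bx_{\supp(R)}$. For $h \in M$, a labelled copy of agent $i$, say that $h$ \emph{violates} $R$ if $\bx_R \neq \Null$ and $\bx_R \notin \mathcal{A}_i$. Then $w(V)$ is exactly the number of elements $h \in M \setminus R$ that violate $R$: copies of agents already in $\supp(R)$ never violate, since $\bx_R \in P(\supp(R))$, and agent $i$ contributes all $w_i$ copies. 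So it suffices to bound $\mathbb{E}\,|\{h \in M\setminus R : h \text{ violates } R\}|$ when $R$ is a uniform $r'$-subset of $M$.

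\textbf{Step 1 (pairing).} Draw a uniform $(r'+1)$-subset $Q \subseteq M$ and remove a uniformly random element $h \in Q$, leaving $R = Q \setminus \{h\}$. Then $R$ is a uniform $r'$-subset, and $h$ is uniform on $M \setminus R$ given $R$. Hence
\[
\mathbb{E}[w(V)] = (W - r')\,\Pr[h \text{ violates } Q\setminus\{h\}].
\]

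\textbf{Step 2 (few extreme elements).} Fix $Q$ and call $h \in Q$ \emph{extreme} if it violates $Q \setminus \{h\}$. If $h$ is extreme, then $P(\supp(Q\setminus\{h\})) \neq \varnothing$ and $\bx_{Q\setminus\{h\}} \notin \mathcal{A}_i$, where $i$ is the agent of $h$.
\begin{itemize}
\item If $i$ still has another copy in $Q \setminus \{h\}$, this is impossible, because then $\supp(Q\setminus\{h\}) = \supp(Q)$ and $\bx_{Q\setminus\{h\}}$ lies in $\mathcal{A}_i$.
\item Otherwise $\supp(Q\setminus\{h\}) = \supp(Q)\setminus\{i\}$. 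Suppose $P(\supp(Q)) \neq \varnothing$. Take a basis $B$ of $\supp(Q)$ with $|B| \le m-1$, which exists by Lemma~\ref{lem:basis}. If $i \notin B$, then $B \subseteq \supp(Q)\setminus\{i\} \subseteq \supp(Q)$, and monotonicity of lexicographic maxima over nested feasible sets gives $\bx_{\supp(Q)\setminus\{i\}} = \bx_{\supp(Q)}$. That point lies in $\mathcal{A}_i$, a contradiction. So $i \in B$, and since $i$ has a unique copy, each of the at most $m-1$ agents in $B$ contributes at most one extreme element.
\item Now suppose $P(\supp(Q)) = \varnothing$. By Helly's theorem in dimension $m-1$ (the affine hull of $\Delta(S)$, with the agents' halfspaces as convex sets, and treating a \RejectAll{} agent as an empty set) there is a witness $H \subseteq \supp(Q)$ with $|H| \le m$ and $P(H) = \varnothing$. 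Deleting an agent outside $H$ leaves the region empty, so $\bx = \Null$ and there is no violation. Hence at most $m$ extreme elements.
\end{itemize}
In every case $Q$ has at most $m$ extreme elements. This is where the factor $m$, rather than $m-1$, in the statement comes from.

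\textbf{Step 3 (combine).} From Steps 1 and 2,
\[
\Pr[h \text{ extreme}] \le \frac{m}{r'+1}, \qquad\text{so}\qquad \mathbb{E}[w(V)] \le \frac{(W-r')\,m}{r'+1} \le \frac{mW}{r'+1}.
\]
For the particular case, if $W \le r$ then $R = M$, every agent appears in the sample, and $w(V) = 0$. If $W > r$ then $r' = 16(m-1)^2$, and using $m \le 2(m-1)$ for $m \ge 2$,
\[
\mathbb{E}[w(V)] \le \frac{mW}{16(m-1)^2+1} \le \frac{2(m-1)W}{16(m-1)^2} = \frac{W}{8(m-1)}.
\]

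The main obstacle is Step 2. It needs two facts. First, removing a non-basis agent does not change the lexicographic maximum: the smaller set's lex-max is lex-greater than or equal to $\bx_{\supp(Q)}$ because its region is larger, and it is lex-less than or equal to $\bx_B = \bx_{\supp(Q)}$ because its region is contained in $P(B)$. Second, the multiset bookkeeping must be handled so that duplicate copies never count as extreme. I would also state the Helly bound explicitly for the infeasible case, since Lemma~\ref{lem:basis} only covers feasible sets.
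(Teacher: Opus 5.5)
Your proposal is correct and follows essentially the same backwards-analysis argument as the paper. Your random-removal pairing is the probabilistic form of the paper's double counting via $(N_\mathrm{rand},i)\mapsto N_\mathrm{rand}\cup\{i\}$, and your bound of at most $m$ extreme elements matches the paper's case analysis: a basis of size at most $m-1$ in the feasible case, a Helly witness of size at most $m$ in the infeasible case, and at most one copy per agent.
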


\begin{proof}
Each element $i\in N_\mathrm{multi}(w)$ is a copy of some agent $\widehat{i}\in N$.

For any $N_\mathrm{rand}\subseteq N_\mathrm{multi}(w)$, define $\bx_{N_\mathrm{rand}} := \bx_{\supp(N_\mathrm{rand})}$ and $\pi: N_{\text{multi}}(w)\to N$. Then, define the \emph{violator copy set}
\begin{equation*}
    V_\mathrm{copy}(N_\mathrm{rand}) := \{i\in N_\mathrm{multi}(w)\setminus N_\mathrm{rand}: \bx_{N_\mathrm{rand}}\neq\Null\ \text{ and } \bx_{N_\mathrm{rand}}\notin \mathcal{A}_{\pi(i)}\}.
\end{equation*}
Intuitively, agents that have been sampled would have had their constraints learned and thus would not be a violator. Thus, it makes sense that violators only make up the agents that have not been learned (and in particular not sampled in this iteration). 

If $\bx_{N_\mathrm{rand}}=\Null$ then $V_\mathrm{copy}(N_\mathrm{rand})=\varnothing$ by definition.
If $\bx_{N_\mathrm{rand}}\neq\Null$ and an agent $i$ violates $\bx_{N_\mathrm{rand}}$, then no copy of $i$ can be in $N_\mathrm{rand}$ (otherwise $i\in\supp(N_\mathrm{rand})$ and $\bx_{N_\mathrm{rand}} \in \mathcal{A}_i$).
Thus, all $w_i$ copies of $i$ is contained in $N_\mathrm{multi}(w)\setminus N_\mathrm{rand}$ and also $V_\mathrm{copy}(N_\mathrm{rand})$. 
This means that
\begin{equation} \label{eq:viol-size}
    |V_\mathrm{copy}(N_\mathrm{rand})| = w(V),
\end{equation}
i.e., every violating agent contributes all its copies, and no violating agent can appear in $N_\mathrm{rand}$.

For any subset $Q\subseteq N_\mathrm{multi}(w)$, with $|Q| = (r'+1)$, define its \emph{extreme elements}
\begin{equation*}
    X(Q) :=\{i\in Q: i\in V_\mathrm{copy}(Q\setminus\{i\})\}.
\end{equation*}
Then, we will show that the number of extreme elements is no more than $m$:
\begin{equation} \label{eq:XQ-bound}
    |X(Q)| \le m.
\end{equation}
We split our analysis into two cases. 
We first provide an intuition of the proof for the two cases.
When $P(\supp(Q)) \neq \varnothing$, then an extreme element corresponds to an agent $\widehat{i}$ such that removing $\widehat{i}$ changes the unique optimum and causes $\widehat{i}$ to be violated.
We prove that such an $\widehat{i}$ must lie in every basis of $\supp(Q)$. Since there exists some basis of size $\leq m-1$, there can be at most $m-1$ such agents.
Also, $X(Q)$ cannot contain two copies of the same agent (because then removing one copy would not remove that agent from the support). Thus, $|X(Q)| \leq m-1$.

In the case where $P(\supp(Q)) = \varnothing$, then an extreme point corresponds to an agent $\widehat{i}$ such that removing $i$ makes the subproblem feasible (since $Q \setminus \{i\}$ is feasible by definition of a violator).
If removing $\widehat{i}$ restores feasibility (i.e., makes the intersection nonempty), then any infeasible subcollection must include $i$ (because every subset of a feasible family is feasible).
Then we apply Helly's theorem in dimension $m-1$: since the family on $\supp(N_\mathrm{rand})$ is infeasible, there exists an infeasible subfamily of size $\leq m$; taking a minimal one gives a ``minimal infeasible'' set $B$ with $|B|\leq m$. Every ``essential'' agent $\widehat{i}$ (whose removal restores feasibility) must be in all minimal infeasible sets, in particular this one $B$. 
Thus, there are at most $m$ such agents. Again there is at most one copy per agent in $X(Q)$, so $|X(Q)|\le m$.

We prove it formally as follows.
\begin{description}
    \item[Case 1: $P(\supp(Q))\neq\varnothing$ (feasible).]
    Let $i\in X(Q)$ (and $\widehat{i}$ be the corresponding agent in $N$). 
    Then $Q\setminus\{i\}$ is such that $\bx_{\supp(Q)\setminus\{\widehat{i}\}} = \bx_{Q\setminus\{i\}}$ exists and violates $\widehat{i}$.
    We claim $\widehat{i}$ must belong to every basis of $\supp(Q)$.
    Indeed, suppose (for contradiction) there is a basis $B\subseteq \supp(Q)$ with $\widehat{i} \notin B$.
    Then $B\subseteq \supp(Q)\setminus\{\widehat{i}\}$ and $\bx_B = \bx_{\supp(Q)}$.
    However, $\bx_{\supp(Q)\setminus\{\widehat{i}\}}$ is feasible for $B$ and (since $\bx_{\supp(Q)\setminus\{\widehat{i}\}}$ violates $\widehat{i}$) we have $\bx_{\supp(Q)\setminus\{\widehat{i}\}}\neq \bx_{\supp(Q)}$. 
    Since $\bx_{\supp(Q)}$ is feasible for $P(\supp(Q)\setminus\{b_i\})$ and
    $\bx_{\supp(Q)\setminus\{b_i\}}$ is the lexicographically maximum point in $P(\supp(Q)\setminus\{b_i\})$,
    we have that $\bx_{\supp(Q)\setminus\{b_i\}}$ is lexicographically at least $\bx_{\supp(Q)}$.
    Moreover, $\bx_{\supp(Q)\setminus\{b_i\}}\neq \bx_{\supp(Q)}$ because $\bx_{\supp(Q)\setminus\{b_i\}}$ violates $b_i$
    while $\bx_{\supp(Q)}$ satisfies $b_i$. Hence $\bx_{\supp(Q)\setminus\{b_i\}}$ is lexicographically larger than
    $\bx_{\supp(Q)}$.
    This contradicts the fact that $\bx_B=\bx_{\supp(Q)}$ is the lexicographically maximum point in $P(B)$, because
    $\bx_{\supp(Q)\setminus\{b_i\}}$ is feasible for $B$.
    Hence $\widehat{i}$ is in every basis of $\supp(Q)$.
    By Lemma~\ref{lem:basis}, $\supp(Q)$ has some basis of size at most $m-1$, so there can be at most $m-1$ such distinct agents $\widehat{i}$.
    Moreover, $X(Q)$ contains at most one copy per agent (if $Q$ contained two copies of $\widehat{i}$, removing one would still leave
    $\widehat{i}\in\supp(Q\setminus\{i\})$, so $i$ could not be a violator). Thus $|X(Q)|\leq m-1$ in this case.
    
    \item[Case 2: $P(\supp(Q))=\varnothing$ (infeasible).]
    Let $i\in X(Q)$ (and $\widehat{i}$ be the corresponding agent in $N$).
    Then by definition $Q\setminus\{i\}$ is feasible, hence $P(\supp(Q)\setminus\{\widehat{i}\})\neq\varnothing$.
    Therefore, every infeasible subset of $\supp(Q)$ must contain $\widehat{i}$ (otherwise it would remain infeasible inside $\supp(Q)\setminus\{\widehat{i}\}$),
    so $\widehat{i}$ belongs to every minimal infeasible subset of $\supp(Q)$.
    By Helly's theorem \citep{danzer1963helly}, in the $(m-1)$-dimensional affine hull, we can choose a minimal infeasible
    subset $B\subseteq \supp(Q)$ with $|B|\le m$.
    Since every $\widehat{i}$ with $P(\supp(Q)\setminus\{\widehat{i}\})\neq\varnothing$ must lie in every minimal infeasible subset, we have $\widehat{i}\in B$.
    Thus there are at most $m$ such distinct agents, and again at most one copy per agent can lie in $X(Q)$.
    Hence $|X(Q)|\le m$.
\end{description}    
This proves \eqref{eq:XQ-bound}.
Finally, we count pairs $(N_\mathrm{rand},i)$ such that $|N_\mathrm{rand}|=r'$ and $i\in V_\mathrm{copy}(N_\mathrm{rand})$.

On one hand, the count equals $\sum_{N_\mathrm{rand} : |N_\mathrm{rand}|=r'} |V_\mathrm{copy}(N_\mathrm{rand})|$.
On the other hand, mapping $(N_\mathrm{rand},i)\mapsto Q:=N_\mathrm{rand}\cup\{i\}$ gives a bijection to pairs $(Q,i)$ with $|Q|=r'+1$ and $i\in X(Q)$.
Therefore,
\begin{equation*}
    \sum_{N_\mathrm{rand} : |N_\mathrm{rand}|=r'} |V_\mathrm{copy}(N_\mathrm{rand})|
    = \sum_{Q: |Q|=r'+1} |X(Q)|
    \leq m\binom{W}{r'+1},
\end{equation*}
where the rightmost inequality follows from the fact that there are $\binom{W}{r'+1}$ possible such subsets $Q$.

Then, dividing both sides by $\binom{W}{r'}$ and using the fact that $\binom{W}{r'+1}/\binom{W}{r'}=(W-r')/(r'+1)$, we get
\begin{equation*}
    \mathbb{E}[|V_\mathrm{copy}(N_\mathrm{rand})|] 
    \leq m \cdot \frac{W-r'}{r'+1}.
\end{equation*}
Finally, from \eqref{eq:viol-size}, we know that $|V_\mathrm{copy}(N_\mathrm{rand})|=w(V)$.
Substituting this into the inequality above gives us
\begin{equation*}
    \mathbb{E}[w(V)] \leq m \cdot \frac{W-r'}{r'+1} \leq m \cdot \frac{W}{r'+1},
\end{equation*}
as desired. Then, since \Cref{alg:cardinal_randomized} uses a sample size of $r=16(m-1)^2$, for $m \geq 2$, we get that
\begin{equation*}
    m \cdot \frac{W}{r'+1} \leq m \cdot \frac{W}{16(m-1)^2 + 1} \leq \frac{W}{8(m-1)},
\end{equation*}
as desired.
\end{proof}

The remainder of the proof can broken down into four steps:
\begin{enumerate}
    \item Construct a small ``witness set'' $B^*$ of agents of size $\leq m$ that can ``explain'' feasibility or infeasibility.
    \item Formalize the random process so we can reason about expectations cleanly.
    \item Define a potential function $\Phi_t$ and prove it has positive expected drift until termination, thus giving us $\mathbb{E}[T] = \mathcal{O}(m \log n)$, where $T$ is the number of iterations of the \textbf{while} loop.
    \item Convert the iteration bound into the final expected query bound.
\end{enumerate}

\textbf{Step 1: Constructing the witness set $B^*$.}

Next, we construct the \emph{witness set} $B^* \subseteq N$.
Intuitively, this is a set of agents of size at most $m$ that ``explains'' feasibility (or infeasibility).
Intuitively in feasible instances, agents in $B^*$ should be sufficient to ``pin down'' the final (unanimously acceptable lottery) $\bx_N$; or in infeasible instances, it already certifies infeasibility.

Fix a set $B^*\subseteq N$ as follows:
\begin{itemize}
    \item If $P(N)\neq\varnothing$ (feasible instance), let $B^*$ be any basis of $N$; by Lemma~\ref{lem:basis}, $|B^*|\le m-1$.
    \item If $P(N)=\varnothing$ (infeasible instance), by Helly's theorem \citep{danzer1963helly} there exists a $B^*\subseteq N$ with $|B^*|\leq m$
such that $P(B^*)=\varnothing$.
\end{itemize}
Let $b:=|B^*|\le m$.
Now, algorithmically, each iteration of the \textbf{while} loop (Line~\ref{line:alg_rand_while}) samples some constraints from the subset of agents $N' \subseteq N$ and computes $\bx_{N'}$ (Line~\ref{line:alg_rand_WeightedSample}).
If $\bx_{N'}$ is not unanimously acceptable, then we can define a violator set $V({N'}) := \{i \in N : \bx_{N'} \notin \mathcal{A}_i\}$.
Then the following lemma proves that if there is a violation, at least one violator must lie in $B^*$.

This lemma is the key that links our weight doubling update to progress (which we will see with the defined potential function later).

\begin{lemma} \label{lem:hit}
    For any $N' \subseteq N$ with $\bx_{N'} \neq \Null$, if $V(N') \neq \varnothing$ then $V(N') \cap B^* \neq \varnothing$.
\end{lemma}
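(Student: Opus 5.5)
We argue by contradiction: assume $\bx_{N'}\neq\Null$, $V(N')\neq\varnothing$, and yet $V(N')\cap B^*=\varnothing$. Then every agent of $B^*$ accepts $\bx_{N'}$, so $\bx_{N'}\in P(B^*)$. We split according to how $B^*$ was chosen in Step~1.

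\textbf{Infeasible case ($P(N)=\varnothing$).} Here $B^*$ was chosen with $P(B^*)=\varnothing$. This contradicts $\bx_{N'}\in P(B^*)$ immediately, so the lemma holds. (Equivalently, every candidate lottery violates some agent of $B^*$.)

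\textbf{Feasible case ($P(N)\neq\varnothing$).} Here $B^*$ is a basis of $N$, so $P(B^*)\neq\varnothing$ and $\bx_{B^*}=\bx_N$ is the lexicographic maximum of $P(B^*)$. Since $\bx_{N'}\in P(B^*)$, we get $\bx_{N'}\preceq_{\mathrm{lex}}\bx_N$.

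For the reverse comparison, note that $N'\subseteq N$ gives $P(N)\subseteq P(N')$, so $\bx_N\in P(N')$. Because $\bx_{N'}$ is the lexicographic maximum of $P(N')$, we get $\bx_N\preceq_{\mathrm{lex}}\bx_{N'}$. The lexicographic order is a total order on $\mathbb{R}^m$, so $\bx_{N'}=\bx_N$.

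Finally, $\bx_N\in P(N)$ means every agent accepts $\bx_{N'}$, so $V(N')=\varnothing$. This contradicts the assumption $V(N')\neq\varnothing$.

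The only step needing care is the containment $P(N)\subseteq P(N')$ applied to $N'\subseteq N$. In the algorithm $N'=\supp(N_\mathrm{rand})$, which is always a subset of $N$, so this holds. We also use the definitions of $P(\cdot)$ via $\mathcal{A}_i$ (which already handle \textsc{AcceptAll} agents) and the uniqueness of lexicographic maxima. Neither Lemma~\ref{lem:basis} nor Helly's theorem is needed beyond their role in constructing $B^*$.
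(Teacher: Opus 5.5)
Your proof is correct and follows essentially the same route as the paper. The infeasible case is immediate from $P(B^*)=\varnothing$. In the feasible case you use the same lexicographic sandwich: $\bx_{N'}\in P(B^*)$ gives $\bx_{N'}\preceq\bx_N$, and $\bx_N\in P(N')$ gives the reverse, so $\bx_{N'}=\bx_N$, which contradicts $V(N')\neq\varnothing$.
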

\begin{proof}
    If $P(N)=\varnothing$ (and consequently $P(B^*)=\varnothing$), then for any lottery $\bx \in \Delta(S)$, there exists an agent $i\in B^*$ with $\bx\notin \mathcal{A}_i$, so in particular $V(N')\cap B^*\neq\varnothing$.

    Thus, we assume that $P(N)\neq\varnothing$ and $B^*$ is a basis of $N$.
    Let $\bx^* := \bx_{N} = \bx_{B^*}$ be the unique tie-broken optimum for the full instance.
    Suppose for contradiction that $\bx_{N'}$ violates some agent, but none of those violators are in $B^*$, i.e., $V(N')\cap B^*=\varnothing$ but $V(N')\neq\varnothing$.
    This means that $\bx_{N'}$ satisfies every constraint in $B^*$, i.e., $\bx_{N'}\in P(B^*)$.
    
    Since $\bx^*$ is the unique lexicographically maximum point in $P(B^*)$ and $\bx_{N'}\in P(B^*)$,
    it follows that $\bx_{N'}$ is not lexicographically larger than $\bx^*$.
    On the other hand, $\bx^*$ is feasible for $P(N')$ (because it satisfies all constraints in $N$ and $N'\subseteq N$),
    and $\bx_{N'}$ is the lexicographically maximum point in $P(N')$, so $\bx_{N'}$ is lexicographically at least $\bx^*$.
    Therefore, $\bx_{N'} = \bx^*$.
    However, $\bx^*$ is feasible for all agents, so it cannot have violators, i.e., $V(N')\neq\varnothing$, which is a contradiction.
\end{proof}
Thus, any violating candidate lottery must violate at least one agent in the witness set $B^*$.

\textbf{Step 2: Defining the random process.}

Note that all randomness in Algorithm~\ref{alg:cardinal_randomized} comes from the calls to $\mathrm{WeightedSample}$.
Let $T$ be the (random) index of the iteration in which the algorithm terminates (i.e., returns $\bx$ or $\Null$).
For $t\ge 1$, let:
\begin{itemize}
    \item $w^{(t)} = (w_1^{(t)}, \dots, w_n^{(t)} )\in\mathbb{Z}^n_{\ge 1}$ be the weight vector at the \emph{start} of iteration $t$, and $W^{(t)}:=\sum_{i=1}^n w^{(t)}_i$.
    \item $r^{(t)}:=\min\{r,W^{(t)}\}$ and $N_\mathrm{rand}^{(t)}\subseteq N_\mathrm{multi}(w^{(t)})$ be the random sample of size $r^{(t)}$.
    \item Let $\bx_t:=\bx_{\supp(N_\mathrm{rand}^{(t)})}$. If $\bx_t=\Null$, then the algorithm terminates at iteration $t$.
    Otherwise let $V^{(t)}:=V(\supp(N_\mathrm{rand}^{(t)}))=\{i \in N: \bx_t\notin \mathcal{A}_i\}$ and define the \emph{violator weight} (i.e., total weight of violators at iteration $t$) as
    \begin{equation*}
        \omega^{(t)}\ :=\
        \begin{cases}
        \sum_{i \in V_t} w_i^{(t)} & \text{if }\bx_t\neq\Null,\\
        0 & \text{if }\bx_t=\Null.
        \end{cases}
    \end{equation*}
    If $t<T$, then necessarily $\bx_t\neq\Null$ and $V^{(t)}\neq\varnothing$.
\end{itemize}

Define the filtration $\{\mathcal{F}_t\}_{t\ge 0}$ by letting $\mathcal{F}_0$ be trivial and $\mathcal{F}_t := \sigma(N_\mathrm{rand}^{(1)},\dots,N_\mathrm{rand}^{(t)})$ for each $t\geq 1$.\footnote{Here, $\sigma(\cdot)$ denotes $\sigma$-algebra generated by the input.}
Intuitively, $\mathcal{F}_t$ is the history of samples  $N_\mathrm{rand}^{(1)}, \dots, N_\mathrm{rand}^{(t)}$ up to time $t$. 
Later on, when we condition on $\mathcal{F}_{t-1}$, it means we freeze everything that happened in rounds $1,\dots,t-1$, and (the random sample in) round $t$ is the only remaining randomness.

Let $(N_\mathrm{rand}^{(1)},\dots,N_\mathrm{rand}^{(t)})$ be the sample history.
Then, conditioned on $(N_\mathrm{rand}^{(1)},\dots,N_\mathrm{rand}^{(t)})$, the algorithm's state (and in particular $w^{(t+1)}$ and $W^{(t+1)}$) is fixed, because all other steps are deterministic.
To see this, given the sample history up to $t$, we know (i) which agents were sampled, (ii) which constraints were learned, (iii) which $\bx_t$ was computed (since $\Select$ is deterministic under the fixed lexicographic tie-breaking rule) and (iv) which agents are violators, and thus their new weight vector.

Furthermore, $T$ is a \emph{stopping time} with respect to this filtration (because whether the algorithm has terminated by the end of iteration $t$ is determined by $(N_\mathrm{rand}^{(1)},\dots,N_\mathrm{rand}^{(t)})$).
Intuitively, this means that we can tell whether we have stopped by time $t$ using only information revealed up to time $t$; we don't need future samples.
We require this because later on, we will have condition expectation terms multiplied by an indicator for whether the algorithm has stopped.
In order to correctly pull that indicator outside the conditional expectation, we need this fact (or in other words, this indicator must be measurable with respect to the $\sigma$-field we condition on).

\textbf{Step 3: A drift bound for the potential function and $\mathbb{E}[T]$.}

The key idea is that we want to upper bound $\mathbb{E}[T]$, the expected number of iterations.
To do so, we will define a potential $\Phi_t \in [0,1]$ for round $t$, show that before termination, $\log \Phi_t$ increases by at least a constant in expectation each round, and since $\log \Phi_t \leq 0$ always, and we start from a negative value, we can only have about $\mathcal{O}(m\log n)$ rounds.

Fix the witness set $B^*\subseteq N$ constructed in Step~1, and recall that $b:=|B^*|\le m$.
Define the potential at the start of iteration~$t$ by
\begin{equation*}
    \Phi_t := \frac{\prod_{i\in B^*} w^{(t)}_i}{(W^{(t)})^{b}}.
\end{equation*}
Intuitively, the numerator $\prod_{i\in B^*} w^{(t)}_i$ tracks how big the witnesses weights are (multiplicatively); whereas the denominator $(W^{(t)})^{b}$ normalizes by the overall scale of weights (since all weights might grow).
Raising $W$ to the power of $b$ ensures the potential is always at most $1$.
Thus, note that $0<\Phi_t\le 1$ for all $t$, and initially $\Phi_1 = n^{-b}$ since $w^{(1)}_i=1$ and $W^{(1)}=n$.

For $t<T$, the iteration is nonterminal, so $\bx_t\neq \Null$ and the violator set
$V^{(t)}:=\{i\in N : \bx_t\notin \mathcal{A}_i\}$ is nonempty.
Recall that the total violator weight is $\omega^{(t)}:=\sum_{i\in V^{(t)}} w^{(t)}_i$.

\begin{lemma} \label{drift_potential} %[Drift of $\log \Phi_t$]
Assume $m\ge 2$ and Algorithm~3 uses $r=16(m-1)^2$.
There exists an absolute constant $c_0>0$ (e.g., $c_0=\log 2-\tfrac14$) such that for every $t\ge 1$,
\[
\mathbb{E}\left[ (\log \Phi_{t+1}-\log \Phi_t)\,\mathbf{1}[t<T] \,\middle|\, \mathcal{F}_{t-1}\right]
\ge c_0\cdot \mathbf{1}[t<T].
\]
\end{lemma}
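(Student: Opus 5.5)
On the event $\{t<T\}$, which is determined by the sample history and hence by $\mathcal{F}_t$, I split the increment of $\log\Phi_t$ into the change of the numerator and the change of the denominator:
\[
\log \Phi_{t+1}-\log \Phi_t=\sum_{i\in B^*}\log\frac{w_i^{(t+1)}}{w_i^{(t)}} \;-\; b\log\frac{W^{(t+1)}}{W^{(t)}} .
\]
Because $t<T$, we have $\bx_t\neq\Null$ and $V^{(t)}\neq\varnothing$. Lemma~\ref{lem:hit}, applied with $N'=\supp(N_\mathrm{rand}^{(t)})$, then gives some $i\in B^*\cap V^{(t)}$. That agent's weight doubles, so the numerator term is at least $\log 2$ pointwise on $\{t<T\}$.

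For the denominator, only violators are doubled, so $W^{(t+1)}=W^{(t)}+\omega^{(t)}$. Using $\log(1+z)\le z$, the denominator term is at most $b\,\omega^{(t)}/W^{(t)}\le m\,\omega^{(t)}/W^{(t)}$. Combining the two bounds,
\[
(\log\Phi_{t+1}-\log\Phi_t)\mathbf{1}[t<T]\ \ge\ \log 2\cdot\mathbf{1}[t<T]-m\frac{\omega^{(t)}}{W^{(t)}}\mathbf{1}[t<T].
\]

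Next I take $\mathbb{E}[\cdot\mid\mathcal{F}_{t-1}]$. Conditional on $\mathcal{F}_{t-1}$, the weight vector $w^{(t)}$ and $W^{(t)}$ are fixed, and $N^{(t)}_\mathrm{rand}$ is a uniform $r^{(t)}$-subset of $N_\mathrm{multi}(w^{(t)})$. Lemma~\ref{lem:sampling} therefore applies conditionally. If $W^{(t)}\le r$, the sample is everything, so $\omega^{(t)}=0$ (and in fact the algorithm terminates, making the bound trivial). Otherwise $\mathbb{E}[\omega^{(t)}\mid\mathcal{F}_{t-1}]\le W^{(t)}/(8(m-1))$. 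Since $\omega^{(t)}\ge 0$ and I drop the indicator on the penalty term, the penalty is at most $m/(8(m-1))\le 1/4$ for $m\ge2$.

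The main obstacle is the indicator $\mathbf{1}[t<T]$, which is $\mathcal{F}_t$-measurable but not $\mathcal{F}_{t-1}$-measurable. The lemma's right-hand side $c_0\mathbf{1}[t<T]$ is therefore random inside the conditional expectation, so I must read the statement as I plan to use it downstream. I would handle this by proving the pointwise bound above and then taking conditional expectations of both sides. This yields $\mathbb{E}[\Delta\mathbf{1}[t<T]\mid\mathcal{F}_{t-1}]\ge \log 2\,\Pr[t<T\mid\mathcal{F}_{t-1}]-\tfrac14\mathbf{1}[t-1<T]$. That is the sense the summation step needs: sum over $t$, use $\log\Phi\le 0$, and compare with $\log\Phi_1=-b\log n$.

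To reach exactly $c_0=\log2-\tfrac14$ per nonterminal round, I would refine the penalty to keep the indicator: $\mathbb{E}[\omega^{(t)}\mathbf{1}[t<T]\mid\mathcal{F}_{t-1}]\le\mathbb{E}[\omega^{(t)}\mid\mathcal{F}_{t-1}]$, accepting that the inequality holds after taking total expectation. If the conditional form is genuinely required, I would instead state the bound as $\ge c_0$ on the $\mathcal{F}_{t-1}$-event $\{t-1<T\}$, minus the harmless term coming from termination at round $t$. This is the step where I expect to adjust the statement's bookkeeping slightly.
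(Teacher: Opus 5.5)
Your argument is the paper's: Lemma~\ref{lem:hit} forces a doubling inside $B^*$, $W^{(t+1)}=W^{(t)}+\omega^{(t)}$, and Lemma~\ref{lem:sampling} gives $\mathbb{E}[\omega^{(t)}\mid\mathcal{F}_{t-1}]\le W^{(t)}/(8(m-1))$. Applying $\log(1+z)\le z$ pointwise instead of Jensen first makes no difference.

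Your measurability objection is correct, and it identifies a gap in the paper's own proof. That proof moves $\log 2\cdot\mathbf{1}[t<T]$ outside $\mathbb{E}[\cdot\mid\mathcal{F}_{t-1}]$, although $\{t<T\}$ is only $\mathcal{F}_t$-measurable. The literal statement can in fact fail. Take $m=2$, sixteen accept-all agents, and one agent requiring $x_2\ge 1/2$. Then $\{1<T\}$ has probability $1/17$, while the (deterministic) left side at $t=1$ is at most $(\log 2)/17<c_0$.

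Your corrected form, $\log 2\,\Pr[t<T\mid\mathcal{F}_{t-1}]-\tfrac14\mathbf{1}[t-1<T]=c_0\mathbf{1}[t-1<T]-\log 2\,\Pr[T=t\mid\mathcal{F}_{t-1}]$, is the right statement. It still yields $\mathbb{E}[T]\le 1+(b\log n+\tfrac14)/c_0$ in Lemma~\ref{lem:ET}.

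Drop the ``keep the indicator'' refinement, though. Since $\omega^{(t)}=0$ whenever round $t$ is terminal, $\mathbb{E}[\omega^{(t)}\mathbf{1}[t<T]\mid\mathcal{F}_{t-1}]=\mathbb{E}[\omega^{(t)}\mid\mathcal{F}_{t-1}]$, so nothing is gained. Lemma~\ref{lem:sampling} does not control the violator weight conditional on non-termination. Consequently the inequality $\mathbb{E}[(\log\Phi_{t+1}-\log\Phi_t)\mathbf{1}[t<T]]\ge c_0\Pr[t<T]$ is false even in total expectation. To see this, replace the single constraining agent above by sixteen copies: on the (rare) non-terminating first round the increment is exactly $\log(4/3)<c_0$. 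Commit instead to your last formulation, with $c_0$ on $\{t-1<T\}$ minus the termination term.
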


\begin{proof}
On the event $\{t<T\}$ we have $V^{(t)}\neq\varnothing$ and $x_t\neq \Null$.
By Lemma~\ref{lem:hit}, $V^{(t)}\cap B^*\neq\varnothing$, so at least one weight in $B^*$ doubles in iteration~$t$.
Hence $\prod_{i\in B^*} w^{(t+1)}_i \ge 2\prod_{i\in B^*} w^{(t)}_i$.
Also the total weight updates as $W^{(t+1)} = W^{(t)} + \omega^{(t)}$.
Therefore, on $\{t<T\}$,
\[
\Phi_{t+1}
\;\ge\;
\Phi_t \cdot \frac{2}{\bigl(1+\omega^{(t)}/W^{(t)}\bigr)^b},
\]
and so
\[
\log \Phi_{t+1}-\log \Phi_t
\;\ge\;
\log 2 - b\log\left(1+\frac{\omega^{(t)}}{W^{(t)}}\right).
\]
Condition on $\mathcal{F}_{t-1}$, so $w^{(t)}$ and $W^{(t)}$ are fixed.
Lemma~\ref{lem:sampling} (with $r=16(m-1)^2$ and $r^{(t)}=\min\{r,W^{(t)}\}$) gives
$\mathbb{E}[\omega^{(t)}\mid \mathcal{F}_{t-1}] \le W^{(t)}/(8(m-1))$ for $m\ge 2$.
Since $\log(1+u)$ is concave,
\[
\mathbb{E}\left[\log\left(1+\frac{\omega^{(t)}}{W^{(t)}}\right)\middle|\mathcal{F}_{t-1}\right]
\le
\log\left(1+\frac{\mathbb{E}[\omega^{(t)}\mid \mathcal{F}_{t-1}]}{W^{(t)}}\right)
\le
\log\left(1+\frac{1}{8(m-1)}\right).
\]
Using $b\le m$ and $\log(1+u)\le u$,
\[
b\log\left(1+\frac{1}{8(m-1)}\right)
\le
m\cdot \frac{1}{8(m-1)}
\le
\frac14.
\]
Thus
\[
\mathbb{E}\left[(\log \Phi_{t+1}-\log \Phi_t)\mathbf{1}[t<T]\middle|\mathcal{F}_{t-1}\right]
\ge
\left(\log 2-\frac14\right)\mathbf{1}[t<T],
\]
so we may take $c_0=\log 2-\tfrac14>0$.
\end{proof}

\begin{lemma} \label{lem:ET}
$\mathbb{E}[T]=\mathcal{O}(m\log n)$.
\end{lemma}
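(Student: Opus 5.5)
The plan is a standard optional-stopping argument on the potential $\log\Phi_t$, using the drift bound of Lemma~\ref{drift_potential}. Define $Y_t := \log\Phi_{t\wedge T}$. Since $0<\Phi_t\le 1$, we have $Y_t\le 0$ for all $t$, and $Y_1=\log\Phi_1=-b\log n\ge -m\log n$.

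Telescoping gives
\[
Y_{t+1}-Y_1=\sum_{s=1}^{t}(\log\Phi_{s+1}-\log\Phi_s)\mathbf{1}[s<T].
\]
We take expectations term by term and use the tower property conditioned on $\mathcal{F}_{s-1}$. Because $T$ is a stopping time, the event $\{s<T\}$ is $\mathcal{F}_s$-measurable. This is where I expect the main care to be needed: Lemma~\ref{drift_potential} is stated conditionally on $\mathcal{F}_{t-1}$ but with the indicator $\mathbf{1}[t<T]$ inside the expectation on both sides. It would be cleaner to pass to the $\mathcal{F}_{s-1}$-measurable event $\{s\le T\}$ (equivalently $\{T>s-1\}$). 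I would therefore first check that Lemma~\ref{drift_potential} can be read literally as stated. Taking unconditional expectations of both sides yields
\[
\mathbb{E}\bigl[(\log\Phi_{s+1}-\log\Phi_s)\mathbf{1}[s<T]\bigr]\ge c_0\Pr[s<T].
\]
Summing over $s=1,\dots,t$ then gives
\[
0\ \ge\ \mathbb{E}[Y_{t+1}]\ \ge\ -m\log n+c_0\sum_{s=1}^{t}\Pr[T>s].
\]

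Hence $\sum_{s=1}^{t}\Pr[T>s]\le (m\log n)/c_0$ for every $t$. Letting $t\to\infty$ and using $\mathbb{E}[T]=\sum_{s\ge 0}\Pr[T>s]=1+\sum_{s\ge1}\Pr[T>s]$, we get $\mathbb{E}[T]\le 1+(m\log n)/c_0=\mathcal{O}(m\log n)$, with $c_0=\log 2-\tfrac14$.

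One technicality remains: every expectation used must be finite so that linearity applies. Truncating at $t$ handles this, since each $Y_{t+1}$ is bounded below by a deterministic quantity. Weights at most double each round, so $\log\Phi$ changes by at most $O(m)$ per round, giving $Y_{t+1}\ge -m\log n - O(mt)$. The bound on $\mathbb{E}[T]$ itself then follows by monotone convergence, with no a priori finiteness of $T$ needed.
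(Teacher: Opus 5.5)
This is the paper's proof: telescope $\log\Phi$ along the stopped process up to a truncation level, apply Lemma~\ref{drift_potential} in expectation, use $\Phi\le 1$ and $\Phi_1=n^{-b}$, and let the truncation level go to infinity; your boundedness and monotone-convergence remarks just make explicit what the paper leaves implicit. Your measurability worry is justified and applies equally to the paper: the argument for Lemma~\ref{drift_potential} controls the loss term $b\log(1+\omega^{(s)}/W^{(s)})$ only on the $\mathcal{F}_{s-1}$-measurable event $\{T\ge s\}$, whereas the $\log 2$ gain occurs only on $\{T>s\}$, so the per-step bound $\mathbb{E}[(\log\Phi_{s+1}-\log\Phi_s)\mathbf{1}[s<T]]\ge c_0\Pr[T>s]$ that you use is not actually delivered; however, summing first and using $\Pr[T\ge s]=\Pr[T>s-1]$ gives
\[
\sum_{s=1}^{K-1}\mathbb{E}\bigl[(\log\Phi_{s+1}-\log\Phi_s)\mathbf{1}[s<T]\bigr]
\;\ge\;\log 2\sum_{s=1}^{K-1}\Pr[T>s]-\tfrac14\sum_{s=1}^{K-1}\Pr[T\ge s]
\;\ge\;c_0\sum_{s=1}^{K-1}\Pr[T>s]-\tfrac14,
\]
so $\mathbb{E}[T]\le 1+(b\log n+\tfrac14)/c_0$ and the conclusion is unaffected.
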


\begin{proof}
Fix $K\ge 1$ and let $T_K:=\min\{T,K\}$.
Telescoping gives
\[
\log \Phi_{T_K}-\log \Phi_1
=
\sum_{t=1}^{K-1} (\log \Phi_{t+1}-\log \Phi_t)\mathbf{1}[t<T].
\]
Take expectations and apply the drift lemma:
\[
\mathbb{E}[\log \Phi_{T_K}] - \log \Phi_1
\;\ge\;
c_0\,\mathbb{E}[\min\{T-1,K-1\}].
\]
Since $\Phi_{T_K}\le 1$, we have $\mathbb{E}[\log \Phi_{T_K}]\le 0$.
Therefore
\[
\mathbb{E}[\min\{T-1,K-1\}] \;\le\; \frac{\log(1/\Phi_1)}{c_0}.
\]
Letting $K\to\infty$ and using $\Phi_1=n^{-b}$ gives
\[
\mathbb{E}[T]\le 1 + \frac{b\log n}{c_0} = \mathcal{O}(m\log n),
\]
since $b\le m$ and $c_0$ is an absolute constant.
\end{proof}

\medskip
\textbf{Step 4: Expected number of $\LearnHyperplane$ calls and query complexity.}

Finally, we prove the expected number of $\LearnHyperplane$ calls and conclude with the query complexity of \Cref{alg:cardinal_randomized}.

Each iteration samples $r^{(t)}\le r$ copies, hence at most $r$ \emph{distinct} agents appear in $\supp(N_\mathrm{rand}^{(t)})$.
Since each agent's hyperplane is learned at most once, the total number of distinct $\LearnHyperplane$ calls is at most $L \leq \min \{n, rT\}$.
Since the function $y \mapsto \min\{n, ry\}$ is concave, taking expectations and using Jensen's inequality, we get $\mathbb{E}[L]\leq \min\{n,\,r\,\mathbb{E}[T]\} = \mathcal{O}(\min\{n,\ m^3\log n\})$, where the equality is using Lemma~\ref{lem:ET} and $r=\Theta((m-1)^2)$.

Finally, in any iteration that reaches Line~\ref{line:alg_rand_verification}, the verification step queries all $n$ agents once; thus each iteration uses at most $n$ verification queries, and the total verification cost is at most $nT$ queries.
Each call to $\LearnHyperplane$ makes $\mathcal{O}(m\log(1/\eps))$ queries, by Lemma~\ref{lem:learnhyperplane_queries}.
Thus the total expected query complexity is
\begin{equation*}
    \mathcal{O}\left(n\,\mathbb{E}[T] \right) + 
    \mathcal{O}\left(m\log(1/\eps)\cdot \mathbb{E}[L] \right) = \mathcal{O}(n m \log n + \min \{n, m^3 \log n\}\cdot m \log (1/\eps)),
\end{equation*}
as desired.

\subsection{Remark on finding infeasibility witness set}\label{app:infeasibilitywitness}
    When our algorithms output $\Null$, they can also output an explicit infeasibility witness:
    a subset of agents $B\subseteq N$ such that $\bigcap_{i\in B} \mathcal{A}_i = \varnothing$.
    This can be done without any additional membership queries.
    
    If $\LearnHyperplane(i)$ returns \textsc{RejectAll}, we may take $B=\{i\}$.
    Otherwise, $\Null$ is returned only when $\Select$ is infeasible on a set of already learned
    halfspaces; since these constraints are explicit, Helly's theorem \citep{Helly1923} implies there exists an infeasible
    subcollection of size at most $m$ in the $(m\!-\!1)$-dimensional affine hull.
    Such a witness subset can be extracted offline (e.g., via a standard LP routine) and returned alongside $\Null$.

\section{Omitted Proofs from \Cref{sec:lb}}
\subsection{Proof of Theorem~\ref{thm:lowerbound_query}}

We prove the lower bound for deterministic algorithms first, and then extend it to randomized algorithms.

Any deterministic query algorithm can be represented as a (possibly infinite) binary decision tree $T$.
Each internal node is labeled by a query $\tQ(i,\bx)$, where $i\in N$ and $\bx\in \Delta(S)$.
The two outgoing edges correspond to the oracle answers \texttt{True} and \texttt{False}.
Each leaf is labeled either by an output lottery $\bx^* \in \Delta(S)$ or by $\Null$.
For an instance $\I$, let $Q_T(\I)$ denote the number of queries made on $\I$, i.e., the depth of the root-to-leaf path followed by $T$ on $\I$. 
The worst-case query complexity of $T$ is $\sup_\I Q_T(\I)$.

We call $T$ \emph{correct} if on every instance $\I$ it outputs a unanimously acceptable lottery when one exists, and outputs $\Null$ otherwise.
Then, we have the following lemma.

\begin{lemma}\label{lem:D1}
    Let $T$ be any correct decision tree. Fix any feasible instance $\I$ (i.e., there exists $\bx\in \Delta(S)$ such that $U_i(\bx)\geq \tau_i$ for all $i\in N$). 
    Then along the (unique) root-to-leaf path traversed by $T$ on $\I$, every agent $i\in N$ is queried at least once. In particular, $Q_T(\I) \geq n$.
\end{lemma}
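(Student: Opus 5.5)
We argue by contradiction with an adversarial modification of the instance. Suppose that on the feasible instance $\I$ the path of $T$ never queries some agent $i_0\in N$. Since $T$ is correct and $\I$ is feasible, this path ends at a leaf labeled by a lottery $\bx^*\in\Delta(S)$ (not $\Null$), and $\bx^*$ is unanimously acceptable in $\I$. We then build a second instance $\I'$ that agrees with $\I$ on every agent except $i_0$, and in which agent $i_0$ rejects every lottery. Concretely, we set $u'_{i_0}(s_j)=0$ for all $j$ and $\tau'_{i_0}=\eps$. Both values are valid in the $\eps$-precision model, since $\tau\in\{\eps,\dots,1\}$ and $u\in\{0,\eps,\dots,1\}$. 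Then $U'_{i_0}(\bx)=0<\tau'_{i_0}$ for all $\bx$, so $\I'$ is infeasible.

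The key step is to show that $T$ follows the same path on $\I'$ as on $\I$. We prove this by induction along the path. Every query on the path has the form $\tQ(i,\bx)$ with $i\neq i_0$, and its answer depends only on $(\mathbf{u}_i,\tau_i)$, which is identical in $\I$ and $\I'$. So each answer coincides and the same branch is taken. Because $T$ is deterministic, it reaches the same leaf and outputs $\bx^*$ on $\I'$. But $\I'$ is infeasible, so correctness requires $\Null$. This contradiction shows that every agent is queried on the path, and since each query names a single agent, the path has at least $n$ internal nodes, giving $Q_T(\I)\ge n$.

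The only delicate point is checking that the modified agent stays within the admissible parameter class (quantized utilities, positive threshold). The choice above handles this. If binary utilities are desired for the later lower bound, this choice also lies in that class: all-zero utilities with threshold $\eps$, or equivalently threshold $1$. Note also that the argument does not use finiteness of the tree; it only uses the finiteness of the single path traversed on $\I$, which holds because $Q_T(\I)$ is assumed finite (otherwise the bound is trivial).
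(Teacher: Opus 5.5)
Your proposal is correct and follows essentially the same argument as the paper. Both replace an unqueried agent with one having all-zero utilities and threshold $\eps$, which makes the instance infeasible while leaving every answer on the path unchanged. The tree therefore outputs the same non-$\Null$ lottery on the infeasible instance, contradicting correctness.
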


\begin{proof}
    Let $P$ be the root-to-leaf path followed by $T$ on $\I$. Since $\I$ is feasible and $T$ is correct, the reached leaf is labeled by some lottery $\bx^*$ (not $\Null$).
    
    Suppose for contradiction that there exists an agent $j\in N$ that is never queried along $P$.
    Construct a new instance $\I'$ by changing only agent $j$ as follows:
    set $u'_j(s)=0$ for all $s\in S$ and set $\tau'_j=\eps$.
    Then for every lottery $\bx\in \Delta(S)$, we have $U'_j(\bx)=0<\tau'_j$, so agent $j$ rejects every query.
    All other agents are unchanged, so every query made along $P$ (which never involves $j$) receives the same answer on $\I'$ as on $\I$. Therefore $T$ follows the same path $P$ on $\I'$ and outputs the same $\bx^*$.
    
    However, $\I'$ is infeasible (agent $j$ rejects all lotteries), so correctness requires output $\Null$, a contradiction. 
    Thus, every agent must be queried at least once along $P$.
\end{proof}

Next, we introduce a grid family of lotteries.
Since $1/\eps \in \mathbb{Z}_+$ by our finite-precision assumption, define the $\eps$-grid on the simplex by
\begin{equation*}
    \Delta_\eps(S) := \Bigl\{x\in \Delta(S) : x_j \in \{0,\eps,2\eps,\dots,1\}\ \forall j\in[m]\Bigr\},
    \quad \text{and} \quad
    \Delta_\eps^+(S) := \{x\in \Delta_\eps(S) : x_j \ge \eps\ \forall j\in[m]\}.
\end{equation*}
We only use these sets to define a hard family; the algorithm itself may query arbitrary rational lotteries.

\begin{lemma}\label{lem:D2}
    Fix $m\ge 2$ and $\eps>0$ such that $1/\eps \geq m$.
    For each $\bx=(x_1,\dots,x_m)\in \Delta_\eps^+(S)$, define an instance $\I_\bx$ with $m$ agents as follows: for each agent $i\in[m]$,
    \begin{equation*}
        u_i(s_i)=1,\quad u_i(s_j)=0\ \ (j\neq i), \text{ and }\quad \tau_i = x_i.
    \end{equation*}
    Then the set of unanimously acceptable lotteries in $\I_\bx$ is exactly the singleton $\{\bx\}$.
    Moreover, any correct deterministic decision tree has depth at least
    $\Omega((m-1)\log(1/\eps))$ on some instance $\I_\bx$.
\end{lemma}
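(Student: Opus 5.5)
The argument has two parts: an exact description of the feasible set of $\I_\bx$, and then a counting argument on the leaves of the decision tree.

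\textbf{Uniqueness of the feasible lottery.} In $\I_\bx$, agent $i$ has $U_i(\mathbf{y}) = y_i$ for every $\mathbf{y}\in\Delta(S)$, so agent $i$ accepts $\mathbf{y}$ if and only if $y_i\ge x_i$. Suppose $\mathbf{y}$ is unanimously acceptable. Then $y_i \ge x_i$ for all $i$, and summing gives $1=\sum_i y_i \ge \sum_i x_i = 1$. Hence every inequality is tight and $\mathbf{y}=\bx$. Conversely, $\bx$ itself satisfies every constraint with equality, so it is acceptable.

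We also check that $\I_\bx$ is a legal instance. The utilities lie in $\{0,1\}$, which is $\eps$-quantized. Each threshold $\tau_i = x_i$ lies in $\{\eps,\dots,1\}$, because $\bx\in\Delta_\eps^+(S)$ has all coordinates $\ge\eps$. The hypothesis $1/\eps\ge m$ guarantees that $\Delta_\eps^+(S)$ is nonempty.

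\textbf{Counting the hard instances.} Every $\I_\bx$ is feasible with the unique solution $\bx$, so a correct tree must output exactly $\bx$ on input $\I_\bx$. Distinct $\bx$ therefore reach distinct leaves, and the tree must have at least $|\Delta_\eps^+(S)|$ distinct leaves. A binary tree of depth $D$ has at most $2^D$ leaves, so some instance forces depth at least $\log_2|\Delta_\eps^+(S)|$.

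Write $K=1/\eps$ and $x_j=k_j\eps$ with integers $k_j\ge 1$ and $\sum_j k_j=K$. By stars and bars, $|\Delta_\eps^+(S)| = \binom{K-1}{m-1}$. This leaves the main task: showing
\[
\log\binom{K-1}{m-1} = \Omega\bigl((m-1)\log K\bigr).
\]

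\textbf{Main obstacle: the binomial estimate.} The standard bound $\binom{a}{b}\ge (a/b)^b$ gives
\[
\log\binom{K-1}{m-1}\ \ge\ (m-1)\log\frac{K-1}{m-1}.
\]
This is $\Omega((m-1)\log K)$ exactly when $\log(K/m)=\Omega(\log K)$. That is where the paper's standing assumption $m\le (1/\eps)^{1-\delta}$ for a fixed $\delta\in(0,1]$ comes in: under it, $\log(K/m)\ge\delta\log K$, which yields the claimed bound with the constant depending on $\delta$.

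Two small points need care in the write-up:
\begin{itemize}
\item $K-1$ versus $K$: for $K\ge 2$, $\log(K-1)\ge\tfrac12\log K$ once $K\ge 3$ or so. The tiny cases, including $m=2$ where $\binom{K-1}{1}=K-1$, can be checked directly.
\item Boundary regime: when $m$ is close to $K$ without the assumption, the count collapses and the bound fails; this is exactly why the assumption is needed.
\end{itemize}
I would state explicitly that the asymptotic claim relies on the standing assumption.
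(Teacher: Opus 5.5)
Your proposal is correct and follows the paper's proof essentially step for step. The shared steps are: uniqueness by summing $y_i\ge x_i$ over the simplex, injectivity of the instances into leaves, the stars-and-bars count $\binom{1/\eps-1}{m-1}$, the estimate $\binom{a}{b}\ge(a/b)^b$, and the standing assumption $m\le(1/\eps)^{1-\delta}$. As you rightly point out, that assumption is genuinely needed beyond the stated hypothesis $1/\eps\ge m$.

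One small simplification: the comparison you actually need is $\frac{K-1}{m-1}\ge\frac{K}{m}$, which holds whenever $K\ge m$. This is how the paper reaches $\log\frac{1}{\eps m}\ge\delta\log\frac1\eps$, and it makes your bullet about $\log(K-1)\ge\frac12\log K$ unnecessary. That bound would also lose too much by itself when $\delta\le\frac12$.
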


\begin{proof}
    For any lottery $\bx' = (x'_1, \dots, x'_m) \in \Delta(S)$ and any $i\in[m]$, the utility is $U_i(\bx')=x'_i$.
    Thus agent $i$ accepts $\bx'$ if and only if $x'_i \ge \tau_i = x_i$.
    
    First, $\bx$ is unanimously acceptable since $x_i=\tau_i$ for all $i$.
    Now take any unanimously acceptable $\bx'$. Then $x'_i \ge x_i$ for all $i\in[m]$.
    Let $\delta_i := x'_i-x_i \ge 0$. Because both $\bx$ and $\bx'$ lie in the simplex,
    \begin{equation*}
        \sum_{i=1}^m \delta_i = \sum_{i=1}^m x'_i - \sum_{i=1}^m x_i = 1-1=0,
    \end{equation*}
    so $\delta_i=0$ for all $i$, i.e., $\bx'=\bx$. Hence the unanimously acceptable set is $\{\bx\}$.
    
    Therefore, for distinct $\bx,\bx'\in \Delta_\eps^+(S)$, a correct algorithm must output different lotteries on $\I_\bx$
    and $\I_{x'}$. 
    In a decision tree, the output at a leaf is fixed, so $\I_\bx$ and $\I_{x'}$ must reach different leaves.
    Hence any correct tree has at least $|\Delta_\eps^+(S)|$ leaves.
    
    We now lower bound $|\Delta_\eps^+(S)|$. Writing $x_i = k_i\eps$ with integers $k_i\ge 1$,
    the constraint $\sum_{i = 1}^m x_i=1$ becomes $\sum_{i =1}^m k_i = 1/\eps$.
    The number of positive integer solutions is $|\Delta_\eps^+(S)| = \binom{1/\eps - 1}{m-1}$.
    A binary tree of depth $d$ has at most $2^d$ leaves, so $d \geq \log |\Delta_\eps^+(S)|$.
    
    Finally, using the standard inequality $\binom{a}{b} \ge (a/b)^b$ for $a\ge b\ge 1$,
    with $a=1/\eps-1$ and $b=m-1$, we get
    \[
    \log |\Delta_\eps^+(S)|
    = \log \binom{1/\eps-1}{m-1}
    \ge (m-1)\log\Bigl(\frac{1/\eps-1}{m-1}\Bigr)
    \ge (m-1)\log\Bigl(\frac{1}{\eps m}\Bigr).
    \]
    Then, since $m \le (1/\eps)^{1-\delta}$ for some fixed $\delta\in(0,1]$ (by our assumption in the preliminaries), we have $\frac{1}{\eps m} \ge (1/\eps)^\delta$, hence
    $\log(1/(\eps m)) \ge \delta \log(1/\eps)$.
    Thus $\log |\Delta_\eps^+(S)| = \Omega((m-1)\log(1/\eps))$, proving the lemma.
\end{proof}

We now prove the stated lower bound (our main result) for deterministic algorithms, by splitting into two cases.

\medskip
\noindent\textbf{Case 1: $n\ge m$.}
    For each $\bx \in \Delta_\eps^+(S)$, define an instance $\widetilde 
    \I_\bx$ with $n$ agents as follows.
    Agents $1,\dots,m$ are defined exactly as in Lemma~\ref{lem:D2}. The remaining $n-m$ agents are \emph{dummy} agents that accept every lottery: for each $i\in\{m+1,\dots,n\}$ set
    \begin{equation*}
        u_i(s_j)=1 \text{ for all } j \in[m] \quad \text{and} \quad \tau_i = 1.
    \end{equation*}
    Then $U_i(\bx')=1$ for every lottery $\bx'$, so these agents always answer \texttt{True}.
    The unanimously acceptable set of $\widetilde \I_\bx$ is still $\{\bx\}$, since the dummy agents impose no constraint and the first $m$ agents force uniqueness by Lemma~\ref{lem:D2}.

    Let $T$ be any correct decision tree and consider its behavior on the family of instances $\mathcal{F} := \{\widetilde \I_\bx : \bx\in \Delta_\eps^+(S)\}$.
    By Lemma~\ref{lem:D1}, on every feasible instance $\widetilde \I_\bx$, the path followed by $T$ queries each of the $n$ agents at least once.
    In particular, on each $\widetilde \I_\bx$ it makes at least $n-m$ queries to the dummy agents.

    Now prune $T$ by deleting every internal node that queries a dummy agent and contracting to its \texttt{True} child (which is the only possible answer on $\mathcal{F}$). 
    Call the pruned tree $T'$.
    For each instance $\widetilde \I_\bx \in \mathcal{F}$, the pruned execution path in $T'$ ends at a leaf labeled by $\bx$ (the same output), so distinct $\bx$ still reach distinct leaves of $T'$.
    Thus, $T'$ has at least $|\Delta_\eps^+(S)|$ leaves, implying that some $\widetilde \I_\bx$ satisfies $Q_{T'}(\widetilde \I_\bx) \geq \log|\Delta_\eps^+(S)|$.
    For that same instance,
    \begin{equation*}
        Q_T(\widetilde \I_\bx)\ \ge\ (n-m) + Q_{T'}(\widetilde \I_\bx)\ \ge\ (n-m) + \log|\Delta_\eps^+(S)|.
    \end{equation*}
    By Lemma~\ref{lem:D2}, $\log|\Delta_\eps^+(S)| = \Omega((m-1)\log(1/\eps))$, so
    \begin{equation*}
        \sup_\I Q_T(\I) \geq \Omega\bigl((n-m) + (m-1)\log(1/\eps)\bigr).
    \end{equation*}

\medskip
\noindent\textbf{Case 2: $n<m$.}
    Consider lotteries supported on the first $n$ alternatives:
    \begin{equation*}
        \Delta_{n}^+(S) := \Bigl\{\bx\in \Delta(S) : x_i \in \{\eps,2\eps,\dots,1\} \text{ for all } i\le n,\ x_i\ge \eps \text{ for all } i\le n, \text{ and } x_{n+1}=\cdots=x_m=0\Bigr\}.
    \end{equation*}
    For each $x\in \Delta_{n}^+(S)$ define an instance $\widehat \I_\bx$ with $n$ agents, where each agent $i\in[n]$ has
    \begin{equation*}
        u_i(s_i)=1, \quad  u_i(s_j)=0 \ (j\neq i), \text{ and } \quad \tau_i=x_i.
    \end{equation*}
    We claim the unanimously acceptable set of $\widehat{\I}_\bx$ is $\{\bx\}$. 
    Indeed, if a lottery $\bx'$ is unanimously acceptable, then for each agent $i\le n$ we have $U_i(\bx')=x'_i\ge \tau_i=x_i$. For $j>n$, $x_j=0$ and $x'_j\ge 0$, so $x'_j\ge x_j$ as well. 
    Thus, $\bx'\ge \bx$ coordinate-wise, but since $\bx,\bx'\in\Delta(S)$ we must have $\sum_j(x'_j-x_j)=0$, forcing $\bx'=\bx$.
    Therefore any correct decision tree must have at least $|\Delta_{n}^+(S)|$ leaves over the family $\{\widehat \I_\bx : \bx\in \Delta_{n}^+(S)\}$, and consequently, some instance requires at least $\log |\Delta_{n}^+(S)|$ queries.
    Moreover, $|\Delta_{n}^+(S)| = \binom{1/\eps - 1}{n-1}$, and under the assumption that $n<m\le (1/\eps)^{1-\delta}$ we obtain $\log |\Delta_{n}^+(S)| = \Omega((n-1)\log(1/\eps))$.
    This gives us
    \begin{equation*}
        \sup_\I Q_T(\I) \geq \Omega\bigl((n-1)\log(1/\eps)\bigr).
    \end{equation*}

    Combining the two cases gives us the deterministic lower bound
    \begin{equation*}
        \sup_\I Q_T(\I) \geq \Omega\Bigl((n-\min\{n,m\}) + (\min\{n,m\}-1)\log(1/\eps)\Bigr),
    \end{equation*}

    Now, we detail the extension of this lower bound to randomized algorithms.
    We interpret the query complexity of a randomized algorithm as worst-case expected number of queries (and the algorithm must be correct on every instance for every realization of its internal randomness).

    Formally, a randomized algorithm $R$ is a distribution over deterministic decision trees.
    Let $T_\sigma$ denote the deterministic tree obtained by fixing the random seed $\sigma$.
    Define the expected query complexity of $R$ on instance $\I$ as
    \begin{equation*}
        Q_R(\I) := \mathbb{E}_\sigma\bigl[Q_{T_\sigma}(\I)\bigr],
    \end{equation*}
    and its worst-case expected query complexity as $\sup_\I Q_R(\I)$.

    We use Yao's minimax principle \citep{Yao1977} in the following form.

    \begin{lemma}\label{lem:D3}
    For any distribution $\mu$ over instances,
    \begin{equation*}
        \inf_R\ \sup_\I\ Q_R(\I) \geq \min_T \mathbb{E}_{\I\sim \mu}\bigl[Q_T(\I)\bigr],
    \end{equation*}
    where the minimum ranges over correct deterministic decision trees $T$, and the infimum ranges over correct randomized
    algorithms $R$ (distributions over correct deterministic trees).
    \end{lemma}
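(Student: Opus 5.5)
This is the easy direction of Yao's principle, and the plan is to prove it by an averaging argument over the instance distribution.

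Fix any correct randomized algorithm $R$, i.e., a distribution over random seeds $\sigma$ such that every $T_\sigma$ is a correct deterministic decision tree. Fix also any distribution $\mu$ over instances. Since a supremum dominates any average, the first step is
\begin{equation*}
    \sup_\I Q_R(\I) \;\ge\; \mathbb{E}_{\I\sim\mu}\bigl[Q_R(\I)\bigr] \;=\; \mathbb{E}_{\I\sim\mu}\,\mathbb{E}_\sigma\bigl[Q_{T_\sigma}(\I)\bigr].
\end{equation*}

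Next we exchange the order of the two expectations. The integrand $Q_{T_\sigma}(\I)$ is nonnegative, possibly infinite, and the seed is independent of the instance. Tonelli's theorem therefore applies, with no integrability hypotheses needed, and gives
\begin{equation*}
    \mathbb{E}_{\I\sim\mu}\,\mathbb{E}_\sigma\bigl[Q_{T_\sigma}(\I)\bigr] \;=\; \mathbb{E}_\sigma\,\mathbb{E}_{\I\sim\mu}\bigl[Q_{T_\sigma}(\I)\bigr].
\end{equation*}
For every fixed seed $\sigma$, the tree $T_\sigma$ is correct, so the inner expectation is at least $\min_T \mathbb{E}_{\I\sim\mu}[Q_T(\I)]$. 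Averaging this over $\sigma$ preserves the bound. Finally, taking the infimum over all correct $R$ yields the lemma.

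The main obstacle is measure-theoretic bookkeeping rather than mathematics. First, we need $(\sigma,\I)\mapsto Q_{T_\sigma}(\I)$ to be jointly measurable. In every application in this paper $\mu$ is supported on a finite family of instances, such as $\{\widetilde\I_\bx\}$ or $\{\widehat\I_\bx\}$. With finite support, $\mathbb{E}_{\I\sim\mu}$ is a finite weighted sum, and linearity of expectation suffices in place of Tonelli. It therefore suffices to argue for finitely supported $\mu$, or simply to assume measurability of the seed space.

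Second, the ``$\min$'' over correct trees may not be attained. If so, we read it as an infimum; the chain of inequalities above is unchanged. In the finite-support case the minimum is attained anyway, because the expected depth takes values in a discrete set of weighted sums of integers.
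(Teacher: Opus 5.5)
Your proof is correct and follows the paper's argument essentially line for line: bound the supremum by the $\mu$-average, swap the two expectations, and lower-bound each correct deterministic tree's cost by the minimum. Your measure-theoretic remarks (Tonelli, finite support, reading $\min$ as $\inf$) are extra care that the paper omits; it too ends its chain with $\inf_T$.
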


    \begin{proof}
    Fix $\mu$ and a randomized algorithm $R$ (a distribution over correct deterministic trees). Then
    \begin{equation*}
        \sup_\I Q_R(\I) \geq \mathbb{E}_{\I\sim \mu}[Q_R(\I)]
    = \mathbb{E}_{\I\sim \mu}\mathbb{E}_\sigma[Q_{T_\sigma}(\I)]
    = \mathbb{E}_\sigma \mathbb{E}_{\I\sim \mu}[Q_{T_\sigma}(\I)]
    \ge \inf_T \mathbb{E}_{\I\sim \mu}[Q_T(\I)].
    \end{equation*}
    Taking $\inf_R$ on the left-hand side gives us the claim.
    \end{proof}

    We also use a standard average-depth bound.
\begin{lemma}\label{lem:D4}
    Let $T$ be a binary decision tree. 
    For any leaf $\ell$ of $T$,
    let $\mathrm{depth}(\ell)$ denote the number of internal nodes on the path from the root to $\ell$. Let $L$ be any finite set of leaves of $T$, and let
    $\ell \sim \mathrm{Unif}(L)$ be a uniformly random leaf in $L$.
    Then $\mathbb{E}[\mathrm{depth}(\ell)] \ge \log |L|$.
\end{lemma}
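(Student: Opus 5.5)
We prove Lemma~\ref{lem:D4} with the standard entropy/Kraft argument for prefix-free codes. The root-to-leaf paths of $T$ give each leaf $\ell$ a distinct binary string, namely its sequence of \texttt{True}/\texttt{False} answers, of length $\mathrm{depth}(\ell)$. Since no path is a proper prefix of another, these strings form a prefix-free code. The plan is to establish Kraft's inequality for the finite set $L$ and then apply Gibbs' inequality (nonnegativity of KL divergence) against the uniform distribution on $L$.

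\textbf{Step 1 (Kraft).} We show $\sum_{\ell\in L} 2^{-\mathrm{depth}(\ell)}\le 1$. A random walk from the root that picks each child with probability $1/2$ reaches a given leaf $\ell$ with probability exactly $2^{-\mathrm{depth}(\ell)}$. The events ``the walk reaches $\ell$'' are disjoint for distinct leaves, so the sum is at most $1$. This uses only the finite sum over $L$, so it remains valid even if $T$ is infinite.

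\textbf{Step 2 (Gibbs).} Let $q_\ell := 2^{-\mathrm{depth}(\ell)}/Z$ with $Z:=\sum_{\ell\in L}2^{-\mathrm{depth}(\ell)}\le 1$, so that $q$ is a probability distribution on $L$. Nonnegativity of $\mathrm{KL}(\mathrm{Unif}(L)\,\|\,q)$ gives
\[
\frac{1}{|L|}\sum_{\ell\in L}\log\frac{1/|L|}{q_\ell}\ \ge\ 0 .
\]
Rearranging,
\[
\mathbb{E}[\mathrm{depth}(\ell)] + \log Z \ \ge\ \log|L| .
\]
Since $\log Z\le 0$, this yields $\mathbb{E}[\mathrm{depth}(\ell)]\ge \log|L|$.

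As an alternative to Step 2, one can apply Jensen's inequality to the convex function $2^{-x}$:
\[
2^{-\mathbb{E}[\mathrm{depth}]}\ \le\ \frac{1}{|L|}\sum_{\ell\in L} 2^{-\mathrm{depth}(\ell)}\ \le\ \frac{1}{|L|},
\]
and taking logarithms gives the same bound.

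The only delicate point is Step 1 when $T$ is infinite. We handle it by restricting attention to the finite set $L$ and noting that the events are disjoint, so no compactness argument is needed. Here logarithms are base $2$, consistent with the use in Lemma~\ref{lem:D2}.
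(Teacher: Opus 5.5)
Your proof is correct and matches the paper's argument, which is Kraft's inequality followed by Jensen's inequality applied to $2^{-x}$; your Gibbs/KL version of Step 2 is an equivalent repackaging of the same convexity step, and you also give the Jensen route explicitly. Your random-walk proof of Kraft's inequality over the finite set $L$ is a useful addition, since the paper only cites Kraft's inequality even though its decision trees may be infinite.
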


\begin{proof}
    By Kraft's inequality, $\sum_{\ell\in L} 2^{-\mathrm{depth}(\ell)} \le 1$.
    Let $D := \mathrm{depth}(\ell)$. 
    Since the function $f(y)=2^{-y}$ is convex, Jensen's inequality gives us
    \[
    2^{-\mathbb{E}[D]} = f(\mathbb{E}[D]) \le \mathbb{E}[f(D)] = \mathbb{E}[2^{-D}].
    \]
    Because $\ell$ is uniform on $L$,
    $\mathbb{E}[2^{-D}] = \frac{1}{|L|}\sum_{\ell\in L}2^{-\mathrm{depth}(\ell)} \le \frac{1}{|L|}$.
    Taking $\log$ of both sides gives us $\mathbb{E}[D] \ge \log |L|$.
\end{proof}

Now choose $\mu$ to be the uniform distribution over the hard family from the relevant deterministic case:
\begin{itemize}
    \item If $n\ge m$, let $\mu$ be uniform over $\mathcal{F}=\{\widetilde \I_\bx : \bx\in \Delta_\eps^+(S)\}$.
    \item If $n<m$, let $\mu$ be uniform over $\{\widehat \I_\bx : \bx\in \Delta_{n}^+(S)\}$.
\end{itemize}

\begin{lemma}\label{lem:D5}
    For every correct deterministic decision tree $T$,
    \begin{equation*}
        \mathbb{E}_{\I\sim \mu}[Q_T(\I)] \geq
    \begin{cases}
        (n-m) + \log|\Delta_\eps^+(S)|, & \text{if } n\ge m,\\
        \log|\Delta_{n}^+(S)|, & \text{if } n<m.
    \end{cases}
    \end{equation*}
\end{lemma}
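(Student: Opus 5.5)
We prove Lemma~\ref{lem:D5} by combining the deterministic arguments above with the average-depth bound of Lemma~\ref{lem:D4}. The idea is that each instance in the support of $\mu$ is sent to its own leaf, and a uniform distribution over many distinct leaves forces a large average depth.

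\textbf{Case $n<m$.} Fix a correct deterministic tree $T$.

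1. Each instance $\widehat\I_\bx$ has the unique unanimously acceptable lottery $\bx$, as shown in Case~2 above.
2. So $T$ must output $\bx$ on $\widehat\I_\bx$. Distinct $\bx$ therefore reach distinct leaves.
3. The map $\bx\mapsto\ell(\bx)$, sending $\bx$ to the leaf reached on $\widehat\I_\bx$, is injective. Since $\mu$ is uniform on the family, the induced leaf is uniform on a set $L$ with $|L|=|\Delta_n^+(S)|$.
4. The number of queries on $\widehat\I_\bx$ is exactly $\mathrm{depth}(\ell(\bx))$. Lemma~\ref{lem:D4} then gives $\mathbb{E}_{\I\sim\mu}[Q_T(\I)]\ge\log|\Delta_n^+(S)|$.

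\textbf{Case $n\ge m$.} Decompose $Q_T(\widetilde\I_\bx)$ into queries to dummy agents and queries to agents $1,\dots,m$.

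1. \emph{Dummy queries.} Every $\widetilde\I_\bx$ is feasible, so by Lemma~\ref{lem:D1} each of the $n-m$ dummy agents is queried at least once on every path. This contributes at least $n-m$ for every instance, hence also in expectation.
2. \emph{Pruning.} Dummy agents always answer \texttt{True} on this family. So we prune $T$ to $T'$ exactly as in Case~1: delete each dummy-query node and contract it to its \texttt{True} child. The path of $\widetilde\I_\bx$ in $T'$ visits exactly the non-dummy query nodes of its path in $T$, and ends at a leaf labeled $\bx$.
3. \emph{Decomposition.} This gives
\begin{equation*}
Q_T(\widetilde\I_\bx)\ge (n-m)+\mathrm{depth}_{T'}(\ell'(\bx)),
\end{equation*}
where $\ell'(\bx)$ is the leaf of $T'$ reached on $\widetilde\I_\bx$.
4. \emph{Injectivity.} By Lemma~\ref{lem:D2}, the unanimously acceptable set of $\widetilde\I_\bx$ is $\{\bx\}$. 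Hence $\bx\mapsto\ell'(\bx)$ is injective, and under $\mu$ the leaf $\ell'(\bx)$ is uniform on a set of $|\Delta_\eps^+(S)|$ leaves of $T'$.
5. Taking expectations and applying Lemma~\ref{lem:D4} to $T'$ yields $(n-m)+\log|\Delta_\eps^+(S)|$.

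\textbf{Main obstacle.} The delicate point is justifying that $T'$ is a legitimate binary decision tree, so that Kraft's inequality applies to its leaves.

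\begin{itemize}
\item \emph{$T$ may be infinite.} Kraft's inequality $\sum 2^{-\mathrm{depth}}\le 1$ still holds for any prefix-free set of finite root-to-leaf paths, so a possibly infinite tree causes no problem.
\item \emph{Contraction can only shrink the structure.} Contracting dummy-query nodes never merges two distinct leaves. It only removes internal nodes along paths, so the pruned paths remain prefix-free.
\item \emph{Only reached leaves matter.} We apply Kraft only to the finitely many leaves actually reached by instances in the family. Leaves of $T$ that become unreachable after pruning are irrelevant to the bound.
\end{itemize}

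With these checks in place, both cases follow, and the lemma is proved.
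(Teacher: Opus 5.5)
Your proposal is correct and follows essentially the same route as the paper: Lemma~\ref{lem:D1} supplies the additive $n-m$ dummy queries, and you prune dummy-agent nodes to their \texttt{True} children. Uniqueness of the feasible lottery makes the instance-to-leaf map injective, and Lemma~\ref{lem:D4} applied to the uniform leaf distribution gives the logarithmic term. Your extra checks, that contraction keeps the reached paths prefix-free and that Kraft's inequality still applies when the tree may be infinite, are justifications the paper leaves implicit.
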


\begin{proof}
We prove the two cases.

\noindent\textbf{Case $n\ge m$.}
Fix a correct deterministic tree $T$. For any $\I = \widetilde \I_\bx\in \mathcal{F}$, Lemma~\ref{lem:D1} implies that the
execution path on $\I$ queries each of the $n$ agents at least once, hence makes at least $n-m$ queries to the dummy agents.
Prune $T$ into $T'$ by contracting dummy-agent query nodes to their \texttt{True} child, exactly as in the deterministic proof.
For each $\I \in \mathcal{F}$ we have $Q_T(\I)\geq (n-m)+Q_{T'}(\I)$.

Moreover, since each $\I=\widetilde \I_\bx$ has a unique correct output $\bx$, the map $\I\mapsto$ (reached leaf of $T'$) is injective
over $\mathcal{F}$, and its image is a set $L$ of size $|\Delta_\eps^+(S)|$.
Because $\mu$ is uniform over $\mathcal{F}$, the induced distribution on $L$ is uniform.
Applying Lemma~\ref{lem:D4} to $T'$ gives us
$\mathbb{E}_{\I\sim \mu}[Q_{T'}(\I)] \ge \log|\Delta_\eps^+(S)|$.
Therefore,
\begin{equation*}
    \mathbb{E}_{I\sim \mu}[Q_T(I)] \geq (n-m)+\log|\Delta_\eps^+(S)|.
\end{equation*}

\noindent\textbf{Case $n<m$.}
The same injectivity argument applies (no pruning needed): distinct $\bx$ induce distinct leaves, and $\mu$ is uniform on that
family, so Lemma~\ref{lem:D4} gives $\mathbb{E}_{\I\sim \mu}[Q_T(I)] \ge \log|\Delta_{n}^+(S)|$.
\end{proof}

Finally, by Lemma~\ref{lem:D3} and Lemma~\ref{lem:D5},
\begin{equation*}
    \inf_R \sup_\I Q_R(\I) \geq
    \begin{cases}
    (n-m) + \log|\Delta_\eps^+(S)|, & \text{if } n\ge m,\\
    \log|\Delta_{n}^+(S)|, & \text{if } n<m.
    \end{cases}
\end{equation*}
Using Lemma~\ref{lem:D2} (and the same binomial counting bound with $m$ replaced by $n$ in the second case),
we have $\log|\Delta_\eps^+(S)| = \Omega((m-1)\log(1/\eps))$ and
$\log|\Delta_{n}^+(S)| = \Omega((n-1)\log(1/\eps))$.
Therefore every correct randomized algorithm satisfies
\begin{equation*}
    \sup_\I Q_R(\I) \geq \Omega\Bigl((n-\min\{n,m\}) + (\min\{n,m\}-1)\log(1/\eps)\Bigr),
\end{equation*}
completing the proof for randomized algorithms as well.

\subsection{Proof of \Cref{thm:lb_oneagent}}
    Consider the single-agent case $n=1$.
    For each $j\in[m]$, define a feasible instance $\mathcal{I}_j$ by setting
    \[
    \tau_1 = 1,\quad
    u_1(s_j)=1,\quad
    u_1(s_k)=0\ \ \text{for all }k\in[m]\setminus\{j\}.
    \]
    Also define an infeasible instance $\mathcal{I}_0$ by
    \[
    \tau_1=1,\quad u_1(s_k)=0\ \ \text{for all }k\in[m].
    \]
    %(All these parameters lie in $\{0,1\}$, so they satisfy the $\eps$-precision assumption.)
    
    We first characterize the oracle on these instances.
    Fix any lottery $\bx\in\Delta(S)$.
    
    On $\mathcal{I}_j$, we have
    \[
    U_1(\bx)=\langle \mathbf{u}_1,\bx\rangle = x_j,
    \]
    and therefore
    \[
    \tQ(1,\bx)=\texttt{True}
    \iff
    x_j\ge 1
    \iff
    \bx=\mathbf{e}_j,
    \]
    where the last equivalence uses that $\bx\in\Delta(S)$ implies $x_j\le 1$ with equality if and only if $\bx=\mathbf{e}_j$.
    Thus, $\mathcal{I}_j$ is feasible and the acceptable set is exactly $\{\mathbf{e}_j\}$.
    
    On $\mathcal{I}_0$, we have $U_1(\bx)=0<\tau_1=1$ for every $\bx\in\Delta(S)$, so $\tQ(1,\bx)=\texttt{False}$ for
    every query and the instance is infeasible.
    
    Next, we show the lower bound on deterministic algorithms.
    Let $\mathcal{F}$ be any deterministic algorithm that is correct on every instance.
    Run $\mathcal{F}$ on $\mathcal{I}_0$.
    Since $\mathcal{I}_0$ rejects every lottery, every query made by $\mathcal{F}$ is answered \texttt{False}, and (by correctness)
    $\mathcal{F}$ must eventually halt and output $\Null$.
    
    Let $q_0$ be the number of queries $\mathcal{F}$ makes on this run, and let 
    \begin{equation*}
        J := \{j\in[m] : \text{$\mathcal{F}$ queried the pure lottery $\mathbf{e}_j$ at least once among its first $q_0$ queries}\}.
    \end{equation*}
    
    If $|J|<m$, choose some $j^*\in[m]\setminus J$.
    Consider running $\mathcal{F}$ on the feasible instance $\mathcal{I}_{j^*}$.
    By our characterization, $\mathcal{I}_{j^*}$ answers \texttt{True} only on the single query $\mathbf{e}_{j^*}$, and by the choice of $j^*$, $\mathcal{F}$ does not query $\mathbf{e}_{j^*}$ among its first $q_0$ queries.
    Therefore, the first $q_0$ oracle answers received by $\mathcal{F}$ on $\mathcal{I}_{j^*}$ are also all \texttt{False}.
    Since $\mathcal{F}$ is deterministic, it follows that $\mathcal{F}$ receives exactly the same query/answer history on
    $\mathcal{I}_0$ and on $\mathcal{I}_{j^*}$, and hence produces the same output on both instances.
    
    Now, we show that this is impossible: correctness requires output $\Null$ on $\mathcal{I}_0$, but requires output $\mathbf{e}_{j^*}$ on
    $\mathcal{I}_{j^*}$ because $\mathbf{e}_{j^*}$ is the only acceptable lottery there.
    Thus we must have $|J|=m$, meaning that $\mathcal{F}$ queries every pure lottery $\mathbf{e}_1,\dots,\mathbf{e}_m$ before it can correctly
    output $\Null$ on $\mathcal{I}_0$.
    In particular, $q_0\ge m$, so the worst-case query complexity of any correct deterministic algorithm is at least $m$.
    
    Finally, we prove the lower bound for any randomized algorithm.
    Let $\mathcal{R}$ be any randomized algorithm in our model (as mentioned in Section~\ref{sec:prelims}, such a randomized algorithm is always correct for every realization of internal randomness).
    Fix any realization of its internal randomness; this gives us a deterministic algorithm that is still correct on every instance.
    Applying the deterministic lower bound above to this deterministic instantiation shows it makes at least $m$ queries on
    $\mathcal{I}_0$.
    Since this holds for every realization, $\mathcal{R}$ makes at least $m$ queries on $\mathcal{I}_0$ in expectation as well.
    Therefore every correct randomized algorithm has worst-case expected query complexity at least $m$.

\section{Omitted Proofs from \Cref{sec:predictions}}
\subsection{Proof of \Cref{thm:perm-det}}
We prove the statement for an arbitrary permutation $\sigma$; instantiating with $\sigma=\hat{\sigma}$ gives us the result.

Fix any permutation $\sigma$ of $N$, and consider Algorithm~\ref{alg:cardinal_deterministic}[$\sigma$].
Recall the definition of record agents: for any permutation $\sigma$, let $R(\sigma)$ be the number of agents on which $\LearnHyperplane$ is invoked on during the run of Algorithm~\ref{alg:cardinal_deterministic}[$\sigma$].

\textbf{Correctness} follows from \Cref{thm:ub_deterministic}, whose proof does not rely on the specific fixed scan order $1,2,\dots,n$, only on the facts that 
(i) in each iteration of the \textbf{while} loop (Line~\ref{line:alg_det_while_start} of \Cref{alg:cardinal_deterministic}) the algorithm proposes a candidate $\bx \leftarrow \Select(C)$, 
(ii) it checks unlearned agents one-by-one until either finding a violator or exhausting the scan, and 
(iii) upon finding a violator it elicits that agent's halfspace and adds it to $C$ (or returns $\Null$ if $\LearnHyperplane$ returns $\RejectAll$). 
These properties remain true under any fixed permutation order $\sigma$. 
Hence Algorithm~$2[\sigma]$ outputs a unanimously acceptable lottery when one exists and outputs $\Null$ otherwise.

Thus, we focus on proving its \textbf{query complexity}, bounding the number of queries made by \Cref{alg:cardinal_deterministic}$[\sigma]$.
By Lemma~\ref{lem:learnhyperplane_queries}, each call to $\LearnHyperplane$ makes $\mathcal{O}(m\log(1/\eps))$ queries. 
Since \Cref{alg:cardinal_deterministic}$[\sigma]$ calls $\LearnHyperplane$ exactly for the $R(\sigma)$ agents, the total number of elicitation queries is
\begin{equation*} \label{eqn:perm_det_1}
    \mathcal{O}(R(\sigma) \, m \log(1/\eps))
\end{equation*}
Consider one iteration of the outer \textbf{while} loop of \Cref{alg:cardinal_deterministic}$[\sigma]$ (Line~\ref{line:alg_det_while_start}).
After computing $\bx\leftarrow \Select(C)$, the algorithm queries agents in the order $\sigma(1),\dots,\sigma(n)$, skipping learned agents (those not in $N'$), and stops either when it finds the first violator $j$ with $\tQ(j,\bx)=\texttt{False}$, or when it finishes the scan and returns $\bx$.
In a single such iteration, each agent is queried at most once, so each iteration uses at most $n$ membership queries for verification.
Let $\gamma$ be the number of \textbf{while} iterations executed.
Every such iteration (except possibly the last) ends by learning at least one new violating agent (and hence invoking $\LearnHyperplane$ at least once), and each agent can be learned at most once because it is removed from $N'$ immediately after being learned. Therefore, $\gamma \le R(\sigma)+1$, and the total number of verification queries is at most
\begin{equation*}
    n\gamma \le n(R(\sigma)+1).
\end{equation*}
The total number of membership queries made by \Cref{alg:cardinal_deterministic}$[\sigma]$ is therefore
\begin{equation*}
    \mathcal{O}(R(\sigma) \, m \log(1/\eps)) + n(R(\sigma) + 1) = \mathcal{O}((n + m\log(1/\eps)) (R(\sigma)+1)).
\end{equation*}
Since we assumed $R(\sigma) \geq 1$, then $R(\sigma)+1 \leq 2R(\sigma)$, and so the query complexity is simply 
\begin{equation*}
    \mathcal{O}((n + m\log(1/\eps)) \, R(\sigma)).
\end{equation*}
If $R(\sigma)=0$, the algorithm makes a single verification scan and uses at most $n$ queries, which is consistent with the $(R(\sigma)+1)$ bound.

\subsection{Proof of \Cref{thm:perm-rand}}
We prove the statement for an arbitrary permutation $\sigma$; instantiating it with $\sigma = \hat{\sigma}$ gives us the result.

Fix any permutation $\sigma$ of $N$, and consider \Cref{alg:cardinal_randomized}$[\sigma]$.
Recall that \Cref{alg:cardinal_randomized}$[\sigma]$ is equivalent to running \Cref{alg:cardinal_randomized}, except that it initializes the weights as
\[
w_i^{(1)} := \left\lceil \frac{n}{\mathrm{rank}_\sigma(i)} \right\rceil \qquad \text{for all } i\in N,
\]
and then runs the same sampling and multiplicative updating procedure as in \Cref{alg:cardinal_randomized}.

\textbf{Correctness} follows from \Cref{alg:cardinal_randomized}, whose proof does not rely on how weights are initialized, only on: (i) correctness of $\LearnHyperplane$, (ii) the fact that returning $\bx$ when no violators exist is unanimously acceptable, and (iii) the fact that returning $\Null$ on $\RejectAll$ or on infeasible sampled constraints is correct.

Thus, we focus on proving its \textbf{query complexity}.

In the proof of \Cref{thm:alg_randomized} (Step 1), a witness set $B\subseteq N$ is:
\begin{itemize}
    \item if the instance is feasible ($P(N)\neq\varnothing$): any basis $B$ of $N$, with $|B|\le m-1$;
    \item if infeasible ($P(N)=\varnothing$): any Helly witness $B$ with $|B|\le m$ and $P(B)=\varnothing$.
\end{itemize}

\paragraph{Witness parameter.}

Let $W$ be the family of valid witness sets used in the analysis of \Cref{alg:cardinal_randomized}: $W$ consists of bases when $P(N)\neq\varnothing$ and Helly witnesses when $P(N) = \varnothing$.
Define the prefix-witness parameter
\begin{equation*}
    K := E(\sigma) := \min_{B\in W}\max_{i\in B}\operatorname{rank}_{\sigma}(i),
\end{equation*}
i.e., there exists some witness set contained in the first $K$ positions of $\sigma$.
Fix a witness set $B^*\in W$ achieving this minimum, and let $b:=|B^*|\leq m$.

\paragraph{Potential and drift.}

Let $w^{(t)}$ be the weight vector at iteration $t$, $W^{(t)}:=\sum_{i=1}^n w_i^{(t)}$, and define the potential
\begin{equation*}
    \Phi_t := \frac{\prod_{i\in B^*} w_i^{(t)}}{(W(t))^b}.
\end{equation*}
As in Lemma~\ref{drift_potential} in the proof of \Cref{thm:alg_randomized}, on every nonterminal iteration $t<T$, Lemma~\ref{lem:hit} implies the violator set intersects $B^*$, hence the numerator multiplies by at least $2$, while the denominator increases by a factor $(1+\omega^{(t)}/W^{(t)})^b$. 
Using Lemma~\ref{lem:sampling} (with $r=16(m-1)^2$) to bound $E[\omega^{(t)}\mid \mathcal F_{t-1}]$ and concavity of $\log$, we obtain a constant $c_0>0$ (e.g. $c_0=\log 2-1/4$) such that
\begin{equation*}
    \mathbb E\big[(\log\Phi_{t+1}-\log\Phi_t)\mathbf 1[t<T]\mid \mathcal F_{t-1}\big]\;\ge\; c_0\mathbf 1[t<T].
\end{equation*}

\paragraph{Bounding $\mathbb{E}[T]$.}

For $M\ge 1$, let $T^{(M)}:=\min\{T,M\}$. 
Telescoping and the drift bound gives us
\begin{equation*}
    \mathbb E[\min\{T-1,M-1\}] \le \frac{\log(1/\Phi_1)}{c_0}.
\end{equation*}
Letting $M\to\infty$ (monotone convergence) gives us
\begin{equation*}
    \mathbb E[T] \le 1 + \frac{\log(1/\Phi_1)}{c_0} = \mathcal{O}(\log(1/\Phi_1)).
\end{equation*}

\paragraph{Bounding $\Phi_1$ with predictions.}
Under the initialization inferred from our predicted permutation,
\begin{equation*}
    W^{(1)} = \sum_{k=1}^n \left\lceil \frac{n}{k}\right\rceil \le \sum_{k=1}^n \left(\frac{n}{k}+1\right) = n \left(\sum_{k=1}^n \frac{1}{k} +1 \right).
\end{equation*}
Moreover, every $i\in B^*$ has $\mathrm{rank}_{\sigma}(i)\le K$, thus
\begin{equation*}
    w_i^{(1)}=\left\lceil \frac{n}{\mathrm{rank}_{\sigma}(i)}\right\rceil \ge \frac{n}{K} \implies \prod_{i\in B^*}w_i^{(1)} \ge (n/K)^b.
\end{equation*}
Therefore,
\begin{equation*}
    \Phi_1 \ge \frac{(n/K)^b}{(n(\sum_{k=1}^n \frac{1}{k}+1))^b}=\frac{1}{((\sum_{k=1}^n \frac{1}{k}+1)K)^b},
\end{equation*}
so
\begin{equation*}
    \log(1/\Phi_1)\leq b\log \left( \left(\sum_{k=1}^n \frac{1}{k}+1\right) K \right)\leq m\log \left( \mathcal{O}(\log n) K \right).
\end{equation*}

\paragraph{Query complexity.}
Each iteration uses at most $n$ membership queries in the verification step, so verification costs $\mathcal{O}(n\mathbb E[T])$. Let $L$ be the number of distinct agents whose hyperplanes are learned. 
Each iteration samples at most $r=\mathcal{O}(m^2)$ agents, and each agent is learned at most once, so $L\le \min\{n,rT\}$ and by concavity 
\begin{equation*}
    \mathbb E[L]\le \min\{n,r\mathbb E[T]\}= \mathcal{O}(\min\{n,m^3\log \left( K \log n \right)\}).
\end{equation*}
Each learned hyperplane costs $\mathcal{O}(m\log(1/\eps))$ queries (by Lemma~\ref{lem:learnhyperplane_queries}), giving total expected queries
\begin{equation*}
    \mathcal{O}\left(nm \log ( K \log n)) + m\log(1/\eps)\cdot \min\{n,m^3\log( K \log n \}\right).
\end{equation*}
Substituting $E(\sigma) = K$ and setting $\mu = \log(K \log n) =  \log(E(\sigma) \log n) = \log E(\sigma) + \log\log n$, we get
\begin{equation*}
    \mathcal{O}\left( nm \mu + \min \{n, m^3 \mu \} \cdot m \log(1/\eps)\right),
\end{equation*}
as desired.

\subsection{Lower Bound with Permutation Predictions} \label{app:lb_permutation}

\begin{proposition}%[Lower bound even with permutation advice]
\label{prop:perm-advice-lb}
    Fix any $n,m$. Then, any (deterministic or randomized) algorithm) that is correct for every instance $\I$ and permutation prediction $\hat{\sigma}$ (i.e., outputs a unanimously acceptable lottery whenever one exists and  $\Null$ otherwise) has a worst-case (expected) number of queries is $\Omega ( (n-\min\{n,m\}) + (\min\{n,m\}-1)\log (1/\eps))$.
\end{proposition}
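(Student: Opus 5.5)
The plan is to reduce directly to the advice-free lower bound of Theorem~\ref{thm:lowerbound_query}, exploiting the fact that the hard families constructed there are so symmetric that a perfect (or any fixed) permutation prediction carries no information that helps distinguish instances within the family. Concretely, I would fix the prediction $\hat\sigma$ to be the identity permutation, or indeed any single permutation chosen independently of the instance. An algorithm that is correct for every pair $(\I,\hat\sigma)$ is in particular correct for every instance when it is handed this fixed $\hat\sigma$. Hardwiring $\hat\sigma$ into it therefore yields an ordinary (advice-free) correct algorithm, deterministic or randomized. Theorem~\ref{thm:lowerbound_query} then immediately gives the bound $\Omega((n-\min\{n,m\}) + (\min\{n,m\}-1)\log(1/\eps))$.

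To make the statement meaningful as ``even with perfect advice,'' I would additionally check that this fixed $\hat\sigma$ is perfect on every instance of the hard families. In the case $n\ge m$ the family is $\{\widetilde\I_\bx\}$: agents $1,\dots,m$ are the coordinate agents of Lemma~\ref{lem:D2} and agents $m+1,\dots,n$ are dummies. Take $\hat\sigma$ to list agents $1,\dots,m$ first. Every agent in $\{1,\dots,m\}$ is tight at the unique feasible point $\bx$, so $\{1,\dots,m\}$ is a witness set whose members all lie in the first $m$ positions. Hence $E(\hat\sigma)\le m$, which is the smallest possible value up to the size of a witness. The record agents under $\hat\sigma$ are likewise drawn only from these $m$ agents, since dummies never violate, so $R(\hat\sigma)$ is minimal as well. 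In the case $n<m$ every agent is constraining, so any order is optimal.

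The crucial point is that $\hat\sigma$ is the same across the entire family. It can therefore be treated as a constant of the algorithm, and the decision-tree counting argument of Lemma~\ref{lem:D2} goes through unchanged. That argument needs at least $|\Delta^+_\eps(S)|$ distinct leaves, and for randomized algorithms Lemma~\ref{lem:D4} together with Yao via Lemma~\ref{lem:D3} gives the same bound. The $\Omega(n-m)$ term also carries over, because Lemma~\ref{lem:D1}'s argument modifies only an unqueried agent $j$ while keeping $\hat\sigma$ fixed. Since the algorithm must be correct for arbitrary advice, it must remain correct on the modified instance with the same $\hat\sigma$, even though that advice is no longer perfect there.

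The main obstacle is a subtlety of quantification rather than anything technical. If the lower bound were stated only over pairs in which $\hat\sigma$ is perfect, Lemma~\ref{lem:D1}'s modification could make the advice imperfect. In that setting an algorithm might trust perfect advice and skip querying the dummies. I will handle this by leaning on the hypothesis exactly as the statement gives it, namely correctness for every instance and every prediction. Under that hypothesis the modified instance paired with the same advice is a legitimate input, so the algorithm must still output $\Null$ on it, and the argument closes.
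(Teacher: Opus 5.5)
Your proposal is correct and matches the paper's proof. Both fix a single permutation that lists agents $1,\dots,m$ first, so $E(\hat\sigma)\le m$ because no minimal witness can contain a dummy; both reuse the advice-free hard families; and both justify the dummy-modification step of Lemma~\ref{lem:D1} by correctness under arbitrary advice. The only difference is presentational: you hardwire $\hat\sigma$ and cite Theorem~\ref{thm:lowerbound_query} as a black box, while the paper re-derives the query-every-agent lemma, the pruning and counting step, and the Yao argument with $\hat\sigma$ carried along.
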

\begin{proof}
    Fix $n,m$.
    Let $\mathcal{A}$ be any deterministic (or randomized) algorithm that is correct for \emph{every} pair $(\I,\hat\sigma)$ where $\I$ is an instance of the problem and $\hat\sigma$ is an arbitrary permutation of $N$ (i.e., $\mathcal{A}$ outputs a unanimously acceptable lottery whenever one exists and outputs $\Null$ otherwise). 
    
    We will prove our result by showing, for each case $n\ge m$ and $n<m$, a hard family of feasible instances together with a fixed permutation advice $\hat{\sigma}$
    whose prefix witness parameter is small.
    Since $\mathcal{A}$ must be correct for \emph{every}  permutation advice, lower bounding its query
    complexity under this specific $\hat{\sigma}$ is without loss of generality.
    
    We first show the following lemma.
    
    \begin{lemma}\label{clm:query-every-agent-with-advice}
        Fix any permutation advice $\hat\sigma$. Let $T$ be any \emph{correct} deterministic decision tree for the problem under advice $\hat\sigma$ (i.e., $T$ outputs a unanimously acceptable lottery on every feasible instance and outputs $\Null$ on every
        infeasible instance, when run with input $(\I,\hat\sigma)$). Then on every feasible instance $\I$, along the root-to-leaf path
        followed by $T$ on input $(\I,\hat\sigma)$, every agent $i\in N$ is queried at least once. In particular, $T$ makes at least $n$ membership queries on $\I$.
    \end{lemma}
    
    \begin{proof}
    Let $P$ be the root-to-leaf path followed by $T$ on input $(\I,\hat{\sigma})$.
    Because $\I$ is feasible and $T$ is correct, the reached leaf is labeled by some lottery $\bx^*\in\Delta(S)$ (not $\Null$).
    Suppose for contradiction that some agent $j\in N$ is never queried along $P$.
    Construct a new instance $\I'$ by changing only agent $j$ as follows:
    set $u'_j(s)=0$ for all $s\in S$ and set $\tau'_j=\eps$.
    Then $U'_j(\bx)=0<\tau'_j$ for every lottery $\bx\in\Delta(S)$, so agent $j$ rejects every query and $\I'$ is infeasible.
    However, since $j$ is never queried along $P$, every query asked along $P$ receives the same answer on $(\I',\hat{\sigma})$ as on $(\I,\hat{\sigma})$, and so $T$ follows the same path $P$ and outputs the same non-$\Null$ lottery $\bx^*$.
    This contradicts correctness on $(\I',\hat{\sigma})$, which would require output $\Null$.
    \end{proof}
    We split our analysis into two cases.
    
    \paragraph{Case 1: $n\ge m$.}
    Let $\Delta^+(S)$ denote the positive $\eps$-grid on the simplex, i.e.,
    \[
    \Delta^+(S) := \{\bx\in\Delta(S): x_j\in\{\eps,2\eps,\dots,1\} \text{ for all } j\in[m]\}.
    \]
    Note that under $\eps$-precision and $m\le (1/\eps)^{1-\delta}\le 1/\eps$, this set is nonempty.
    
    For each $\bx=(x_1,\dots,x_m)\in\Delta^+(S)$ define an instance $\I_\bx$ with $n$ agents as follows:
    \begin{itemize}
    \item For each $i\in[m]$, set $u_i(s_i)=1$, $u_i(s_j)=0$ for all $j\neq i$, and set $\tau_i=x_i$.
    \item For each dummy agent $i\in\{m+1,\dots,n\}$, set $u_i(s_j)=1$ for all $j\in[m]$ and $\tau_i=1$ (so $i$ accepts every lottery).
    \end{itemize}
    Fix the permutation advice $\hat{\sigma}$ so that agents $1,\dots,m$ occupy the first $m$ positions
    (e.g.\ $\hat{\sigma}(k)=k$ for all $k\in[n]$).
    Since all nontrivial constraints are contained in agents $1,\dots,m$, there exists a witness set contained in the first $m$ positions; thus $E(\hat{\sigma})\le m$.
    Moreover, since each dummy agent has $\mathcal{A}_i=\Delta(S)$, no minimal witness set (basis) can include a dummy agent, so some valid witness set lies entirely in $\{1,\dots,m\}$.

    We claim that the unanimously acceptable set of $\I_\bx$ is the singleton $\{\bx\}$.
    Indeed, for any lottery $\mathbf{y} \in\Delta(S)$ and any $i\in[m]$, we have $U_i(\mathbf{y})=y_i$.
    Thus agent $i$ accepts $\mathbf{y}$ if and only if $y_i\ge \tau_i=x_i$.
    If $\mathbf{y}$ is unanimously acceptable, then $y_i\ge x_i$ for all $i\in[m]$.
    Let $d_i := y_i - x_i \ge 0$. Because $\bx,\mathbf{y}\in\Delta(S)$, we have $\sum_{i=1}^m d_i =\sum_{i=1}^m y_i-\sum_{i=1}^m x_i = 1-1=0$, forcing $d_i=0$ for all $i$, i.e., $\mathbf{y} = \bx$.

    Now, fix any correct deterministic decision tree $T$ and consider its behavior on the family
    $\mathcal{F} := \{\I_\bx : \bx\in\Delta^+(S)\}$ with the advice fixed to $\hat{\sigma}$.
    By Lemma~\ref{clm:query-every-agent-with-advice}, on each feasible instance $\I_\bx$ the execution path queries every agent at least once.
    In particular, it makes at least $n-m$ queries to dummy agents.
    Over the family $\mathcal{F}$, every dummy-agent query is answered \texttt{True}.
    Prune $T$ into a tree $T'$ by contracting every internal node that queries a dummy agent to its \texttt{True} child.
    Then for every $\bx\in\Delta^+(S)$,
    \[
    Q_T(\I_\bx,\hat{\sigma})
     \ge  (n-m) + Q_{T'}(\I_\bx,\hat{\sigma}).
    \]
    By uniqueness, the correct output on $\I_\bx$ is exactly $\bx$.
    Therefore distinct $\bx\neq \bx'$ must reach distinct leaves of $T'$ (a leaf has a fixed output label),
    so $T'$ has at least $|\Delta^+(S)|$ leaves.
    Hence $\sup_\bx Q_{T'}(\I_\bx,\hat{\sigma})\ge \log|\Delta^+(S)|$, and thus for some $\bx$,
    \[
    Q_T(\I_\bx,\hat{\sigma})
     \ge  (n-m) + \log|\Delta^+(S)|.
    \]
    
    Let $x_i=k_i\eps$ with integers $k_i\ge 1$.
    The simplex constraint $\sum_{i=1}^m x_i=1$ becomes $\sum_{i=1}^m k_i=1/\eps$.
    Thus, $|\Delta^+(S)|$ is the number of positive integer solutions to this equation, namely
    \[
    |\Delta^+(S)|  =  \binom{1/\eps-1}{m-1}.
    \]
    Using $\binom{a}{b}\ge (a/b)^b$ for $a\ge b\ge 1$ with $a=1/\eps-1$ and $b=m-1$ gives us
    \[
    \log|\Delta^+(S)|
     \ge 
    (m-1)\log\left(\frac{1/\eps-1}{m-1}\right)
     \ge 
    (m-1)\log\left(\frac{1}{\eps m}\right).
    \]
    By the standing growth condition $m\le (1/\eps)^{1-\delta}$, we have $1/(\eps m)\ge (1/\eps)^\delta$ and hence
    \[
    \log|\Delta^+(S)|
     = 
    \Omega ((m-1)\log(1/\eps) ).
    \]
    Combining, for $n\ge m$ we obtain
    \[
    \sup_{\bx\in\Delta^+(S)} Q_T(\I_\bx,\hat{\sigma})
     \ge 
    \Omega((n-m) + (m-1)\log(1/\eps)).
    \]
    
    \paragraph{Case 2: $n<m$.}
    Define the positive $\eps$-grid supported on the first $n$ alternatives: $\Delta^+_{n}(S)
    :=
    \{x\in\Delta(S):
    x_i\in\{\eps,2\eps,\dots,1\} \text{ for all } i\le n,
    x_{n+1}=\cdots=x_m=0
    \}.$
    For each $\bx\in\Delta^+_{n}(S)$ define an instance $\I_\bx$ with $n$ agents:
    for each $i\in[n]$, set $u_i(s_i)=1$, $u_i(s_j)=0$ for $j\neq i$, and $\tau_i=x_i$.
    Fix any permutation advice $\hat{\sigma}$ (e.g.\ identity); trivially $E(\hat{\sigma})\le n$.
    
    The same coordinate-wise argument shows that $\I_\bx$ has a unique unanimously acceptable lottery $\bx$.
    Thus, any correct deterministic decision tree under advice $\hat{\sigma}$ must have at least $|\Delta^+_{n}(S)|$ leaves over this family, so some instance forces at least $\log|\Delta^+_{n}(S)|$ queries.
    Counting as above gives us $|\Delta^+_{n}(S)|  =  \binom{1/\eps-1}{n-1}$, so $
    \log|\Delta^+_{n}(S)|
     = 
    \Omega ((n-1)\log(1/\eps))$,
    using again that $n<m\le (1/\eps)^{1-\delta}$.
    
    Combining the two cases gives us the deterministic lower bound
    \[
    \sup_{\I} Q_T(\I,\hat{\sigma})
    \ge
    \Omega((n-\min\{n,m\}) + (\min\{n,m\}-1)\log(1/\eps)).
    \]
    
    We interpret a randomized algorithm as a distribution over correct deterministic decision trees.
    For a (random seed-indexed) deterministic tree $T_\omega$, define $Q_{\mathcal{A}}(\I,\hat{\sigma}) := \mathbb{E}_\omega[Q_{T_\omega}(\I,\hat{\sigma})]$.
    
    We use two standard facts.
    
    \begin{lemma}[Yao's minimax principle]\label{clm:yao}
    For any distribution $\mu$ over instances,
    \[
    \inf_{\mathcal{A}} \sup_{\I} Q_{\mathcal{A}}(\I,\hat{\sigma})
     \ge 
    \min_{T} \ \mathbb{E}_{\I\sim\mu}\bigl[Q_T(\I,\hat{\sigma})\bigr],
    \]
    where the infimum ranges over correct randomized algorithms $\mathcal{A}$ and the minimum ranges over correct deterministic trees $T$.
    \end{lemma}
    \begin{proof}
    Fix $\mu$ and a randomized algorithm $\mathcal{A}$ (a distribution over deterministic trees).
    Then
    \begin{align*}
        \sup_{\I} Q_{\mathcal{A}}(\I,\hat{\sigma})
        \ge
        \mathbb{E}_{\I\sim\mu}\bigl[Q_{\mathcal{A}}(\I,\hat{\sigma})\bigr] 
         = \mathbb{E}_{\I\sim\mu}\mathbb{E}_{T\sim\mathcal{A}}\bigl[Q_T(\I,\hat{\sigma})\bigr] 
        = \mathbb{E}_{T\sim\mathcal{A}}\mathbb{E}_{\I\sim\mu}\bigl[Q_T(\I,\hat{\sigma})\bigr] 
        \ge
        \min_T \mathbb{E}_{\I\sim\mu}\bigl[Q_T(\I,\hat{\sigma})\bigr]. 
    \end{align*}
    Taking $\inf_{\mathcal{A}}$ on the left-hand side proves the claim.
    \end{proof}
    
    \begin{lemma}[Average depth bound]\label{clm:avg-depth}
    Let $T$ be a binary decision tree and let $L$ be any finite set of leaves of $T$.
    If $\ell$ is drawn uniformly from $L$, then $\mathbb{E}[\mathrm{depth}(\ell)] \ge \log|L|$.
    \end{lemma}
    \begin{proof}
    By Kraft's inequality, $\sum_{\ell\in L} 2^{-\mathrm{depth}(\ell)} \le 1$.
    Let $D:=\mathrm{depth}(\ell)$ where $\ell\sim\mathrm{Unif}(L)$.
    Then
    \[
    \mathbb{E}[2^{-D}]
    =
    \frac{1}{|L|}\sum_{\ell\in L} 2^{-\mathrm{depth}(\ell)}
    \le
    \frac{1}{|L|}.
    \]
    Since $2^{-x}$ is convex, Jensen's inequality gives $2^{-\mathbb{E}[D]} \le \mathbb{E}[2^{-D}] \le 1/|L|$,
    hence $\mathbb{E}[D]\ge \log|L|$.
    \end{proof}
    
    Now choose $\mu$ to be uniform over the corresponding hard family:
    $\{\I_\bx : x\in\Delta^+(S)\}$ when $n\ge m$, and
    $\{\I_\bx : x\in\Delta^+_{n}(S)\}$ when $n<m$.
    Fix any correct deterministic tree $T$.
    In each case, the correct output is unique for every instance in the support of $\mu$.
    Therefore distinct instances in the support must reach distinct output leaves of the (possibly pruned) tree,
    so the set $L$ of leaves reached under $\mu$ has size $|L|=|\Delta|$.
    Because $\mu$ is uniform on instances, the induced distribution on $L$ is uniform.
    By Lemma~\ref{clm:avg-depth},
    \[
    \mathbb{E}_{\I\sim\mu}[Q_T(\I,\hat{\sigma})]
     \ge 
    \log|L|
    =
    \log|\Delta|,
    \]
    and in the $n\ge m$ case the pruning argument contributes the additive $(n-m)$ term exactly as above.
    Thus the same asymptotic lower bound holds for $\min_T \mathbb{E}_{\I\sim\mu}[Q_T(\I,\hat{\sigma})]$,
    and Lemma~\ref{clm:yao} implies the same lower bound for $\inf_{\mathcal{A}}\sup_{\I} Q_{\mathcal{A}}(\I,\hat{\sigma})$.
    Equivalently, every correct randomized algorithm has worst-case expected query complexity
    \[
    \Omega\bigl((n-\min\{n,m\}) + (\min\{n,m\}-1)\log(1/\eps)\bigr),
    \]
    even under an permutation advice $\hat{\sigma}$ with $E(\hat{\sigma})\le \min\{n,m\}$.
\end{proof}

\subsection{Proof of \Cref{thm:lottery-pred}} \label{app:exact_threshold_pred}

We first define the warm-started threshold routine used by $\LearnHyperplane[\hat{\bx}]$.

\begin{algorithm}[h]
\caption{$\mathrm{ExactThresholdPred}(i,k,k',\widehat{\alpha})$}
\label{alg:exact-threshold-pred}
\textbf{Input:} agent $i\in N$, indices $k,k'\in[m]$ with $\tQ(i,e_k)=\texttt{False}$, $\tQ(i,e_{k'})=\texttt{True}$,
and a warm start $\widehat{\alpha}\in[0,1]$.\\
\textbf{Output:} the exact turning point $\alpha^*_{i;k,k'}\in(0,1]$ with $\tQ(i,x_{k,k',\alpha})=\texttt{True}$ iff
$\alpha\ge \alpha^*_{i;k,k'}$.
\begin{algorithmic}[1]
\State $\mathit{step} \leftarrow \eps^2/2$.
\State $b \leftarrow \tQ(i,x_{k,k',\widehat{\alpha}})$.
\If{$b=\texttt{True}$} \Comment{$\widehat{\alpha}\ge \alpha^*_{i;k,k'}$}
    \State $\mathit{upper}\leftarrow \widehat{\alpha}$.
    \While{$\mathit{upper}-\mathit{step}>0$ and $\tQ(i,x_{k,k',\mathit{upper}-\mathit{step}})=\texttt{True}$}
        \State $\mathit{upper}\leftarrow \mathit{upper}-\mathit{step}$; \hspace{6pt} $\mathit{step}\leftarrow 2 \times \mathit{step}$.
    \EndWhile
    \State $\mathit{lower}\leftarrow \max\{0,\mathit{upper}-\mathit{step}\}$.
\Else \Comment{$\widehat{\alpha}< \alpha^*_{i;k,k'}$}
    \State $\mathit{lower}\leftarrow \widehat{\alpha}$.
    \While{$\mathit{lower}+\mathit{step}<1$ and $\tQ(i,x_{k,k',\mathit{lower}+\mathit{step}})=\texttt{False}$}
        \State $\mathit{lower}\leftarrow \mathit{lower}+\mathit{step}$; \hspace{6pt} $\mathit{step}\leftarrow 2 \times \mathit{step}$.
    \EndWhile
    \State $\mathit{upper}\leftarrow \min\{1,\mathit{lower}+\mathit{step}\}$.
\EndIf
\State Perform bisection on $[\mathit{lower},\mathit{upper}]$ until $\mathit{upper}-\mathit{lower}<\eps^2/2$,
maintaining $\tQ(i,\bx_{k,k',\mathit{lower}})=\texttt{False}$ and $\tQ(i,\bx_{k,k',\mathit{upper}})=\texttt{True}$.
\State Apply the same rational reconstruction step as $\mathrm{ExactThreshold}$ (Appendix~\ref{app:exact_threshold}) to recover
$\alpha^*_{i;k,k'}$ uniquely from $[\mathit{lower},\mathit{upper}]$, and return it.
\end{algorithmic}
\end{algorithm}

Then, we prove the following lemma.

\begin{lemma}\label{lem:exact-threshold-pred}
Fix $i,k,k'$ as in Algorithm~\ref{alg:exact-threshold-pred}, and let $\alpha^*:=\alpha^*_{i;k,k'}$.
For any warm start $\widehat{\alpha}\in[0,1]$, $\mathrm{ExactThresholdPred}$ returns $\alpha^*$ exactly.
Moreover, it has a query complexity of
\begin{equation*}
    \mathcal{O}\left(1+\log\left(1+\frac{|\widehat{\alpha}-\alpha^*|}{\eps^2}\right)\right).
\end{equation*}
\end{lemma}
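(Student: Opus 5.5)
We prove correctness and the query bound separately. Correctness rests on one fact: the map $\alpha\mapsto \tQ(i,\bx_{k,k',\alpha})$ is monotone, equal to \texttt{False} exactly on $[0,\alpha^*)$ and \texttt{True} on $[\alpha^*,1]$ (Appendix~\ref{app:exact_threshold}). The plan is to show that on exit from the galloping phase we have a bracket with $\tQ(i,\bx_{k,k',\mathit{lower}})=\texttt{False}$ and $\tQ(i,\bx_{k,k',\mathit{upper}})=\texttt{True}$, i.e.\ $\mathit{lower}<\alpha^*\le \mathit{upper}$. The bisection phase preserves this invariant. The rational reconstruction step of $\mathrm{ExactThreshold}$ then applies verbatim: $\alpha^*$ has reduced denominator at most $1/\eps$, and the final bracket has width $<\eps^2/2$. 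So $\alpha^*$ is recovered exactly.

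For the bracket, take the case $b=\texttt{True}$; the other case is symmetric. Initially $\mathit{upper}=\widehat\alpha$ is accepted, and each loop update moves $\mathit{upper}$ only to a point just queried as \texttt{True}. On exit, either the test point $\mathit{upper}-\mathit{step}$ is rejected, or $\mathit{upper}-\mathit{step}\le 0$. In the second case we set $\mathit{lower}=0$, and $\bx_{k,k',0}=\mathbf{e}_k$ is rejected by hypothesis. In either case $\mathit{lower}=\max\{0,\mathit{upper}-\mathit{step}\}$ is rejected. There is one subtlety: $\mathit{step}$ is doubled after the move, so we must check that the set value of $\mathit{lower}$ is the point actually tested. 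The doubling only happens on a \texttt{True} answer, and the final test uses the current $\mathit{step}$, so the two agree. Symmetrically, in the $\texttt{False}$ branch $\mathit{upper}=\min\{1,\mathit{lower}+\mathit{step}\}$ is accepted, using $\tQ(i,\mathbf{e}_{k'})=\texttt{True}$.

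For the query count, let $D:=|\widehat\alpha-\alpha^*|$. After $j$ successful galloping moves, the total displacement is $\frac{\eps^2}{2}(2^j-1)$, and every point passed over lies on the same side of $\alpha^*$ as $\widehat\alpha$. Hence $\frac{\eps^2}{2}(2^j-1)\le D$, so $j\le \log_2(1+2D/\eps^2)$. Galloping therefore uses at most $j+2$ queries, including the initial query at $\widehat\alpha$. The resulting bracket has width at most the final $\mathit{step}=\frac{\eps^2}{2}2^{j}\le \eps^2/2+D$, up to a constant. Bisection down to width $<\eps^2/2$ then takes $\mathcal{O}(1+\log(1+D/\eps^2))$ further queries. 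Summing the two phases gives $\mathcal{O}(1+\log(1+D/\eps^2))$.

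The main obstacle is bounding the final step size in terms of $D$. The last doubling happens after a successful move, so the final $\mathit{step}$ is at most twice the total displacement plus $\eps^2/2$. The exact constant matters only inside the logarithm.
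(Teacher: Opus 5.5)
Your proof is correct and follows the paper's argument essentially step for step. Both use the monotone bracket invariant through the galloping and bisection phases, the displacement bound $\frac{\eps^2}{2}(2^j-1)\le |\widehat{\alpha}-\alpha^*|$ to get a final step of at most $|\widehat{\alpha}-\alpha^*|+\eps^2/2$, and the rational reconstruction from $\mathrm{ExactThreshold}$ (your hedges ``up to a constant'' and ``at most twice the displacement'' are unnecessary, since the final step equals the displacement plus $\eps^2/2$ exactly).
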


\begin{proof}
Correctness follows from monotonicity of $\tQ(i,\bx_{k,k',\alpha})$ on $\alpha$ and the same denominator/uniqueness
argument used for $\mathrm{ExactThreshold}$ in Appendix~\ref{app:exact_threshold}: $\alpha^*$ is a rational
with reduced denominator at most $1/\eps$, so any bracket of width $<\eps^2/2$ contains \emph{at most one}
such rational, and rational reconstruction returns $\alpha^*$ uniquely.

It remains to show that the algorithm maintains a valid bracket and to bound its length and query complexity.
Let $d:=|\widehat{\alpha}-\alpha^*|$.

% Bracketing
If $b=\texttt{True}$, then $\widehat{\alpha}\ge \alpha^*$ and the algorithm decreases $\mathit{upper}$ in steps of size
$\mathit{step}$ that double geometrically until either (i) the next candidate point falls below $0$ or (ii) the query at
$\mathit{upper}-\mathit{step}$ flips to $\texttt{False}$. In either case, by construction
$\tQ(i,\bx_{k,k',\mathit{upper}})=\texttt{True}$ and $\tQ(i,\bx_{k,k',\mathit{lower}})=\texttt{False}$, and monotonicity implies
$\alpha^*\in(\mathit{lower},\mathit{upper}]$.
The case $b=\texttt{False}$ is symmetric (searching to the right), giving us a bracket
$\tQ(i,\bx_{k,k',\mathit{lower}})=\texttt{False}$, $\tQ(i,\bx_{k,k',\mathit{upper}})=\texttt{True}$, and
$\alpha^*\in(\mathit{lower},\mathit{upper}]$.

Next, we derive the bracket width after the exponential phase.
Consider the case $b=\texttt{True}$ (the other case is identical). Let $\mathit{step}_0:=\eps^2/2$ be the initial step size and
suppose the \textbf{while} loop performs $t\ge 0$ successful iterations (i.e., the queried point remains accepting).
Then the step size at the end of the exponential phase is $\mathit{step}_t=2^t \times \mathit{step}_0$, and
\begin{equation*}
    \widehat{\alpha}-\mathit{upper} = \mathit{step}_0(1+2+\cdots+2^{t-1}) = (2^t-1) \times \mathit{step}_0.
\end{equation*}
Since $\mathit{upper}$ remains accepting throughout, we have $\mathit{upper}\ge \alpha^*$, and thus
$d=\widehat{\alpha}-\alpha^* \ge \widehat{\alpha}-\mathit{upper}=(2^t-1)\times \mathit{step}_0$.
Rearranging gives $2^t \times \mathit{step}_0 \le d+\mathit{step}_0$, i.e.,
\begin{equation*}
    \mathit{step}_t \le d+\eps^2/2.
\end{equation*}
At termination of the exponential phase, the algorithm sets $\mathit{lower}=\max\{0,\mathit{upper}-\mathit{step}_t\}$, so the
resulting bracket has width at most $\mathit{step}_t$ (and at most $\mathit{upper}$ if clipped at $0$). 
Thus, the bracket width
after the exponential phase is at most $d+\eps^2/2$.

Now, we prove the query complexity.
The exponential phase uses one query to evaluate $b$ plus one query per loop test, and the loop doubles $\mathit{step}$ each
successful iteration until $\mathit{step}$ reaches the scale of $d$ (or until clipped at an endpoint). Thus the number of queries
in the exponential phase is
\[
    \mathcal{O}\left(1+\log\left(1+\frac{d}{\eps^2}\right)\right).
\]
After this phase, the bracket width is at most $d+\eps^2/2$. The bisection phase reduces the width to
$<\eps^2/2$, requiring
\[
\mathcal{O}\left(\log\left(\frac{d+\eps^2/2}{\eps^2/2}\right)\right)
=
\mathcal{O}\left(\log\left(1+\frac{d}{\eps^2}\right)\right)
\]
additional membership queries. Rational reconstruction uses no membership queries. Summing the phases proves the lemma.
\end{proof}

Now, fix an agent $i\in N$ and a prediction $\hat{\bx}\in\Delta(S)$.

Note that $\LearnHyperplane[\hat{\bx}](i)$ is identical to $\LearnHyperplane(i)$ except that every invocation of
$\mathrm{ExactThreshold}(i,k,k')$ is replaced by $\mathrm{ExactThresholdPred}(i,k,k',\widehat{\alpha}_{k,k'}(\hat{\bx}))$, where
$\widehat{\alpha}_{k,k'}(\hat{\bx})$ is the pairwise projection coordinate defined in Section~\ref{subsec:lottery_pred}.
All other steps (including the construction of $\mathbf{c}_i$ from the recovered turning points) are unchanged.

We first show \textbf{correctness}.
$\LearnHyperplane(i)$'s correctness (Lemma~\ref{lem:learnhyperplane_correctness}) relies only on the fact that each threshold routine returns the exact turning point
$\alpha^*_{i;k,k'}$ on the queried edge. 
By Lemma~\ref{lem:exact-threshold-pred}, $\mathrm{ExactThresholdPred}$ returns the same turning point as $\mathrm{ExactThreshold}$ for every queried edge. 
Therefore $\LearnHyperplane[\hat{\bx}](i)$, computes the same values $\alpha^*_{i;k,k'}$ and returns exactly the same output (\textsc{AcceptAll},
\textsc{RejectAll}, or an equivalent halfspace vector $\mathbf{c}_i$) as $\LearnHyperplane(i)$.

Next, we prove the \textbf{query complexity}.
Moreover, as in $\LearnHyperplane$, $\LearnHyperplane[\hat{\bx}]$ first queries all $m$ pure lotteries $\mathbf{e}_1,\dots,\mathbf{e}_m$ once.
If it returns \textsc{AcceptAll} or \textsc{RejectAll} at this stage, the query complexity is exactly $m$.
Otherwise, it invokes the threshold routine on at most $|\mathcal{E}_i|\le m-1$ simplex edges. 
For each $(k,k')\in\mathcal{E}_i$, the warm start used is $\widehat{\alpha}_{k,k'}(\hat{\bx})$, so the distance to the true turning point is precisely $\delta_{i;k,k'}(\hat{\bx})$. 
Lemma~\ref{lem:exact-threshold-pred} therefore implies that the number of queries used on edge $(k,k')$ is $\mathcal{O}\left(1+\log\left(1+\delta_{i;k,k'}(\hat{\bx})/\eps^2\right)\right)$.
Summing over $(k,k')\in\mathcal{E}_i$ and adding the $m$ pure-lottery queries gives us
\begin{equation*}
    \mathcal{O}\left(m + \sum_{(k,k')\in\mathcal{E}_i}\log\left(1+\frac{\delta_{i;k,k'}(\hat{\bx})}{\eps^2}\right) \right).
\end{equation*}
Finally, since $|\mathcal{E}_i|\le m-1$ and $\delta_{i;k,k'}(\hat{\bx})\le \delta_i^{\max}(\hat{\bx})$ for every
$(k,k')\in\mathcal{E}_i$, we obtain the simplified bound
\[
\mathcal{O}\left(m + m\log\left(1+\frac{\delta_i^{\max}(\hat{\bx})}{\eps^2}\right)\right),
\]
completing the proof.

\subsection{Lottery Predictions: Why violator count (of the lottery prediction) is not a useful error model} \label{app:violate_count_not_useful}
A tempting way to quantify the quality of a predicted lottery $\hat{\bx}\in\Delta(S)$ is by its \emph{violator count}
\[
|V(\hat{\bx})| \quad\text{where}\quad V(\hat{\bx}) := \{i\in N : \tQ(i,\hat{\bx})=\texttt{False}\}.
\]
While this quantity is relevant for the trivial ``verify and stop'' use of the lottery advice, it is generally a weak error model for learning-augmented elicitation because it evaluates advice at a \emph{single} point and discards the geometric
information that determines how many queries are required to learn constraints.

In particular, $|V(\hat{\bx})|$ does not distinguish between (i) a prediction that lies deep inside many agents' acceptable
regions and (ii) a prediction that is accepted by many agents but sits arbitrarily close to their acceptance boundaries.
In case (ii), even if $|V(\hat{\bx})|$ is small, a subsequent candidate lottery produced by an elicitation algorithm (e.g., Algorithm~\ref{alg:cardinal_deterministic} or Algorithm~\ref{alg:cardinal_randomized}) after learning
a new constraint and recomputing $\bx\leftarrow\Select(C)$) can move by a small amount that flips many previously accepting
agents into rejecting. 
As a result, the total number of encountered violators (and hence the number of costly hyperplane learning steps) can still be large despite a small violator count at $\hat{\bx}$.

In contrast, the dominant cost inside $\LearnHyperplane$ is a collection of one-dimensional searches along simplex edges
$\bx_{k,k',\alpha}$, whose difficulty is governed by the corresponding turning points $\alpha^*_{i;k,k'}$.
Therefore, the algorithmically relevant prediction quality is inherently \emph{local to the edge}: the only useful information
that a full lottery $\hat{\bx}$ provides for queries restricted to the edge (the convex hull of $\mathbf{e}_k$ and $\mathbf{e}_{k'}$: $\mathrm{conv}\{\mathbf{e}_k,\mathbf{e}_{k'}\}$) is its pairwise
projection coordinate $\widehat{\alpha}_{k,k'}(\hat{\bx})$, and the right notion of prediction error is the turning point
projection error
\[
\delta_{i;k,k'}(\hat{\bx}) := \bigl|\widehat{\alpha}_{k,k'}(\hat{\bx})-\alpha^*_{i;k,k'}\bigr|.
\]
This quantity directly controls how quickly a warm-started threshold search can bracket $\alpha^*_{i;k,k'}$ before the
final bisection/reconstruction phase, giving us the smooth bounds in Theorem~\ref{thm:lottery-pred}.

\subsection{Proof of Proposition~\ref{prop:la_freq}}
Let $Q := 1/\eps \in \mathbb{Z}_{>0}$.
If $Q<4$, then $\delta/\eps^{2} \le 1/\eps^{2} = Q^{2} \le 9$, so $\log(1+\delta/\eps^{2})=\mathcal{O}(1)$.
On the other hand, on any feasible instance every correct algorithm must query each agent at least once (the proof of Lemma~\ref{lem:D1} applies with the advice $\hat \bx$ held fixed), so it makes at least $n=2$ queries.
Thus the claimed $\Omega(\log(1+\delta/\eps^{2}))$ lower bound holds after adjusting constants.
Henceforth, we assume $Q\ge 4$.

Define $M := \min\{\lfloor Q/2\rfloor, \lfloor \delta Q^{2} \rfloor+1 \}$.
Since $\delta\ge \eps^{2}$, we have $\delta Q^{2}\ge 1$, hence $\lfloor \delta Q^{2}\rfloor+1\ge 2$; also $\lfloor Q/2\rfloor\ge 2$ because $Q\ge 4$.
Therefore $M\ge 2$.

For each $t\in\{0,1,\dots,M-1\}$, set
\[
q_t := Q-t \quad \text{and} \quad
\alpha_t \;:=\; \frac{1}{q_t}\in(0,1).
\]
Note $1\le q_t\le Q$, so $q_t\eps\in\{0,\eps,2\eps,\dots,1\}$ and $\alpha_t$ is a rational with reduced denominator $q_t\le Q$.

Define the prediction coordinate as the midpoint
\[
\hat{\alpha} := \frac{1}{2}\bigl(\alpha_0+\alpha_{M-1}\bigr)
\quad\text{and}\quad
\hat{\bx} := \bx_{1,2,\hat{\alpha}}\in \Delta(S).
\]

We claim that for every $t\in\{0,\dots,M-1\}$,
\begin{equation}\label{eq:alpha-close}
    |\alpha_t-\hat{\alpha}| \le \delta.
\end{equation}

Indeed, $\alpha_t$ is increasing in $t$, so
\[
|\alpha_t-\hat{\alpha}| \le
\frac{\alpha_{M-1}-\alpha_0}{2}
=
\frac12\left(\frac{1}{Q-(M-1)}-\frac{1}{Q}\right)
=
\frac{M-1}{2Q (Q-(M-1))}.
\]
Because $M\le \lfloor Q/2\rfloor$, we have $Q-(M-1)\ge Q/2$, thus
\[
|\alpha_t-\hat{\alpha}| \le
\frac{M-1}{Q^{2}}.
\]
Finally, since $M-1 \le \lfloor \delta Q^{2}\rfloor \le \delta Q^{2}$ (because $M-1=\min\{\lfloor Q/2\rfloor-1,\lfloor \delta Q^{2}\rfloor\}$), we obtain
$|\alpha_t-\hat{\alpha}|\le \delta$, proving~\eqref{eq:alpha-close}.

For each $t\in\{0,\dots,M-1\}$, define an instance $\I_t$ with $n=2$ agents and $m=2$ alternatives $S=\{s_1,s_2\}$ as follows:
\begin{itemize}
\item Agent $2$ has $u_2(s_1)=0$, $u_2(s_2)=q_t\eps$, and $\tau_2=\eps$.
\item Agent $1$ has $u_1(s_1)=q_t\eps$, $u_1(s_2)=0$, and $\tau_1=(q_t-1)\eps$.
\end{itemize}
All utilities and thresholds are integer multiples of $\eps$ in $[0,1]$, and $\tau_1>0$ since $q_t\ge Q-(M-1)\ge Q/2\ge 2$.

For any $\alpha\in[0,1]$, write $\bx_{1,2,\alpha}=(1-\alpha) \mathbf{e}_1+\alpha \mathbf{e}_2$.
Then
\[
    U_2(\bx_{1,2,\alpha}) = \alpha\cdot q_t\eps,
    \quad\text{so}\quad
    U_2(\bx_{1,2,\alpha})\ge \tau_2
    \iff
    \alpha\ge \frac{1}{q_t}=\alpha_t,
\]
and
\[
U_1(\bx_{1,2,\alpha}) = (1-\alpha)\cdot q_t\eps,
\quad\text{so}\quad
U_1(\bx_{1,2,\alpha})\ge \tau_1
\iff
\alpha\le \frac{1}{q_t}=\alpha_t.
\]
Thus, the unanimously acceptable set in $\I_t$ is exactly the singleton $\{\bx_{1,2,\alpha_t}\}$; denote this unique feasible lottery by $\bx_t^*$.

Moreover, in each $\I_t$ we have $\tQ(2,\mathbf{e}_1)=\texttt{False}$ and $\tQ(2,\mathbf{e}_2)=\texttt{True}$, so the turning point for agent $2$ on the edge $(\mathbf{e}_1,\mathbf{e}_2)$ is $\alpha^*_{2;1,2}=\alpha_t$ (by the definition in \Cref{subsec:lottery_pred}).
Since $m=2$, the pairwise projection coordinate satisfies
$\alpha_{b1,2}(\widehat{\bx})=\widehat{x}_2=\widehat{\alpha}$, and thus,
\[
\delta_{2;1,2}(\widehat{\bx})
=
|\alpha_{b1,2}(\widehat{\bx})-\alpha^*_{2;1,2}|
=
|\widehat{\alpha}-\alpha_t|
\le \delta,
\]
where $\alpha^*_{2;1,2}=\alpha_t$ in instance $\I_t$.

Now, fix any correct deterministic decision tree $T$, and let $Q_T(\I)$ be the number of queries made by $T$ on instance $\I$ (with the fixed advice $\hat{\bx}$).
Each $\I_t$ is feasible and has the unique correct output $\bx_t^*=\bx_{1,2,\alpha_t}$, and these outputs are distinct over $t$.
Therefore the instances $\I_0,\dots,\I_{M-1}$ reach $M$ distinct leaves of $T$.
Let $\mu$ be the uniform distribution over $\{\I_0,\dots,\I_{M-1}\}$; since the mapping $\I_t\mapsto$ reached leaf is injective, the induced distribution over these $M$ leaves is uniform.
By the average-depth bound (Lemma~\ref{lem:D4}),
\begin{equation}\label{eq:avg-depth}
\mathbb{E}_{I\sim \mu}\bigl[Q_T(\I)\bigr] \;\ge\; \log M.
\end{equation}
In particular, $\max_t Q_T(\I_t)\ge \log M$.

Now consider randomized algorithms.
Apply Lemma~\ref{lem:D3} to the distribution $\mu$ above.
Since~\eqref{eq:avg-depth} shows that $\mathbb{E}_{\I\sim\mu}[Q_T(\I)]\ge \log M$ for every correct deterministic decision tree $T$,
Lemma~\ref{lem:D3} gives us
\[
\inf_{\mathcal{A}}  \sup_\I Q_{\mathcal{A}}(\I) \ge \log M,
\]
where the infimum ranges over correct randomized algorithms.
In particular, every correct randomized algorithm $\mathcal{A}$ satisfies $\sup_\I Q_{\mathcal{A}}(\I)\ge \log M$.

Let $\lambda := \delta Q^2 = \delta/\eps^2 \ge 1$. Recall
$M=\min\{\lfloor Q/2\rfloor,\ \lfloor \lambda\rfloor+1\}$.
If $M=\lfloor \lambda\rfloor+1$, then $M\ge 2$ and $M\ge \lambda$, so
$M\ge \max\{2,\lambda\}$. For $\lambda\in[1,2)$ we have
$\log M \ge \log 2 \ge (\log 2/\log 3)\,\log(1+\lambda)$, while for $\lambda\ge 2$,
\[
\log M \ge \log\lambda \ge \tfrac12 \log(1+\lambda),
\]
since $\log(1+\lambda)\le \log(2\lambda)=\log 2 + \log\lambda \le 2\log\lambda$.
Thus in this case $\log M = \Omega(\log(1+\lambda))$.

If $M=\lfloor Q/2\rfloor$, then $M\ge Q/4$ (since $Q\ge 4$), so $\log M=\Omega(\log Q)$.
Moreover $\lambda\le Q^2$ (as $\delta\le 1$) implies $\log(1+\lambda)\le \log(1+Q^2)=O(\log Q)$,
and hence again $\log M=\Omega(\log(1+\lambda))$.

Therefore $\log M = \Omega(\log(1+\lambda))=\Omega(\log(1+\delta/\eps^2))$.

\end{document}